\theoremstyle{plain}
\newtheorem{theo}{Theorem}
\newtheorem{lem}{Lemma}
\newtheorem{cor}{Corollary}
\theoremstyle{definition}
\newtheorem{defi}{Definition}
\theoremstyle{remark}
\newtheorem{remark}{Remark}
\newtheorem{ex}{Example}
\newacronym{CSI}{CSI}{\textbf{C}hannel \textbf{S}tate \textbf{I}n\-for\-ma\-tion}
\newacronym{CDF}{CDF}{\textbf{C}umulative \textbf{D}istribution \textbf{F}unction}
\newacronym{DMS}{DMS}{\textbf{D}iscrete \textbf{M}emoryless \textbf{S}ource}
\newacronym{DMC}{DMC}{\textbf{D}iscrete \textbf{M}emoryless \textbf{C}hannel}
\newacronym{DOF}{DoF}{\textbf{D}egrees \textbf{o}f \textbf{F}reedom}
\newacronym{IA}{IA}{\textbf{I}nterference \textbf{A}lignment}
\newacronym{IC}{IC}{\textbf{I}nterference \textbf{C}hannel}
\newacronym{MAC}{MAC}{\textbf{M}ultiple \textbf{A}ccess \textbf{C}hannel}
\newacronym{MIMO}{MIMO}{\textbf{M}ultiple \textbf{I}nput \textbf{M}ultiple \textbf{O}utput}
\newacronym{RV}{RV}{\textbf{R}andom \textbf{V}ariable}
\newacronym{SNR}{SNR}{\textbf{S}ignal to \textbf{N}oise \textbf{R}atio}
\newacronym{AVWC}{AVWC}{\textbf{A}rbitrarily \textbf{V}arying \textbf{W}iretap \textbf{C}hannel}
\newacronym{AVWC-CJ}{AVWC-CJ}{\textbf{A}rbitrarily \textbf{V}arying \textbf{W}iretap \textbf{C}hannel with \textbf{C}hannel input non-causally known at the \textbf{J}ammer}
\newacronym{CAVWC}{CAVWC}{\textbf{C}ompound-\textbf{A}rbitrarily \textbf{V}arying \textbf{W}iretap \textbf{C}hannel}
\newacronym{AWGN}{AWGN}{\textbf{A}dditive \textbf{W}hite \textbf{G}aussian \textbf{N}oise}
\newacronym{AVC}{AVC}{\textbf{A}rbitrarily \textbf{V}arying \textbf{C}hannel}
\newacronym{AVMAC}{AVMAC}{\textbf{A}rbitrarily \textbf{V}arying \textbf{M}ultiple \textbf{A}ccess \textbf{C}hannel}
\newacronym{AVBC}{AVBC}{\textbf{A}rbitrarily \textbf{V}arying \textbf{B}roadcast \textbf{C}hannel}
\newacronym{AVRC}{AVRC}{\textbf{A}rbitrarily \textbf{V}arying \textbf{R}elay \textbf{C}hannel}
\newacronym{AVQC}{AVQC}{\textbf{A}rbitrarily \textbf{V}arying \textbf{Q}uantum \textbf{C}hannel}
\newacronym{AVCQC}{AVCQC}{\textbf{A}rbitrarily \textbf{V}arying \textbf{C}lassical-\textbf{Q}uantum \textbf{C}hannel}
\newacronym{AVC-CJ}{AVC-CJ}{\textbf{A}rbitrarily \textbf{V}arying \textbf{C}hannel with \textbf{C}hannel input non-causally known at the \textbf{J}ammer}
\newacronym{RT}{RT}{\textbf{R}obustification \textbf{T}echnique}
\newacronym{ET}{ET}{\textbf{E}limination \textbf{T}echnique}
\newacronym{WTC}{WTC}{\textbf{W}ire\textbf{t}ap \textbf{C}hannel}
\newacronym{CWC}{CWC}{\textbf{C}ompound \textbf{W}iretap \textbf{C}hannel}
\newacronym{SREX}{SREX}{\textbf{S}ecure \textbf{R}emote \textbf{Ex}ecution}
\newacronym{BSC}{BSC}{\textbf{B}inary \textbf{S}ymmetric \textbf{C}hannel}
\newacronym{SA}{SA}{\textbf{S}uper-\textbf{A}ctivation}
\newacronym{CC}{CC}{\textbf{C}ompound \textbf{C}hannel}
\newacronym{IID}{i.i.d.}{\textbf{i}ndependent and \textbf{i}dentically \textbf{d}istributed}
\newacronym{KKT}{KKT}{\textbf{K}arush \textbf{K}uhn \textbf{T}ucker}
\newacronym{JE}{JE}{\textbf{J}oint \textbf{E}ncoding}
\newacronym{BBT}{BBT}{\textbf{B}reiman, \textbf{B}lackwell and \textbf{T}homasian}
\newacronym{CR}{CR}{\textbf{C}ommon \textbf{R}andomness}
\newcommand{\CSI}{\gls{CSI}}
\newcommand{\DMC}{\gls{DMC}}
\newcommand{\RV}{\gls{RV}}
\newcommand{\AVWC}{\gls{AVWC}}
\newcommand{\AVC}{\gls{AVC}}
\newcommand{\AVMAC}{\gls{AVMAC}}
\newcommand{\AVBC}{\gls{AVBC}}
\newcommand{\AVRC}{\gls{AVRC}}
\newcommand{\AVCQC}{\gls{AVCQC}}
\newcommand{\RT}{\gls{RT}}
\newcommand{\ET}{\gls{ET}}
\newcommand{\WTC}{\gls{WTC}}
\newcommand{\CWC}{\gls{CWC}}
\newcommand{\SA}{\gls{SA}}
\newcommand{\CC}{\gls{CC}}
\newcommand{\IID}{\gls{IID}}
\newcommand{\CR}{\gls{CR}}
\newcommand{\DMCs}{\glspl{DMC}}
\newcommand{\RVs}{\glspl{RV}}
\newcommand{\AVWCs}{\glspl{AVWC}}
\newcommand{\AVCs}{\glspl{AVC}}
\newcommand{\AVBCs}{\glspl{AVBC}}
\newcommand{\AVQCs}{\glspl{AVQC}}
\newcommand{\CWCs}{\glspl{CWC}}
\newcommand{\overbar}[1]{\mkern 1.5mu\overline{\mkern-1.5mu#1\mkern-1.5mu}\mkern 1.5mu}
\begin{document}
%

\title{Arbitrarily Varying Wiretap Channels with Non-Causal Side Information at the Jammer}


\author{\IEEEauthorblockN{Carsten Rudolf Janda\IEEEauthorrefmark{1}, Moritz Wiese\IEEEauthorrefmark{2}, Eduard A. Jorswieck\IEEEauthorrefmark{1} and Holger Boche\IEEEauthorrefmark{2}}\\
\IEEEauthorblockA{\IEEEauthorrefmark{1}Institute of Communications Technology,\\Department of Information Theory and Communication Systems, TU Braunschweig, Lower Saxony, Germany\\
\{janda, jorswieck\}@ifn.ing.tu-bs.de}\\
\IEEEauthorblockA{\IEEEauthorrefmark{2}Chair of Theoretical Information Technology,
Munich University of Technology
M\"{u}nchen, Bavaria, Germany\\
\{wiese,boche\}@tum.de}
\thanks{This work is partly funded by the german research foundation (DFG) within the project Play Scate (DFG  JO 801/21-1).}
}

%


\maketitle

\begin{abstract}
Secure communication in a potentially malicious environment becomes more and more important. 
The \AVWC{} provides information theoretical bounds on how much information can be exchanged even in the presence of an active attacker. 
If the active attacker has non-causal side information, situations in which a legitimate communication system has been hacked, can be modeled.
 
We investigate the \AVWC{} with non-causal side information at the jammer for the case that there exists a best channel to the eavesdropper.  
Non-causal side information means that the transmitted codeword is known to an active adversary before it is transmitted. 
By considering the maximum error criterion, we allow also messages to be known at the jammer before the corresponding codeword is transmitted. 
A single letter formula for the \CR{} secrecy capacity is derived.
Additionally, we provide a single letter formula for the \CR{} secrecy capacity, for the cases that the channel to the eavesdropper is strongly degraded, strongly noisier, or strongly less capable with respect to the main channel. 
Furthermore, we compare our results to the random code secrecy capacity for the cases of maximum error criterion but without non-causal side information at the jammer, maximum error criterion with non-causal side information of the messages at the jammer, and the case of average error criterion without non-causal side information at the jammer.
\end{abstract}

\begin{IEEEkeywords}Active Eavesdroppers, AVWC, Non-causal side information at the Jammer, Maximum Error Probability, Physical Layer Secrecy. \end{IEEEkeywords}
%
\IEEEpeerreviewmaketitle
\section{Introduction}
Secrecy in an adversarial environment is an essential requirement in modern communication systems.  
It was Wyner, \cite{Wyner1975}, who  considered secure communications over noisy channels and introduced the \WTC . 
Later, his work was extended by \cite{Csiszar1978} to the broadcast channel with confidential messages, and in \cite{Cheong1978} to the Gaussian \WTC .
In \cite{Ozarow1984}, Ozarow et. al introduced the wiretap channel of type II \footnote{Essentially, the eavesdropper is able to perfectly receive a fraction of the transmitted codeword. In contrast to a "random" erasure channel, here the eavesdropper can choose the exact symbols he wants to obtain.}. 
The secrecy metrics in the aforementioned works are considered "weak". 
There exist other secrecy metrics, such as strong secrecy, or semantic secrecy. 
In \cite{Goldfeld2016a}, the authors investigated wiretap channels of type I and type II. 
They provided achievable semantic secrecy rates for the \WTC{} of type I, and gave a single letter formula for the semantic secrecy capacity for the \WTC{} of type II. 
In \cite{Nafea2018}, the authors presented a generalized \WTC{} model. 
This model consists of a mixture of the \WTC{} of type I and II. 
During one fraction of the transmission of one codeword, the eavesdropping channel behaves like a \WTC{} of type I, in the remaining time instances it behaves like a \WTC{} of type II. 
For this model, \cite{Nafea2018} contributed a single letter secrecy capacity formula under the strong secrecy criterion.
The previous works combat a passive eavesdropper by cleverly taking the physical properties of the transmit medium into account and come up with a coding strategy which can guarantee information theoretic security, confidentiality, and reliable communication at the same time. 
%

\subsection{\acrfullpl{AVC}}
By introducing channel states, active adversaries who can arbitrarily modify the channel state can be modeled by the \AVC . 
For the \AVC , different code concepts are introduced in \cite{Blackwell1960}.  
In \cite{Ahlswede1970a}, the existence of "weak" capacities for \AVCs{} is investigated. 
A channel capacity is called a weak capacity, if the channel coding theorem contains a weak converse \footnote{All converses based on Fano's inequality are weak. The weak converse states that when using transmission rates above the channel capacity, the error probability is bounded away from 0. In contrast to that, the strong converse states that, when using transmission rates above the channel capacity, the error probability approaches 1 (exponentially fast).}.
The existence of a weak capacity for the deterministic code capacity under the maximum error for an \AVC{} is connected to Shannon's zero error capacity for an \DMC , \cite{Shannon1956}.  
In \cite{Ahlswede1973}, the \AVC{} with a noiseless feedback channel is considered.
Using a method from a coding theorem for the \DMC{} with feedback, \cite{Ahlswede1973a}, which is not based on random coding or maximal coding ideas, a coding theorem for the \AVC{} with feedback and a strong converse is presented. 
As an additional result, the zero error capacity formula for the \DMC{} with feedback is provided. 
In \cite{Ahlswede1978}, the discussion of \cite{Blackwell1960} is extended for different error criteria. 
It can be shown that the random code capacity of an \AVC{} under the average error criterion equals its random code capacity under the maximum error criterion. 
Even though \cite{Ahlswede1978} provides a necessary and sufficient condition for the deterministic code capacity under the maximum error criterion to be positive, the question about the exact formula remains an open problem.
In \cite{Csiszr1981},  the discussion on the maximum error criterion for \AVCs{} is extended. 
A deterministic code capacity formula for a class of \AVCs{} for which an additional type property holds is presented. 
In \cite{Jahn1981}, the deterministic code capacity region of the \AVMAC{} is derived under the condition that the interior of that region is non-empty.
Both the average and the maximum error criteria are considered. 
Furthermore, the  achievable rate regions for deterministic codes for the general and the degraded \AVBCs{} are provided under the conditions that these regions have non-empty interior. 
In \cite{Jahn1981}, the author pointed out, that the problem to determine whether those capacity regions possess empty interiors was still open at that point. It is solved later by \cite{Gubner1990}. 
Further, an achievable rate region for the general broadcast channel is provided.   
In \cite{Ericson1985}, random codes for the \AVC{} with limited amount of \CR{} are studied. 
The author limited the amount of \CR{} to increase only exponentially with respect to the block length. 
Furthermore, an exponential error bound is considered. 
Additionally, the author provided a sufficient condition for when the deterministic code capacity is zero. 
This condition is called symmetrizability. 
The author proved that if the symmetrizability condition is fulfilled, the (average) error probability is bounded away from zero, and is lower bounded by $\frac{1}{4}$. 
In \cite{Gubner1990}, the \AVMAC{} is investigated. 
Specifically, the author considered deterministic codes and extended the symmetrizability condition to the multi user scenario. 
Based on this multi user symmetrizability, a condition is derived, for when the deterministic code capacity region of the \AVMAC{} has a non-empty interior, such that both transmitters can communicate reliably. 
In doing so, the author solved one open problem of \cite{Jahn1981}, which had left the question whether those capacity regions possess empty interiors unanswered. 
In \cite{Ahlswede1991}, it is proved that for an \AVC{} every rate below the random code capacity is achievable with deterministic list codes of constant list size, if the average error criterion is used. 
The authors presented two different proofs, one based on the \ET , the other based on an adaptation of the \RT .
In \cite{Hughes1997}, the deterministic list code capacity of an \AVC{} is studied. 
The author presented a bound, called symmetrizability ({\cite[Definition 3]{Hughes1997}}),  on the smallest list size, for which the deterministic list code capacity equals the random code capacity. 
Below this bound the deterministic list code capacity equals zero.
In \cite{Gohari2008}, upper bounds on the admissible source region of the general \AVBC{} with arbitrarily correlated sources are investigated, using \CR{} assisted codes and the average error criterion. 
The capacity region of the general \AVBC{} relates to the admissible source region in the way that it is a set of rates for which an admissible source (messages) exists \footnote{The term admissible source region might be confusing at first, but it is nothing else, than computing the maximum rate at which the error probability vanishes. The connection gets clearer when remembering the connection $|\mathcal{J}|=\exp\{\lfloor nR \rfloor\}$. Hence, when $|\mathcal{J}|\leq\exp\{\lfloor nC \rfloor\}$, the source is called admissible.}.
When specializing to the case of independent sources and no channel variation it is shown that the presented outer bound is included in the outer bound of \cite{Liang2008}.
In \cite{Wyrembelski2009a}, random and deterministic coding strategies for a bidirectional \AVRC , consisting of an \AVMAC{} and bidirectional \AVBC{} phase, are investigated .
For the multiple access and the broadcast phases, the authors gave descriptions of the random and deterministic code capacity (if the interiors are non-empty). 
Their proof is based on the \RT{} and \ET{} by Ahlswede.    
In \cite{Wyrembelski2009}, the same set of authors extended their work in \cite{Wyrembelski2009a}, to derive necessary conditions for which the interior of the deterministic code capacity region of the bidirectional \AVBC{} is non-empty.   
In \cite{Nitinawarat2013}, the deterministic code capacity region of an \AVMAC{} under list decoding is considered  and the results of \cite{Jahn1981}, using a similar approach as in \cite{Gubner1990}, adapted to list decoding, are extended. 
The author was able to show that the capacity region using list codes with list sizes $L$ equals the random code capacity region if the interior of the capacity region using list codes is non-empty. 
He then proved  list size symmetrizability conditions for when the capacity region using list codes for the \AVMAC{} is empty, and for when the capacity region using list codes equals the random code capacity region.
In \cite{Schaefer2014}, the bidirectional \AVBC{} is investigated and the question how much randomness is sufficient and how much coordination between nodes is necessary to guaranty reliable communication is considered. 
Also weaker forms of \CR{} are considered, specifically correlated randomness, causal correlated randomness and no randomness at all. 
It is shown that the capacity regions of the investigated cases for the bidirectional \AVBC{} are subsets of each other, and derived symmetrizability conditions, for when the deterministic code capacity region equals the random code capacity region and for when the deterministic code capacity region has an empty interior, respectively. 
Furthermore, it is shown that as long as the correlated randomness at the relay and the other nodes is indeed correlated (and the nodes do not obtain independent observations), the correlated code capacity region equals the random code capacity region.
In \cite{Boche2014}, the continuity behavior of the randomness assisted and deterministic code capacities for \AVQCs{} is studied. 
While the randomness assisted code capacity is indeed continuous, the deterministic code capacity exhibits discontinuities. 
The authors considered furthermore the effect of limited \CR{} and finite block lengths with respect to the decoding error.
In \cite{Cai2016}, bipartite graphs are used  to prove necessary and sufficient conditions for the \AVMAC{} list code capacity to have a non-empty interior. 
Further, the auhtor extended the work of \cite{Nitinawarat2013}, and proved that the minimum list size is finite if and only if the correlated code capacity region has a non-empty interior.
%

\subsection{\acrfullpl{AVC} with Side Information}
In the literature, also different cases of side information at the transmitter and/or the jammer have been considered. 
In the following we want to provide a short overview. 
In \cite{Kiefer1962}, deterministic codes for the \AVC{} under different \CSI{} cases are investigated. 
Necessary and sufficient conditions are provided for positive deterministic code transmission rate for cases of no \CSI , \CSI R, \CSI T, and perfect \CSI .  
Additionally, the authors determined for the latter case the deterministic code capacity.
In \cite{Ahlswede1969}, different code classes and average and maximum error criteria for different \CSI T/\CSI R and side information at the jammer for the \AVC{} are considered. 
The authors showed the equivalence of certain cases, where the jammer randomizes (arbitrary or in an \IID{} manner) over the state space or uses a deterministic jamming strategy, and where the communication partners possess different \CSI . 
Furthermore, the random code capacity of the \AVC{} is provided and the authors showed that for different \CSI{} and error cases the capacity equals the random code capacity. 
In \cite{Ahlswede1970b}, the deterministic code capacity of the \AVC , where the channel output alphabet is binary, is determined. 
Additionally, the cases of \CSI T and \CSI R are investigated and the capacities for these cases are provided as well.

In \cite{Ahlswede1986}, the  deterministic code capacities under both the average and the maximum error criterion are derived, under the condition that the entire state sequence is non-causally known at the transmitter, while the jammer and the receiver have no further side information. 
The author used the \RT{} and \ET{} to derive the deterministic code capacity. 
This means, he started by proving a coding theorem for the \CC .
Hence, there exist codes for the \CC{} with exponentially vanishing error probability. 
Then, via permutations (\RT ) on the code for the \CC , he obtained a random code for the \AVC{} with slightly higher error probability, which is still exponentially vanishing. 
From this random code, he chose a subset of codes (\ET ). 
For this subset of codes the error probability vanishes linearly, instead of exponentially.  
If the deterministic code capacity is greater than zero, a prefix code can be concatenated with the smaller random code, to indicate which codebook realization is used during the transmission. 
If the amount of messages for this prefix code grows subexponentially (e.g. $n^2$), then the length of this prefix code grows sublinearly. 
Hence, the amount of codeword symbols  of the prefix code in the concatenated code vanishes and the deterministic code capacity equals the random code capacity. 
In \cite{Ahlswede1991a}, the \AVC{} theory is applied to computer memory and capacity formulas for different \CSI{} cases are derived.
In \cite{Ahlswede1999}, the deterministic code capacity region under the average error criterion for cooperating transmitters in an \AVMAC{} is described . 
Further, the authors provided a condition, when the achievable rate region has a non-empty interior.
In \cite{Winshtok2006}, the degraded \AVBC{} with non-causal \CSI T, full \CSI R at the stronger receiver, and statistical \CSI R at the weaker receiver is studied. 
The authors presented three main results: 
First, the single user deterministic code capacity under maximum error criterion of the degraded user is greater than zero if and only if the separation lemma in \cite{Ahlswede1978} holds with respect to the channel to the degraded receiver. 
Second, the deterministic code capacity region under the maximum error criterion equals the intersection of all capacity regions with respect to the jamming distributions if the single user deterministic code capacity under maximum error criterion of the degraded user is greater than zero. 
Otherwise it corresponds to the single user rate of the stronger receiver. 
Third, the capacity regions using deterministic, random or correlated codes under the average or maximums error criteria are equivalent. 
In \cite{Sarwate2007}, an \AVC{} is considered, where the jammer has non-causal access to the channel input and the message. 
Since the message is known non-causally at the jammer, the considered error probability has to be the maximum error probability. 
The authors used a list-code under the maximum error criterion approach to prove the random code capacity for this model.
In \cite{Cai2010}, the authors investigated  the \AVC{} with non-causal side information at the jammer. 
Furthermore, the authors imposed peak input and state constraints and derived the \CR{} assisted code capacity under the average and the maximum error criteria and compared these results. 
They limited the amount of \CR{}, that is needed to achieve the capacity, and stated that non-causal knowledge of the channel input at the jammer is leads to lower secrecy capacity than non-causal knowledge of the messages.
In \cite{Sarwate2010}, the situation of "nosy noise" where the channel input is perfectly known at the jammer, \cite{Sarwate2007}, is generalized to a "myopic adversary", where a jammer has a noisy version of the channel input as side information. 
Furthermore, a random code capacity formula under the maximum error criterion is derived.
In \cite{Wiese2013}, the random and deterministic code capacity regions for the \AVMAC{} with cooperating encoders is derived. 
Furthermore, the authors provide symmetrizability conditions for when the deterministic code capacity region for the \AVMAC{} with collaborating encoders has non-empty interior.
In \cite{Dey2013}, a variation of the \AVC{} is investigated. 
In this model, the attacker has causal knowledge of the channel input and can change a fraction of the codeword. 
The authors provided upper and lower bounds on the deterministic code capacity under the average and the maximum error criteria.
In \cite{Boche2014c}, the \AVMAC{} with cooperating encoders is studied, and the work of \cite{Wiese2013} extended. 
In contrast to \cite{Wiese2013}, here list codes are used and the deterministic list code capacity region is derived, which equals its random code capacity region if it is the channel is not list-symmetrizable. 
Otherwise the deterministic list code capacity region has empty interior.
In \cite{Schaefer2014a}, the deterministic and random code capacity regions under the average error criterion for the \AVBC{} with side information at the receiver are derived . 
Additionally (and as a counterpart to \cite{Boche2014c} and as an extension of \cite{Schaefer2014}), the authors considered the deterministic list code capacity and were able to prove a similar behavior as for deterministic list codes for the \AVC{} or the \AVMAC : the deterministic list code capacity either equals the random code capacity or has an empty interior if list size symmetrizability conditions are not fulfilled.   
In \cite{Pereg2017}, the degraded \AVBC{} with causal \CSI T is investigated.
For the random and deterministic code capacity regions, lower and upper bounds are derived, and the capacity regions for a class of channels, fulfilling the condition that there exists a jamming strategy which minimizes the mutual information terms between transmitter and the two receivers simultaneously, is derived. 
Here, the authors explicitly did not consider independent states for the individual channels. 
Furthermore, they provided the example of a binary symmetric \AVBC{} and presented for this example the capacity region. 
In \cite{Budkuley2017}, a version of the \AVC{} is considered, where the jammer and the transmitter have non-causal knowledge about the messages and the channel state (here not controlled by the jammer). 
Based on this knowledge the jammer can adopt its jamming signal, while the transmitter uses Gel'fand Pinsker or dirty paper coding to optimize the random code capacity under the maximum error criterion. 
For the dirty paper \AVC{} it was shown, that a memoryless Gaussian jamming strategy is the jammer's optimal choice. 
In \cite{Boche2019a}, an \AVCQC{} is investigated, where the jammer has side information about the channel input or both the channel input and the message. 
The authors determined the random code capacity for both average and maximum error criteria, and established a strong converse. 
Furthermore, all derived capacities are equal, the additional knowledge of the message does not decrease the capacity further.
In \cite{Beemer2020}, the authentication problem in the presence of an myopic adversary is considered. 
Equivalent to the symmetrizability condition for deterministic code for message transmission, the authors introduced the U-overwritability, and have shown that the authentication capacity either equals the authentication capacity without adversary, or equals zero if the channel is U-overwritable. 
%

\subsection{\acrfullpl{AVC} with Constraints}
Various works have  considered input and state constraints. 
We would like to give a brief overview. 
In \cite{Ahlswede1971}, the existence of channel capacities for the Gaussian \AVC{}(G\AVC{}) is proved. 
The author considered amplitude and average power constraints, as well as feedback, and provided explicit formulas for the capacities.
In \cite{Hughes1987}, a G\AVC{} under peak and average power constraints is considered . 
The authors were able to derive a random code capacity formula for the case of peak input and peak state constraints. 
In the cases of average constraints (on either input or states), the authors derived $\epsilon$-capacities for random codes.
In \cite{Csiszar1988a}, the \AVC{} with peak and average constraints on the channel input and the channel states is investigated. 
The authors have shown that for peak constraints the random code capacity exists. 
On the other hand, for any case of average constraints, only $\epsilon$-capacities have been proven to exist.
In \cite{Csiszar1988},  the \AVC{} with peak constraints is considered. 
The authors introduced a "cost"-function and have shown that if the jammer is not able to symmetrize the channel because of his state peak constraint, the deterministic code capacity might be positive, but less than the random code capacity. 
Furthermore, the authors proved that the symmetrizability condition from \cite{Ericson1985}, is not only sufficient but also necessary for the deterministic code capacity of an \AVC{} to be zero.
In \cite{Csiszar1991}, a Gaussian \AVC{} is investigated. 
The authors proved a deterministic code capacity for the case of peak input and peak jamming power constraints. 
In the case, where the peak input constraint is more stringent than the peak jamming constraint, the deterministic code capacity equals zero. 
This behavior is equivalent to the symmetrizability condition for finite \AVCs .
In \cite{Gubner1991a}, a discrete \AVMAC{} with state constraints is studied . 
In case of state constraints, the deterministic code capacity region might possess a non-empty interior, even if the channel is symmetrizable. 
Furthermore, the author provided a new weak converse under state constraints.
In \cite{Gubner1992}, the deterministic code capacity region for an additive \AVMAC{} under state constraint is provides.
In this scenario, the capacity region is a $45$ degree triangle and can be described by single letter expressions.
In \cite{Gubner}, convexity properties of the  \AVMAC{} with constraints are considered. 
The authors showed that the capacity region of independent stochastic encoders is not convex, in general. 
In \cite{Bross2003}, the single user Poisson \AVC{} and the two user Poisson \AVMAC , both under peak and average input and state constraints are studied. 
For both scenarios the authors derived the deterministic code capacity/capacity region under the average error criterion. 
They explicitly specified the decoders for each model, attaining the capacity/capacity region.
In \cite{Hof2006}, the discrete two user general \AVBC{} is studied. 
Based on \cite{Ericson1985}, the authors defined symmterizability conditions for the two user general \AVBC{} for when the interior of the deterministic code achievable rate region with and without state and input constraints is non-empty. 
They further considered achievable rate regions for degraded message sets.
In \cite{Wyrembelski2010}, a bidirectional \AVBC{} with peak input and state constraints is investigated. 
For this model, the authors derived the random and deterministic code capacity regions, and provided a symmetrizability and cost condition for the deterministic code capacity to have empty interior, based on \cite{Csiszar1988}.
In \cite{Sarwate2012}, list decoding for \AVCs{} under state constraints is considered . 
The authors have shown that rates (up to $\epsilon$ close) for random codes for the \AVCs{} with informed jammer can be achieved with small list size (of order $\mathcal{O} (\frac{1}{\epsilon})$). 
Furthermore, upper and lower bounds on the list-code capacity under the average error criterion with lists of size $L$ are provided.
In \cite{Sarwate2012a}, the author extended his work, \cite{Sarwate2007,Sarwate2010}, to the Gaussian case. 
Here, the jammer obtains a noisy version of the channel input and can choose his jamming signal, based on what he observed. 
Meanwhile, the transmitter and the jammer have peak power constraints.
In \cite{Mirmohseni2014}, two different attack strategies for the \AVC , while imposing a distortion constraint at the jammer, are studied. 
For the first attack strategy (memoryless), the authors derived a single letter capacity. 
For the second (foreseer), where the adversary has non-causal knowledge of the codeword, the authors differentiated between erasing and substituting attacks. 
For both, the authors gave lower and upper bounds on the capacity.
In \cite{Sarwate2018}, an \AVC{} with myopic adversary, who is subject to a quadratic state constraint is considered. 
For a specific range of noise-to-signal-ratios (NSR), the authors were able to characterize the deterministic code capacity. 
For the remaining region, they limited the amount of \CR{}. 
Furthermore, they introduced two new proof techniques, a myopic list-decoding result for the achievability, and a Plotkin-type push attack for the converse.
In \cite{Hosseinigoki2018}, the Gaussian \AVC{} under peak constraints using list decoding is investigated. 
The authors presented  a single letter formula for the deterministic list code capacity and showed that if the list size is smaller than the ratio of the transmit and jamming power, the capacity equals zero.
In \cite{Pereg2019}, the \AVC{} under peak and average input and state constraint with causal and non-causal \CSI T is studied. 
For the causal \CSI T case, the authors derived a lower bound on the deterministic code capacity for an message average input constraint, an lower and upper bounds on the random code capacity, which match if there are only constraints on the states but not on the input.
For the latter case, the authors provided a generalized symmetrizability condition for which the deterministic code capacity equals the random code capacity. 
For non-causal \CSI T, the random code capacity with constraints imposed on the states was derived, and again a condition was provided under which the deterministic code capacity equals the random code capacity.
\subsection{\acrfullpl{AVWC}}
If confidentiality requirements are combined with active attacks on communication systems, the \AVWC{} is the correct channel model. 
In the case where the channel state is determined by nature and there are secrecy requirements, the \CWC{} is an appropriate model. 
In the following, we give a brief literature overview of the \CWC{} and the \AVWC , without claiming completeness. 

%
In \cite{MolavianJazi2009}, random codes for the \AVWC{} are considered. 
The authors presented a single letter formula for achievable \CR{} assisted secrecy rates. 
Furthermore, the authors provided a single letter formula for the \CR{} assisted secrecy capacity for the strongly degraded case with independent states.
In \cite{Bjelakovic2013}, the \AVWC{} under the average error criterion is investigated. 
The authors combined strong secrecy requirements with Ahlswede's \ET , and were able to derive a single letter formula for the \CR{} assisted achievable secrecy rates. 
Additionally, the authors presented a multi letter formula for the deterministic code secrecy capacity. 
In \cite{Boche2015}, continuity properties of the secrecy capacities of \CWCs{} and \AVWCs{} are studied . 
The authors were able to show that for the \CWC{} the secrecy capacity is continuous with respect to the channel states. 
In contrast to the compound case, the authors were able to prove that the deterministic code secrecy capacity of an \AVWC{} possesses discontinuity properties with respect to the channel state. 
The authors presented an example in which the deterministic code secrecy capacity equals zero for a specific choice of the convex combination of channel states, while approaching this convex combination of channel states from above, the deterministic code secrecy capacity remains strictly lager than zero.  
In \cite{Noetzel2016}, the \AVWC{} is investigated and multi letter formulas for the \CR{} and deterministic code secrecy capacities for the case that the eavesdropper is kept ignorant about the \CR{} are derived. 
The authors proved that even though the deterministic code secrecy capacity possesses discontinuities, it is still stable around its positivity points.
Furthermore, the authors provided a complete characterization of \AVWCs{} which might possess the \SA{} property.
In \cite{Wiese2016}, a multi letter formula for the \CR{} assisted secrecy capacity in the general case and a single letter formula for the \CR{} assisted secrecy capacity in the strongly degraded case are proved. 
The authors considered both, average and maximum error criteria, and showed that the capacities are equivalent under both criteria. 
In \cite{Chen2021a}, multiple access \AVWC{} is considered. 
The authors derived a single letter achievable secrecy rate region and an multi letter upper bound. 
Furthermore, the authors calculated the secrecy capacity for the special case of a semi-noiseless \WTC . 
%
\subsection{\acrfullpl{AVWC} with Side Information}
In the literature, also different cases of side information at the transmitter and/or the jammer have been considered with secrecy constraints.

In \cite{Aggarwal2009}, the  binary \WTC{} of type II with an active eavesdropper, who observes a fraction of the transmitted codeword causally, is considered. 
The authors specifically investigated the cases where the eavesdropper erases his observed symbols, and where the eavesdropper flips his observed symbols. 
For these models, achievable secrecy rates are proved.
In \cite{Boche2012}, an \AVWC , where the active adversary has access to the \CR{}, is studied. 
This work relates the dichotomy behavior of the deterministic code capacity of \AVC{} to the case with secrecy requirements. 
The authors showed, that the if the \AVWC{} is symmetrizable then the \CR{} secrecy capacity of the \AVWC{} with knowledge of the commone randomness at the active adversary equals zero. Otherwise, it equals the  \CR{} secrecy capacity of the \AVWC .
In \cite{Bjelakovic2013a}, the \CWC{} with different \CSI{} cases is investigated. 
In the case of no \CSI , the authors derived a multi letter formula for the secrecy capacity. 
For different \CSI T cases, authors determined a single letter formula for the secrecy capacity. 
In \cite{Boche2013b}, the effects of causal knowledge of the \CR{} and \SA{} of \AVWCs{} are studied. 
The authors showed that the causal secrecy capacity equals the \CR{} assisted secrecy capacity. 
Furthermore, the authors demonstrated how the capacity of \AVWCs , when encoding jointly over the \AVWCs{} instead of encoding individually, can be strictly larger than the sum of the individual capacities. This phenomenon, known from the quantum case, is called \SA . 
Additionally, it is shown that weaker forms (e.g. correlated \CR{} instead of perfectly shared \CR{}) is sufficient to achieve the randomness assisted code capacity of an \AVWC . 
In \cite{S.Mansour2019}, the deterministic list code secrecy capacity of an \AVWC  is investigated . 
The authors provided a multi letter formula and presented a symmetrizability condition on the list size for the secrecy capacity to be zero.
In \cite{Goldfeld2020}, a \WTC{} with non-causal \CSI T is investigated. 
Under the maximum error and semantic security criteria a single letter formula for the achievable secrecy rate is derived. 
In \cite{Tahmasbi2020}, an \AVWC{} with causal \CSI T is considered. 
Based on the causal side information at the transmitter, a joint learning transmission scheme is established in order to learn the adversary's strategy. 
The authors showed that this transmission scheme leads to achievable rates (for some channel models), where the adversary's jamming choice is known non-causally at the transmitter.

%
\subsection{\acrfullpl{AVWC} with Constraints}
In \cite{Liang2009}, the discrete memoryless \CWCs , Gaussian \CWCs , and MIMO Gaussian \CWCs{} are  investigated. 
The authors derived single letter achievable secrecy rates for the general case and provided a single letter formula for the secrecy capacity for the strongly degraded case. 
For the Gaussian \CWC , they assumed peak input constraints (average transmit power constraints), 
provided a capacity formula for the strongly degraded, and calculated the secure degrees of freedom. 
For the MIMO Gaussian \CWC , the authors assumed peak input constraints, i.e. constraints on the transmit covariance matrix. 
They provided a secrecy capacity formula for the strongly degraded case, and presented a lower bound for the secure degrees of freedom.
In \cite{He2010}, a MIMO Gaussian \WTC{} is considered, where the eavesdropping channel is an \AVC . 
Under peak input constraints (constraints on the input covariance matrix), the authors contributed a single letter formula for achievable secrecy rates and calculated the secure degrees of freedom.
In \cite{Chou2013}, the channel model of secret key generation, where the eavesdropping channel is an \AVC , is considered. 
For the cases of finite alphabets and for MIMO Gaussian with peak input constraints (average transmit power constraints) the authors provided achievable secret key rates, and for the latter the secure degrees of freedom.
In \cite{Janda2014}, a \CWC{} with a distortion constraint and derived an achievable secrecy rate is considered. 
The authors studied symbol-, peak-, and average constraints on the state, and computed the jammer's best attack strategy. 
In each case, the attacker's best strategy is to flip each symbol with an \IID{} strategy. 
Furthermore, the Gaussian \CWC{} with equivalent constraints is investigated . 
Here, the jammer's best strategy is to jam every symbol with the same power.
In \cite{Janda2015}, the \AVWC{} with input and state peak constraints is investigated . 
The authors derived a multi letter formula for the achievable secrecy rate.
In \cite{Wang2016}, the author scrutinized a variation of the \AVWC , in which an adversary receives a fraction of the codeword perfectly (in terms of \WTC{} of type II) and modifies another fraction of the codeword, where the adversary can use his observed side information. 
For this model, the author determined upper and lower bounds on the semantic secrecy capacity.
In \cite{Goldfeld2016}, the authors used a strong soft covering lemma to derive a single letter formula of the random code semantic secrecy capacity of an \AVWC{} with type constrained states. 
In \cite{Chen2021}, deterministic wiretap codes for the \AVWC{} with input and state peak constraints  are considered.  
The authors provided a single letter formula for achievable secrecy rates.
%



%
%
\subsection{Contribution}
In this work, we consider the \AVWC{} with non-causal side information at the jammer. 
Non-causal side information means that codewords are known at an active adversary before they are transmitted.
We provide the single letter random code secrecy capacity under the maximum error criterion for the case that there exists a best channel to the eavesdropper. 
By considering the maximum error criterion, we allow the active attacker to know the messages, as well.
We use methods of \cite{Boche2019a}, hence random coding arguments instead of list codes, \cite{Sarwate2007}, which might be an alternative approach.
Furthermore, we derive a single letter random code secrecy capacity formula for the case that the eavesdropping channel is strongly degraded, strongly noisier, or strongly less capable with respect to the main channel. 
We compare our results to the random code secrecy capacity for the cases of maximum error criterion but without non-causal side information at the jammer, maximum error criterion with non-causal side information of the messages at the jammer, and the case of average error criterion without non-causal side information at the jammer.
By considering this model, we are able to describe situations, in which a communication system is subject to two different simultaneous attacks, eavesdropping and jamming attacks. 
For both, we individually assume worst case scenarios. 
By requiring a best channel to the eavesdropper, we also consider the case of colluding jammer and eavesdropper.
The eavesdropper obtains a perfect observation of the \CR{} shared between the legitimate communication partners. 
Hence, the \CR{} cannot be used as a key to encrypt the data.  
\begin{table}[t]
\begin{center}
\begin{tabular}{|p{.1\textwidth}|p{.15\textwidth}|p{.15\textwidth}|p{.05\textwidth}|p{.215\textwidth}|}
\hline
Reference & \CWC /\AVWC &  Side Information & Error & Result\\
\hhline{=====}
\cite{MolavianJazi2009}&\AVWC  &-&a&sl - r AR, r sd C \\
\hline
\cite{Aggarwal2009}& BAC, type II &CI&a&sl - d AR\\
\hline
\cite{Bjelakovic2013}&\AVWC &-&a& sl - r AR, ml - d C\\
\hline
\cite{Bjelakovic2013a}&\CWC & d-\CSI{} &a&ml - C, sl - C for special cases\\
\hline
\cite{Wiese2016}&\AVWC &-&a/m&ml - r C, sl - r sd C \\
\hline
\cite{Wang2016}& BAC , type I/II&CI&a&sl - d AR\\
\hline
\cite{Tahmasbi2020}&\AVWC &\CSI T&a&r C\\
\hline
\cite{Boche2019a}&CQ\AVC &MII/CII/MCII &a/m&sl C\\
\hline
\cite{Goldfeld2020}&\AVWC & non-causal \CSI T&m&sl - r AR, C (sp. cases)\\
\hline
\cite{Chen2021a}&\AVWC, MAC &-&a&sl - r AR, r C (sp. cases)\\
\hline
This work &\AVWC & MII/CII/MCII&a/m& sl - r C\\
\hline 
\end{tabular}
\caption{Literature overview related to the presented manuscript (without constraints and with secrecy requirements). Notation: Side Information - d-\CSI{} (different \CSI{} cases at the transmitter and receiver), MII/CII/MCII (message / channel input / message and channel input non-causally known at the jammer), PCI (a fraction of the channel input causally known at the jammer). Error - a (average error criterion), m (maximum error criterion). Result - sl (single letter), ml (multi letter), AR (achievable rate), C (capacity), r (randomness assisted), d (deterministic), sd (strongly degraded).}
\label{tab:literature}
\end{center}
\end{table}
In Table \ref{tab:literature}, we set our work into context. 
For this overview, we only considered state dependent channels with secrecy requirements, whose states are influenced by an external entity. 
But keep in mind, that there are publications without secrecy requirements, which are still highly related to this work, i.e, \cite{Sarwate2007,Cai2010,Sarwate2010}.
Since our work does not include constraints on the input or states, we excluded those works from the table, as well. 

The paper is organized as follows.  
We present the system model in Section \ref{sec:SYSMOD} and state our main result in Section \ref{sec:RESI}. 
Finally in Section \ref{sec:Discuss}, we compare our results to the the standard \AVWC , provide an example, and close with a discussion.
The proofs of the main results can be found in the appendices. 
\paragraph*{Notation}\label{sec:Not}
We folow the notation of \cite{Wiese2016}, and a list of the used symbols and their meanings can be found in Appendix \ref{ap:Nomenclature}. 
In particular, all logarithms are taken to the base $2$. 
Equivalently, the $\exp{\{.\}}$ function means $2^{\{.\}}$. 
Sets are denoted by calligraphic letters. 
The cardinality of a set $\mathcal{U}$ is denoted by $|\mathcal{U}|$.
The set of all probability measures on a set $\mathcal{U}$ is denoted by $\mathcal{P(U)}$. 
For $p\in \mathcal{P(U)}$ we define $p^n\in \mathcal{P}(\mathcal{U}^n)$ as $p^n(x^n)=\prod_i p(x_i)$. 
The entropies, and mutual information terms will be written in terms of the involved probability functions or in terms of the involved random variables.
For example
\begin{align*}
H(W|p) & := - \sum_{x,y} p(x)W(y|x) \log W(y|x)\\
I(p;W) & := H(pW)-H(W|p).
\end{align*} 
Furthermore, let the type of a sequence $s^n=(s_1, s_2, ..., s_n)$ be the probability measure $q\in \mathcal{P(S)}$ defined by $q(a)= \frac{1}{n}N(a|s^n)$, where $N(a|s^n)$ denotes the number of occurrences of $a$ in the sequence $s^n$. 
The set of all possible types of sequences of length $n$ is denoted by $\mathcal{P}_0^n(\mathcal{S})$. 
Additionally, for a $p\in \mathcal{P}(\mathcal{X})$ and $\delta >0$, we define the typical set $\mathcal{T}_{p,\delta}^n \subset \mathcal{X}^n$ as the set of sequences $x^n \in \mathcal{X}^n$ satisfying  for all $a \in \mathcal{X}$ the conditions 
\begin{align*}
\left| \frac{1}{n} N(a|x^n) - p(a)\right| & \leq \delta , \quad \text{if } p(a)>0,\quad \text{and }N(a|x^n)  = 0 \quad \text{if~} p(a)=0.
\end{align*} 

Similarly, for a $W\in \mathcal{P}(\mathcal{Y}|\mathcal{X})$ and a $\delta > 0$ we define the conditionally typical set $\mathcal{T}_{W,\delta}^n(x^n) \subset \mathcal{Y}^n$ as the set of sequences $y^n \in \mathcal{Y}^n$ satisfying for all $a\in \mathcal{X}$, $ b\in \mathcal{Y}$ the conditions

\begin{align*}
\left|\frac{1}{n}N(a,b|x^n,y^n) - W(b|a)\frac{1}{n}N(a|x^n)\right| & \leq \delta,\quad \text{if }W(b|a)>0,\\
N(a,b|x^n,y^n) = 0 \quad \text{if~} W(b|a) &= 0.
\end{align*}
See also {\cite[Chapter 2]{Csiszar2011}} for the method of types and the definitions of typical sequences.
\section{System Model}\label{sec:SYSMOD}
\begin{figure*}
	\centering
	
	\resizebox{!}{6cm}{
\providecommand*{\lvala}{2.5} 
\providecommand*{\lvalb}{1.5}
\providecommand*{\loffset}{0.3}
\tikzstyle{block} =  [rectangle, draw, text centered, minimum height=2em, node distance=1.5cm]
\tikzstyle{block2} = [rectangle, draw,dashed, text centered, minimum height=2em,fill=gray!20, node distance=0.5cm]
\tikzstyle{block3} = [rectangle, draw, text centered, minimum height=2em, text width=.75cm]
\tikzstyle{back} = [rectangle,rounded corners]
\begin{tikzpicture}[>=latex, thick, font=\scriptsize, scale=0.75]
\node[ellipse, draw](CR) {Common Randomness $ \mathcal{U}_n$};

\node[block,below of=CR, node distance=2cm](C2B){$W^n(y^n|x^n,s^n)$};

\node[block,below of=C2B](C2E){$V^n(z^n|x^n,s^n)$};

\begin{pgfonlayer}{background} 
\node[block2] (background) [fit = (C2B)(C2E)] {};
\end{pgfonlayer}

\node[block,left of=C2B,node distance=3.5cm](E){Encoder};

\node[left of=E, node distance=1.5cm](M){$J$};

\node[block,right of=C2B,node distance=4.5cm](DB){Decoder};
\node[block,right of=C2E,node distance=4.5cm](DE){Decoder};

\node[right of=DB, node distance=1.5cm](MB){$\hat{J}$};
\node[right of=DE, node distance=1.5cm](ME){$\hat{J^\prime}$};

\node[above of=M, node distance=.5cm](Alice){Alice};
\node[above of=MB, node distance=.5cm](Bob){Bob};
\node[below of=ME, node distance=.5cm](Eve){Eve};

\node[above of=E, node distance=1cm, xshift=.25cm](u1){$u$};
\node[above of=DB, node distance =1cm, xshift=-.25cm](u2){$u$};
\node[above of=DE, node distance =.5cm, xshift=-1.5cm](u3){$u$};

\node[block3, below of = E,node distance=2.75cm](f){$s^n$};
\node[left of=f, node distance=1.5cm](Jack){Jim};

\begin{pgfonlayer}{background} 
\node[back] (backgrounda) [fill=blue!20,fit = (Alice.north west)(E.south east)] {};
\node[back] (backgroundb) [fill=green!20,fit = (Bob.north east)(DB.south west)] {};
\node[back] (backgrounde) [fill=red!20,fit = (DE.north west)(Eve.south east)] {};
\node[back] (backgroundj) [fill=red!20,fit = (Jack.west)(f.north east)(f.south east)] {};
\end{pgfonlayer}

\draw[->](M)--(E);
\draw[->](DB)--(MB);
\draw[->](DE)--(ME);
\draw[->](E)--(C2B) node [midway, above](X){$X^n_u$};
\draw[->] (E.east) -- ($(E.east)!.5!(C2B.west)$) |- (C2E.west);
\draw[->] (E)--(f)node [midway,right](X1){$X^n_u$};
\draw[->](C2B)--(DB) node [midway, above](Y){$Y^n_{s^n}$};
\draw[->](C2E)--(DE) node [midway, below](Z){$Z^n_{s^n}$};
\draw[dashed,->](CR)[out=180 , in = 90]to(E);
\draw[dashed,->](CR)[out=0 , in = 90]to(DB);
\draw[dashed,->](CR)[out=0 , in = 180]to ($(DE.west)!.5!(DE.north west)$);
\draw[->](f)-|(background.south);
\draw[->](Jack)--(f);
\end{tikzpicture}
	}
	\caption{System model. Jammer has non-causal knowledge about the channel input.}
	\label{fig:SysMod}
\end{figure*}
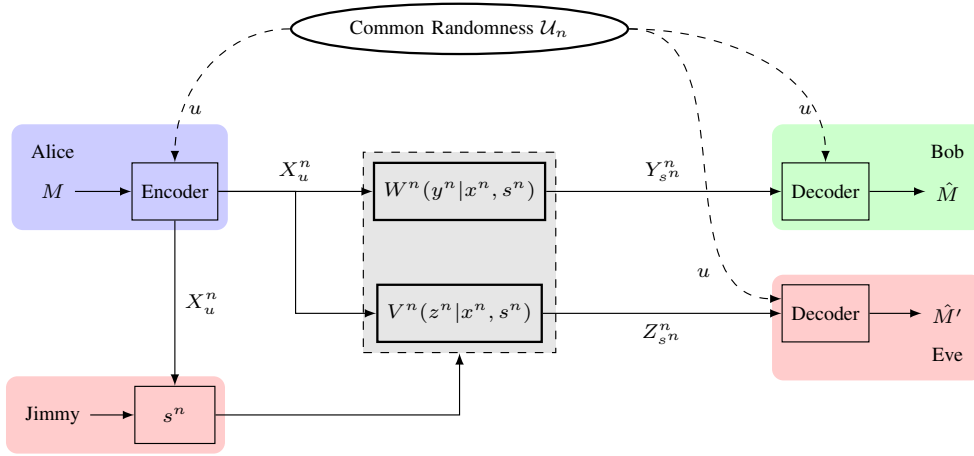
We consider a \CR{} assisted \AVWC{} as depicted in Fig. \ref{fig:SysMod}. 
A transmitter Alice tries to communicate reliably and securely with a legitimate receiver Bob in the presence of an eavesdropper Eve. 
The communication is done via state dependent \DMCs{} $W^n(y^n|x^n,s^n)$ and $V^n(z^n|x^n,s^n)$, where $s^n$ is the channel state, $x^n$ is the channel input, and $y^n$ and $z^n$ are the received sequences at Bob and Eve, respectively.
Alice, Bob, and Eve have access to a common source of randomness $\mathcal{U}_n$, whose realization can not be used as a key for encryption, since Eve also has access to it. 
The channel state $s^n$ is controlled by an external jammer Jim, who has non-causal access to the channel input $X_u^n$. 
The channel input of length $n$ is dependent on the the \CR{} realization, and hence indexed by it. 
Note that this system model  is considered without secrecy constraints by Sarwate \cite{Sarwate2007}, using a connection between deterministic list codes and random codes. 
Furthermore, this system model also is considered without secrecy constraints for the classical-quantum case by Boche et al. \cite{Boche2019a}. 
In the latter case, the authors use random coding arguments.  
\begin{remark}
By requiring a best channel to the eavesdropper, we can show that the jammer is not able to encode information about the channel input into the choice of the state sequence. 
Hence, if there is no other channel between the jammer and the eavesdropper, we also cover the situation of colluding attackers.   
\end{remark}
\begin{defi}[\acrlong{AVWC}]\label{def:AVWC}
We describe an \textbf{\acrlong{AVWC}}  by $(\mathcal{X}, \mathcal{S},\mathcal{W},\mathcal{V}, \mathcal{Y}, \mathcal{Z})$. 
Let $\mathcal{X}, \mathcal{S},\mathcal{Y}, \mathcal{Z}$ be finite sets.
The family of channels to the legitimate receiver is described by $\mathcal{W}=\{ (W_s:\mathcal{X}\to \mathcal{P}(\mathcal{Y})) : ~s\in \mathcal{S}\}$.
The family of channels to the illegitimate receiver is described by $\mathcal{V}=\{ (V_s:\mathcal{X}\to \mathcal{P}(\mathcal{Z})) : ~s\in \mathcal{S}\}$.
The channel is memoryless in the sense that the probability of receiving the sequences $y^n=(y_1,y_2,...,y_n)$ and $z^n=(z_1,z_2,...,z_n)$, when sending $x^n=(x_1,x_2,...,x_n)$ is 
\begin{align*}
W^n(y^n|x^n,s^n) & = \prod_{i=1}^{n}W(y_i|x_i,s_i) = \prod_{i=1}^{n}W_{s_i}(y_i|x_i) = W^n_{s^n}(y^n|x^n),\\
V^n(z^n|x^n,s^n) & = \prod_{i=1}^{n}V(z_i|x_i,s_i)= \prod_{i=1}^{n}V_{s_i}(z_i|x_i) = V^n_{s^n}(z^n|x^n).
\end{align*}
By $(\mathcal{W},\mathcal{V})$, we mean the \AVWC{} defined above.
\end{defi}
\begin{defi}[Deterministic Wiretap-Code]\label{def:DWTC}
An $(n,J_n)$ \textbf{deterministic wiretap-code} $\mathcal{K}_n$ consists of a stochastic encoder $E: \mathcal{J}_n\to \mathcal{P}(\mathcal{X}^n)$ and mutually disjoint decoding sets $\mathcal{D}_{j} \subset \mathcal{Y}^n$, $\mathcal{D}_{j} \cap \mathcal{D}_{j^\prime}= \emptyset, ~ j,j^\prime\in\mathcal{J}_n$.
We define $EW_{s^n}^n:\mathcal{J}_n \to \mathcal{P}(\mathcal{Y}^n)$ by
\begin{align*}
EW_{s^n}^n(y^n|j) & = \sum_{x^n\in\mathcal{X}^n}E(x^n|j)W^n(y^n|x^n,s^n).
\end{align*}
The maximum error $e(\mathcal{K}_n)$ for the \AVWC{} can be expressed as 
\begin{align*}
e(\mathcal{K}_n) & :=  \max_{s^n\in \mathcal{S}^n}\max_{j\in \mathcal{J}_n} \sum_{x^n\in \mathcal{X}^n} E(x^n|j)W^n(\mathcal{D}_{j}^c|x^n,s^n)
\end{align*}
If the jammer has non-causal knowledge about the channel input $x^n$, then the maximum error probability has to be expressed as 

\begin{align*}
\hat{e}(\mathcal{K}_n) & :=  \max_{\substack{f\in\mathcal{F}}}\max_{j\in \mathcal{J}_n} \sum_{x^n\in \mathcal{X}^n}E(x^n|j) W^n(\mathcal{D}_{j}^c|x^n,f(x^n)),
\end{align*}

for all deterministic jamming functions $\mathcal{F}:\mathcal{X}^n \to \mathcal{S}^n$.
\end{defi}
\begin{defi}[Common Randomness Assisted Wiretap Code]\label{def:RC}
An $(n,J_n,\mathcal{U}_n,p_{U})$ \textbf{\CR{} assisted wiretap code} $\mathcal{K}_n^{\text{ran}}$ consists of a family of stochastic encoders $\mathcal{E}=\{ (E_u: \mathcal{J}_n\to \mathcal{P}(\mathcal{X}^n)):~ u\in \mathcal{U}_n\}$ and mutually disjoint (for fixed $u$) decoding sets $\mathcal{D}_{j,u}\subset \mathcal{Y}^n,~ \mathcal{D}_{j,u} \cap \mathcal{D}_{j^\prime,u}\neq \emptyset, ~ j,j^\prime\in\mathcal{J}_n,~u\in\mathcal{U}_n$ with message set $\mathcal{J}_n:= \{1,...,J_n\}$, and $p_U\in \mathcal{P(U)}$. 
Note that for different realizations of the \CR{} $\mathcal{U}_n$, $u\neq u^\prime$, the decoding sets do not have to be disjoint, $\mathcal{D}_{j,u} \cap \mathcal{D}_{j^\prime,u^\prime}\neq \emptyset$
The maximum error probability averaged over all possible randomly chosen deterministic wiretap codebooks $e(\mathcal{K}_n^{\text{ran}})$ can be written as 
\begin{align*}
e(\mathcal{K}_n^{\text{ran}}) & :=  \max_{s^n\in \mathcal{S}^n}\max_{j\in \mathcal{J}_n} \sum_{u\in \mathcal{U}_n}p_U(u)\sum_{x^n\in \mathcal{X}^n}E_u(x^n|j)W^n(\mathcal{D}_{j,u}^c|x^n,s^n).
\end{align*}
Here, the jammer does not know the channel input non-causally.

We define the channel $p_{X^nU|J}:\mathcal{J}_n\to\mathcal{P}(\mathcal{X}^n\times \mathcal{U})$ as 
\begin{align*}
p_{X^nU|J}(x^n,u|j) & = p_{X^n|JU}(x^n|j,u)p_U(u) = E_u(x^n|j)p_U(u).
\end{align*}
Let $\mathcal{F}: \mathcal{X}^n\to\mathcal{S}^n$ describe the family of all deterministic mappings from $\mathcal{X}^n$ to $\mathcal{S}^n$. 
If the jammer has non-causal knowledge of the channel input $x^n$, then the maximum error probability has to be adapted to

\begin{align*}
\hat{e}(\mathcal{K}_n^{\text{ran}}) & :=  \max_{\substack{f\in\mathcal{F}}}\max_{j\in \mathcal{J}_n} \sum_{x^n\in \mathcal{X}^n}p_{X^n|J}(x^n|j)\sum_{u\in \mathcal{U}_n}p_{U|X^n,J}(u|x^n,j)  W^n(\mathcal{D}_{j,u}^c|x^n,f(x^n)).
\end{align*}

\end{defi}

\begin{remark}
In contrast to the standard \AVWC{}, here in the case of non-causal knowledge at the jammer the maximization of $s^n$ is done within each term of the sum. 
Since the jammer knows the channel input, he can adopt to that specific codeword choice. 

Furthermore, let $\mathcal{F}^\prime$ be the family of all deterministic mappings $\mathcal{J}_n\times \mathcal{X}^n \to \mathcal{S}^n $, and $\mathcal{F}^{\prime \prime}$ be the family of all deterministic mappings $\mathcal{J}_n\to \mathcal{S}^n $.
From of Lemma \ref{lem:maxeqmax}, we have 
\begin{align*}
e(\mathcal{K}_n) & =  \max_{s^n\in \mathcal{S}^n}\max_{j\in \mathcal{J}_n} \sum_{x^n\in \mathcal{X}^n} E(x^n|j)W^n(\mathcal{D}_{j}^c|x^n,s^n)\\
& =  \max_{j\in \mathcal{J}_n}\max_{f^{\prime \prime}\in \mathcal{F}^{\prime \prime}} \sum_{x^n\in \mathcal{X}^n} E(x^n|j)W^n(\mathcal{D}_{j}^c|x^n,f^{\prime \prime}(j)),\quad  \text{and}\\
\hat{e}(\mathcal{K}_n) & =  \max_{\substack{f\in\mathcal{F}}}\max_{j\in \mathcal{J}_n} \sum_{x^n\in \mathcal{X}^n}E(x^n|j) W^n(\mathcal{D}_{j}^c|x^n,f(x^n))\\
& =  \max_{j\in \mathcal{J}_n}\max_{\substack{f^\prime\in\mathcal{F}^\prime}} \sum_{x^n\in \mathcal{X}^n}E(x^n|j) W^n(\mathcal{D}_{j}^c|x^n,f^\prime(x^n,j)).
\end{align*}
That implies the following statement. 
Considering the maximum error probability (with respect to the messages) corresponds to the case, where the jammer additionally knows the messages, because the maximization orders can be exchanged according to Lemma \ref{lem:maxeqmax} (see also \cite{Cai2010}). 
Furthermore, the inner optimization is done for fixed parameter of the outer optimization. 
That means for each given message $j\in\mathcal{J}_n$, the worst case state sequence will be considered. This implies the above equalities. 
Equivalent statements hold for the \CR{} assisted codes.    
\end{remark}
\begin{defi}[Achievable Common Randomness Assisted Secrecy Rates and Common Randomness Assisted Secrecy Capacities]\label{def:RCSC}
A nonnegative number $R_S$ is called an \textbf{achievable \CR{} assisted secrecy rate} for the \AVWC{} if there exists a sequence $(\mathcal{K}_n^{\text{ran}})_{n=1}^{\infty}$ of $(n,J_n,\mathcal{U}_n,p_{U})$ \CR{} assisted codes for uniformly distributed messages, such that the following requirements are fulfilled
\begin{align}
\liminf_{n\to \infty} \frac{1}{n}\log J_n & \geq R_S,\label{eq:rateconditionI}\\
\lim_{n\to \infty} e(\mathcal{K}_n^{\text{ran}}) & = 0,\label{eq:errorconditionI}\\
\lim_{n\to \infty}\max_{s^n\in\mathcal{S}^n}\max_{u\in \mathcal{U}_n}I(p_{J};E_uV_{s^n}^n) &= 0.\label{eq:leakageconditionI}
\end{align}
A nonnegative number \textbf{$\widehat{\widehat{R}}_S$ is called an achievable \CR{} assisted secrecy rate for the \AVWC{} with non-causal knowledge of the channel input at the jammer} if there exists a sequence $(\mathcal{K}_n^{\text{ran}})_{n=1}^{\infty}$ of $(n,J_n,\mathcal{U}_n,p_{U})$ \CR{} assisted codes for uniformly distributed messages, such that the following requirements are fulfilled
\begin{align}
\liminf_{n\to \infty} \frac{1}{n}\log J_n & \geq \widehat{\widehat{R}}_S,\label{eq:rateconditionII}\\
\lim_{n\to \infty} \hat{e}(\mathcal{K}_n^{\text{ran}}) & = 0,\label{eq:errorconditionII}\\
\lim_{n\to \infty}\max_{\substack{{f\in\mathcal{F}}}}\max_{u\in \mathcal{U}_n}I(p_{J};E_uV_{f}^n) &= 0. \label{eq:leakageconditionII}
\end{align}
%
The supremum of all achievable \CR{} assisted secrecy rates for the \AVWC{} is called the \textbf{\CR{} assisted secrecy capacity} of the \AVWC{} $(\mathcal{W},\mathcal{V})$ and is denoted by $\widehat{C}_S^{\text{ran}}(\mathcal{W},\mathcal{V})$, when the jammer has no knowledge about the channel input, and $\widehat{\widehat{C}}_S^{\text{ran}}(\mathcal{W},\mathcal{V})$ if the jammer has non-causal knowledge of the channel input.
\end{defi}
The secrecy capacity $\widehat{C}_S^{\text{ran}}(\mathcal{W},\mathcal{V})$ is lower bounded by $\widehat{\widehat{C}}_S^{\text{ran}}(\mathcal{W},\mathcal{V})$.
Note that the eavesdropper has access to the \CR , too. 
Hence, the randomness cannot be used as a key to ensure secure communication between Alice and Bob. 
We explicitly do not bound the cardinality of the \CR{}. 
In \cite{Boche2019a}, the authors provide capacity formulas for quantum channels with an informed jammer but without secrecy constraints. The authors additionally relate and compare the capacity formulas for the cases that the jammer knows additionally the messages and that the jammer does not know the messages.
\begin{lem}\label{rem:theFunc}
Let $\mathcal{P}(\mathcal{S}^n|\mathcal{X}^n)$ be the set of all conditional probability distributions of the state sequences $s^n\in\mathcal{S}^n$ given the channel input $x^n\in\mathcal{X}^n$. 
We can in fact consider the maximization over $\theta \in \mathcal{P}(\mathcal{S}^n|\mathcal{X}^n)$ instead of considering the maximization over all deterministic mappings $\mathcal{F}:\mathcal{X}^n \to \mathcal{S}^n$. 
\end{lem}
\begin{proof}[Proof of Lemma \ref{rem:theFunc}]
See Appendix \ref{proof:theFunc}.
\end{proof}

\begin{defi}[Convex closure and row convex closure \cite{Ahlswede1978}]
Let $p\in\mathcal{P(S)}$ and $\hat{p}\in \mathcal{P(S|X)}$ be probability measures. 
The \textbf{convex closure} and the \textbf{row convex closure} of the \AVC{} are defined as 
\begin{align}
\widehat{\mathcal{W}} & :=\left\{W_p(\cdot|\cdot): \sum_{s\in\mathcal{S}}p(s)W(\cdot|\cdot,s), \quad p\in\mathcal{P(S)}\right\}\\
\widehat{\widehat{\mathcal{W}}} & :=\left\{W_{\hat{p}}(\cdot|x): \sum_{s\in\mathcal{S}}\hat{p}(s|x)W(\cdot|x,s),  \quad \hat{p}(s|x)\in \mathcal{P(S|X)}, x \in \mathcal{X},\right\}
\end{align}
\end{defi}
\begin{ex}\label{ex:conv}
Let $\mathcal{X}=\mathcal{Y}=\mathcal{S}=\{0,1\}$, and 
\begin{align*}
W(\cdot|\cdot,S=0) & = \begin{pmatrix}
1&0\\
0&1
\end{pmatrix},&W(\cdot|\cdot,S=1) & = \begin{pmatrix}
0&1\\
1&0
\end{pmatrix}.
\end{align*}
The convex closure and the row convex closure are given respectively as 
\begin{align*}
\widehat{\mathcal{W}} & = \left\{W(\cdot|\cdot):\begin{pmatrix}
\alpha&1-\alpha\\
1-\alpha&\alpha
\end{pmatrix},\quad \alpha\in [0,1]\right\},\quad \widehat{\widehat{\mathcal{W}}} & = \left\{W(\cdot|\cdot):\begin{pmatrix}
\alpha&1-\alpha\\
1-\beta&\beta
\end{pmatrix},\quad \alpha,\beta \in [0,1]\right\}.
\end{align*}
\end{ex}
\begin{defi}[$k$-Letter extension of $\widehat{\widehat{\mathcal{W}}}$]
The \textbf{$k$-letter extension of $\widehat{\widehat{\mathcal{W}}}$} is defined as the set 
\begin{align}
\widetilde{\mathcal{W}}^k & :=\left\{W_{\hat{p}}^k(Y^k|X^k): \sum_{s^k\in\mathcal{S}^k}\hat{p}(s^k|x^k)W^k(\cdot|x^k,s^k),\quad \hat{p}(s^k|x^k)\in \mathcal{P}(\mathcal{S}^k|\mathcal{X}^k), x^k \in \mathcal{X}^k\right\}
\end{align}
\end{defi}

\begin{remark}
Note that $\widetilde{\mathcal{W}}^k \neq \widehat{\widehat{\mathcal{W}}}^k$. 
It can be shown that the operations of the Kronecker product and taking the row convex closure are not commutative.
\end{remark}
\addtocounter{ex}{-1}
\begin{ex}[continued]
We have 
\begin{align*}
W(\cdot|\cdot,S=0) \otimes  W(\cdot|\cdot,S=0) & = \begin{pmatrix}
1&0&0&0\\
0&1&0&0\\
0&0&1&0\\
0&0&0&1
\end{pmatrix}&W(\cdot|\cdot,S=1) \otimes  W(\cdot|\cdot,S=1) & = \begin{pmatrix}
0&0&0&1\\
0&0&1&0\\
0&1&0&0\\
1&0&0&0
\end{pmatrix}\\
W(\cdot|\cdot,S=0) \otimes  W(\cdot|\cdot,S=1) & = \begin{pmatrix}
0&1&0&0\\
1&0&0&0\\
0&0&0&1\\
0&0&1&0
\end{pmatrix}&W(\cdot|\cdot,S=1) \otimes  W(\cdot|\cdot,S=0) & = \begin{pmatrix}
0&0&1&0\\
0&0&0&1\\
1&0&0&0\\
0&1&0&0
\end{pmatrix}
\end{align*}
Hence, when taking the row convex closure now, we obtain
\begin{align*}
\widetilde{\mathcal{W}}^2 & =  \left\{\begin{pmatrix}
\alpha_1&\alpha_2&\alpha_3&1-\alpha_1- \alpha_2 -\alpha_3\\
\beta_1&\beta_2&\beta_3&1-\beta_1 -\beta_2 -\beta_3\\
\gamma_1&\gamma_2&\gamma_3&1-\gamma_1 -\gamma_2 -\gamma_3\\
\delta_1&\delta_2&\delta_3&1-\delta_1- \delta_2 -\delta_3\\
\end{pmatrix}:\quad \alpha_i , \beta_i, \gamma_i, \delta_i \in[0,1],~ i\in\{1,2,3\},~ \sum_{i=1}^3\alpha_i=\sum_{i=1}^3\beta_i=\sum_{i=1}^3\gamma_i=1\right\}
\end{align*}
In contrast, when taking the row convex closure first, and then calculating the two letter extension, we obtain
\begin{align*}
&\widehat{\widehat{W}}_1(\cdot|\cdot) \otimes \widehat{\widehat{W}}_2(\cdot|\cdot)) = \begin{pmatrix}
\alpha_1&1-\alpha_1\\
1-\beta_1&\beta_1
\end{pmatrix} \otimes \begin{pmatrix}
\alpha_2&1-\alpha_2\\
1-\beta_2&\beta_2
\end{pmatrix}\\
&\widehat{\widehat{\mathcal{W}}}^2=\left\{\begin{pmatrix}
\alpha_1 \alpha_2&\alpha_1(1-\alpha_2)&(1-\alpha_1)\alpha_2&(1-\alpha_1)(1-\alpha_2)\\
\alpha_1 (1-\beta_2)&\alpha_1\beta_2&(1-\alpha_1)(1-\beta_2)&(1-\alpha_1)\beta_2\\
(1-\beta_1)\alpha_2&(1-\beta_1)(1-\alpha_2)&\beta_1 \alpha_2&\beta_1(1-\alpha_2)\\
(1-\beta_1)(1-\beta_2)&(1-\beta_1)\beta_2&\beta_1 (1-\beta_2)&\beta_1\beta_2
\end{pmatrix}: \quad  \alpha_i , \beta_i \in[0,1],~ i\in\{1,2\} \right\}.
\end{align*} 
It is easy to see that the row $\begin{bmatrix}\frac{1}{3}&\frac{1}{3}&\frac{1}{3}&0\end{bmatrix}$ is achievable in $\widetilde{\mathcal{W}}^2$ but not in $\widehat{\widehat{\mathcal{W}}}^2$.
\end{ex}

\begin{remark}[Notation]
With slight abuse of notation, we use the subscripts of $V$ and $W$ to show the dependence on the state sequence $s^n$, the deterministic mapping $f\in \mathcal{F}$, $\mathcal{F}:\mathcal{X}^n\to\mathcal{S}^n$ and stochastic mappings $\theta\in \mathcal{P}(\mathcal{S}^n|\mathcal{X}^n)$. 
	Since we use certain notations interchangeably, we clarify them in the following (shown for $V$).
	\begin{align}
	V^n(z^n|x^n,s^n) & = V^n_{s^n}(z^n|x^n),\label{eq:not1}\\
	V^n(z^n|x^n,f(x^n)) & = V^n_{f}(z^n|x^n),\label{eq:not2}\\
	V_{\theta}^n(z^n|x^n)  &= \sum_{s^n\in \mathcal{S}^n}\theta(s^n|x^n)V^n(z^n|x^n,s^n),\label{eq:not3}\\
	V^n_f(z^n|j) & = E_uV_{f}^n =\sum_{x^n\in \mathcal{X}^n}E_u(x^n|j)V^n(z^n|x^n,f(x^n)),\label{eq:not4}\\
	V^n_{\theta}(z^n|j) & =E_uV_{\theta}^n  =\sum_{x^n\in \mathcal{X}^n}E_u(x^n|j)\sum_{s^n\in \mathcal{S}^n}\theta(s^n|x^n)V^n(z^n|x^n,s^n).\label{eq:not5}
\end{align}
	Here, (\ref{eq:not1}) denotes the \AVC{} $V^n$ to the eavesdropper if the channel input equals $x^n$, the channel state is $s^n$, and the channel output equals $z^n$. 
	We use the notation in  (\ref{eq:not1}) interchangeably. 
	In (\ref{eq:not2}) we consider the same \AVC{}, but under the condition that the jammer applies the deterministic mapping $f\in\mathcal{F}$, $\mathcal{F}: \mathcal{X}^n\to \mathcal{S}^n$. 
	Again, we use the notation in  (\ref{eq:not2}) interchangeably. 
	In (\ref{eq:not3}), we consider a stochastic mapping $\theta \in \mathcal{P}(\mathcal{S}^n|\mathcal{X}^n)$ instead of a deterministic mapping. 
	Hence, we consider the averaged channel with respect to the channel state $s^n$ in dependence on the channel input $x^n$. 
	In (\ref{eq:not4}), we denote the conditional probability of obtaining the output sequence $z^n$ under the conditions that we transmitted the secure message $j\in\mathcal{J}_n$ and that the jammer applies the deterministic jamming strategy $f\in\mathcal{F}$. 
	Since we use the stochastic encoder $E_u$, we average with respect to the channel input $x^n\in\mathcal{X}^n$. 
	In (\ref{eq:not5}), the jammer applies a stochastic jamming strategy $\theta \in \mathcal{P}(\mathcal{S}^n|\mathcal{X}^n)$ instead of a deterministic mapping. 
	Since we use again a stochastic encoder $E_u$, we average with respect to the channel input $x^n$ and with respect to the channel states $s^n$.
\end{remark}

%
%
%

\begin{defi}[Best Channel to the Eavesdropper]\label{def:bestchannel}
Let $Z_{\theta}^n$ be the output of the channel $V_{\theta}^n$.
If there exists for all $n\in\mathbb{N}$ a $\theta^{\ast , n} \in \mathcal{P}^n(\mathcal{S}|\mathcal{X})$ with $\theta^{\ast , n}(s^n|x^n)=\prod_{i=1}^{n}\theta^{\ast}_i(s_i|x_i)=\prod_{i=1}^{n}\theta^{\ast}(s_i|x_i)$  such that for all other $\theta \in \mathcal{P}(\mathcal{S}^n|\mathcal{X}^n)$ the Markov chain 
\begin{align}
X^n \leftrightarrow Z_{\theta^{\ast , n}}^n \leftrightarrow Z_{\theta}^n, \
\end{align}
holds, then we say that there exists a \textbf{best channel to the eavesdropper} and all channels $V_{\theta}^n$ are degraded with respect to the channel $V_{\theta^{\ast , n}}^n$.
\end{defi}
\begin{remark}
Since the mutual information is convex (row convex) with respect to the channel for fixed input/ input distribution, the optimal jamming strategy is deterministic.
\begin{align*}
\theta^{\ast , n}(s^n|x^n) & = \mathds{1}_{s^{\ast ,n}}(x^n)
\end{align*}
In other words, the optimal state sequence (in terms of the secrecy constraint) results in a boundary point of $\widetilde{\mathcal{V}}^n$ and taking convex combinations of channel states does not increase the mutual information. 
A similar statement can be made with respect to the error probability. 
Since the mutual information is convex (row convex) with respect to the channel for fixed input/ input distribution, the optimal jamming strategy with respect to the reliability constraint is deterministic again, but is not a boundary point of $\widetilde{\mathcal{V}}^n$.
\end{remark}
Next, we will introduce the notions of strongly degraded, strongly noisier, and strongly less capable with independent states, respectively. 
Independent states mean that the states in the main and the eavesdropping channel can be chosen individually.
\begin{defi}[Strongly Degraded]
An \AVWC{}  is \textbf{strongly degraded} (with independent states, see \cite{MolavianJazi2009}) if the following Markov chain holds
\begin{align*}
X^n \leftrightarrow Y_{\theta}^n \leftrightarrow Z_{\theta^{\prime }}^n, \quad \forall \theta , \theta^{\prime} \in \mathcal{P}(\mathcal{S}^n|\mathcal{X}^n), \quad \forall n\in\mathbb{N}.
\end{align*} 
\end{defi}
\begin{defi}[Strongly Noisier with Independent States]
The family of channels to the illegitimate receiver $\mathcal{V}=\{ (V_s:\mathcal{X}\to \mathcal{P}(\mathcal{Z})) : ~s\in \mathcal{S}\}$ is \textbf{strongly noisier} with independent states than the family of channels to the legitimate receiver $\mathcal{W}=\{ (W_s:\mathcal{X}\to \mathcal{P}(\mathcal{Y})) : ~s\in \mathcal{S}\}$ if for every random variable $A^n$ such that $A^n\leftrightarrow X^n \leftrightarrow (Y_{\theta}^n,Z_{\theta^{\prime}}^n)$ we have for all $\theta,\theta^{\prime}\in\mathcal{P}(\mathcal{S}^n|\mathcal{X}^n)$
\begin{align*}
I(p_{A^n}^n;W_{\theta}^n) & \geq I(p_{A^n}^n;V_{\theta^{\prime}}^n), \quad \forall n\in\mathbb{N}.
\end{align*}

\end{defi}
\begin{defi}[Strongly Less Capable with Independent States]
The family of channels to the illegitimate receiver $\mathcal{V}=\{ (V_s:\mathcal{X}\to \mathcal{P}(\mathcal{Z})) : ~s\in \mathcal{S}\}$ is \textbf{strongly less capable} with independent states than the family of channels to the legitimate receiver $\mathcal{W}=\{ (W_s:\mathcal{X}\to \mathcal{P}(\mathcal{Y})) : ~s\in \mathcal{S}\}$ if for every  $p\in\mathcal{P}(\mathcal{X}^n)$ we have for all $\theta,\theta^{\prime}\in\mathcal{P}(\mathcal{S}^n|\mathcal{X}^n)$ 
\begin{align*}
I(p;W_{\theta}^n) & \geq I(p;V_{\theta^{\prime}}^n), \quad \forall n\in\mathbb{N}.
\end{align*}

\end{defi}
\begin{remark}
If there exist a $\theta \in \mathcal{P}(\mathcal{S}|\mathcal{X})$ fulfilling the strongly degraded, strongly less noisier, or strongly less capable condition, then there exists for all $n\in\mathbb{N}$ a $\theta^n \in \mathcal{P}^n(\mathcal{S}|\mathcal{X})$, with $\theta^n = \prod_{i=1}^n \theta_i$, fulfilling the strongly degraded, strongly less noisier, or strongly less capable condition, respectively.
\end{remark}
\begin{remark}
Just as in the stateless case \cite{Bloch2011a}, we have the following implication chain:
\begin{align*}
\text{Strongly Degraded} & \to \text{Strongly Noisier} \to \text{Strongly Less Capable}.
\end{align*}
Here, $X\to Y$ means $X$ implies $Y$, but not vice versa.
\end{remark}
\section{Main Results}\label{sec:RESI}
\begin{figure*}
	\centering
	
	\resizebox{!}{3cm}{
\providecommand*{\lvala}{2.5} 
\providecommand*{\lvalb}{1.5}
\providecommand*{\loffset}{0.3}
\tikzstyle{block} =  [rectangle, draw, text centered, minimum height=2em, node distance=1.5cm]
\tikzstyle{block2} = [rectangle, draw,dashed, text centered, minimum height=2em,fill=gray!20, node distance=0.5cm]
\tikzstyle{block3} = [rectangle, draw, text centered, minimum height=2em, text width=.75cm]
\tikzstyle{back} = [rectangle,rounded corners]
\begin{tikzpicture}[>=latex, thick, font=\scriptsize, scale=0.75]
\node[block](E){$\rho^n(x^{n}|\psi^n_u)$};
\node[block,left of=E,node distance=2cm](P){Encoder};
\node[right of=E, node distance=2cm](C2B){$\cdots$};
\node[left of=P, node distance=1.5cm](M){$J$};

\node[above of=M, node distance=.5cm](Alice){Alice};

\node[above of=P, node distance=1.5cm](u1){$u$};

\node[below of = E,node distance=1.5cm](f){$\vdots$};

\begin{pgfonlayer}{background} 
\node[back] (backgrounda) [fill=blue!20,fit = (Alice.north west)(E.south east)] {};
\end{pgfonlayer}

\draw[->](M)--(P);
\draw[->](P)--(E)node [midway, above](psi){$\Psi^n_u$};
\draw[->](E)--(C2B) node [midway, above](X){$X^{n}$};
\draw[->] (E)--(f)node [midway,right](X1){$X^{n}$};
\draw[dashed,->](u1)to(P);
\end{tikzpicture}
	}
	\caption{Adopted system model with prefixing at Alice. With \CR{} realization $u$, Alice encodes a secure message $J$ into a codeword $\Psi_u^n$, of length $n$. The codeword serves as the input of a prefix channel $\rho(x^n|)\psi_u^n$, and is mapped to the channel input $X^n$. Other parts remain the same.}
	\label{fig:SysModPre}
\end{figure*}
In the following, we state our main results. 
First, we present the secrecy capacity formulas for the general, and the strongly less capable cases, respectively, when the jammer has non-causal knowledge of the channel input. 
Then we provide the corresponding secrecy capacity formulas, when the jammer has no side information or only possesses knowledge of the messages.
\subsection{Capacity Formulas for the General and the Less Capable Cases}
\begin{theo}\label{th:general}
If there exists a best channel to the eavesdropper, the \CR{} assisted code secrecy capacity for the  \AVWC{} with side information at the jammer $\widehat{\widehat{C}}_S^{\text{ran}}(\mathcal{W},\mathcal{V})$ is given by 
\begin{align}
\widehat{\widehat{C}}_S^{\text{ran}}(\mathcal{W},\mathcal{V}) & = \max_{p_\Psi , \rho(X|\Psi)}\left(\min_{W\in\widehat{\widehat{\mathcal{W}}}}I(p_{\Psi};\rho W) - \max_{V\in\widehat{\widehat{\mathcal{V}}}}I(p_{\Psi};\rho V)\right),
\end{align}

with $\Psi$ as a prefixing random variable and concatenated channels $\rho W$ and $\rho V$, respectively.
\end{theo}
\begin{proof}[Proof of Theorem \ref{th:general}]
See Appendix \ref{proof:th-general}.
\end{proof}
%
%
%
%

\begin{theo}\label{th:degraded}
Let an \AVWC{} $(\mathcal{W},\mathcal{V})$ be given. If for $(\mathcal{W},\mathcal{V})$, the channel $\mathcal{V}$ is strongly less capable with respect to the channel $\mathcal{W}$ and if there exists a best channel to the eavesdropper, then the \CR{} assisted code secrecy capacity $\widehat{\widehat{C}}_S^{\text{ran}}(\mathcal{W},\mathcal{V})$ is given by
%
\begin{align}
\widehat{\widehat{C}}_S^{\text{ran}}(\mathcal{W},\mathcal{V}) & = \max_{p_X}\left(\min_{W\in \widehat{\widehat{\mathcal{W}}}} I(p_X;W) - \max_{V\in \widehat{\widehat{\mathcal{V}}}}I(p_X;V)\right)\label{eq:theo3}
\end{align}
\end{theo}
\begin{proof}[Proof of Theorem \ref{th:degraded}]
See Appendix \ref{proof:th-degraded}.
\end{proof}
The secrecy capacity $\widehat{\widehat{C}}_S^{\text{ran}}(\mathcal{W},\mathcal{V})$ depends on the row convex closures $\widehat{\widehat{\mathcal{W}}}$ and $\widehat{\widehat{\mathcal{V}}}$.
%
%
\subsection{Capacity Formulas without Side Information at the Jammer, or where the Jammer only knows the Messages }
\begin{cor}\label{th:noknowledge-or-only-the-messages}
Let an \AVWC{} $(\mathcal{W},\mathcal{V})$ be given. If there exists a best channel to the eavesdropper and if the jammer does not possess non-causal side information, or if there exists a best channel to the eavesdropper and the jammer possesses non-causal side information of the messages, then the \CR{} assisted code secrecy capacity under the maximum error criterion is given by
\begin{align}
\widehat{C}_S^{\text{ran}}(\mathcal{W},\mathcal{V}) & =\max_{p_\Psi , \rho(X|\Psi)}\left(\min_{W\in\widehat{\mathcal{W}}}I(p_{\Psi};\rho W) - \max_{V\in\widehat{\mathcal{V}}}I(p_{\Psi};\rho V)\right),
\end{align}
%
If the \AVWC{} is additionally strongly degraded, then the \CR{} assisted code secrecy capacity under the maximum error criterion simplifies to
\begin{align}
\widehat{C}_S^{\text{ran}}(\mathcal{W},\mathcal{V}) & = \max_{p_X}\left(\min_{W\in \widehat{\mathcal{W}}} I(p_X;W) - \max_{V\in \widehat{\mathcal{V}}}I(p_X;V)\right).\label{eq:theonon}
\end{align}
%
\end{cor}
\begin{proof}[Proof of Corollary \ref{th:noknowledge-or-only-the-messages}]
By simple modifications in Lemma \ref{cor:boundprefix}, as well as in the converses, it is easy to see that the theorem holds. 
\end{proof}
The secrecy capacity $\widehat{C}_S^{\text{ran}}(\mathcal{W},\mathcal{V})$ depends on the convex closures $\widehat{\mathcal{W}}$ and $\widehat{\mathcal{V}}$.
\begin{remark}[Input and State Constraints]

The extension of the results to the case of input and state constraints is not straight forward. 
While the modifications in the sense of \cite{Goldfeld2016} might be possible and may lead to a single letter random code secrecy capacity, the restrictions on the jammer's strategy are very strict. 
In \cite{Goldfeld2016}, the jammer is restricted to a type constrained jamming strategy. 
In \cite{Chen2021}, the authors considered deterministic wiretap codes for the \AVWC{} with input and state peak constraints.  
They provided a single letter formula for achievable secrecy rates. 
The converse for the general case is still open. 
In \cite{Janda2015}, a general multi letter formula for the achievable random code secrecy rate with input and state peak constraints is presented. 
The converse for the general case remains an open problem.
\end{remark}

\begin{remark}[From Random to Deterministic - Not Elimination]
In \cite{Ahlswede1978}, Ahlswede proposes the Elimination of Correlation technique to reduce the amount of \CR{} to only $n^2$. 
He then uses a prefix code to inform the receiver which realization of the randomness is used. 
This leads to the following dichotomy result: 
The deterministic code capacity (under the average error criterion) equals its random code capacity, or is equal to zero if the \AVC{} is symmetrizable. 
Note that this technique cannot be used in our system model. 
If a prefix code were used to inform the receiver which deterministic code is used, the jammer  would obtain this information as well and we obtain once again the situation of the maximal error criterion for deterministic codes.

The authors of \cite{Boche2019a} present a technique to reduce the amount of \CR{} to only a polynomial order. 

The authors draw  codewords not from the complete set of typical sequences, but from a "suitable" subset. 
The reduction of the amount of \CR{} is meaningful, since in practical implementations \CR{} might be expensive, or just not available. 
Hence, from a system design point of view, it makes sense to reduce the necessary amount of \CR{}. 
In \cite{Sarwate2007}, the authors provide an upper bound on the amount of \CR{} which corresponds to $\mathcal{O}(\log n)$, even when the jammer knows the codewords non-causally, but without secrecy constraints. 
However, deriving deterministic code results or the minimal amount of \CR{} is not the intention of this work. 
Instead, we assume that there exists a sufficient amount of \CR{} to compute fundamental results on the secrecy capacities for different knowledge-scenarios at the jammer.

\end{remark}
\section{Discussion}\label{sec:Discuss}
\subsection{Relation to the secrecy capacity under average error criterion}
In the following, we provide the secrecy capacity formula under the average error criterion, and set the capacity formulas into relation to each other.
\begin{cor}[Common Randomness Assisted Secrecy Capacity under the Average Error Criterion if the Family of Channels to the Illegitimate Receiver is Strongly Degraded, Strongly Noisier, or Strongly Less Capable with Independent States]\label{cor:1}
If for an \AVWC{} the family of channels to the illegitimate receiver $\mathcal{V}$ is strongly degraded, strongly noisier, or strongly less capable with independent states, then the \CR{} assisted secrecy capacity under the average error criterion for the standard \AVWC{} is given by 
\begin{align*}
\widehat{C}_{S,av}^{\text{ran}}(\mathcal{W},\mathcal{V})& = \max_{p_X}\bigg(\min_{W\in \widehat{\mathcal{W}}} I(p_X;W) - \max_{V\in \widehat{\mathcal{V}}}I(p_X;V)\bigg).
\end{align*}
\end{cor}
\begin{cor}
Let an \AVWC{} $(\mathcal{W},\mathcal{V})$ be given. If there exists a best channel to the eavesdropper, then 
\begin{align}
\widehat{C}_{S,av}^{\text{ran}}(\mathcal{W},\mathcal{V})& = \widehat{C}_S^{\text{ran}}(\mathcal{W},\mathcal{V}) \geq \widehat{\widehat{C}}_S^{\text{ran}}(\mathcal{W},\mathcal{V}). 
\end{align}
where $\widehat{C}_{S,av}^{\text{ran}}(\mathcal{W},\mathcal{V})$ denotes the \CR{} assisted code secrecy capacity under the average error criterion.
\end{cor}
\begin{proof}
It is easy to see that $\widehat{\mathcal{W}}  \subset \widehat{\widehat{\mathcal{W}}}$ and $\widehat{\mathcal{V}}  \subset \widehat{\widehat{\mathcal{V}}}$.
\end{proof}
\subsection{Example}
To clarify the fundamental difference between the capacity formulas mentioned above, and to show that the inclusion can be strict, we provide an explicit example. 
First, we define $\mathcal{I}_{(\cdot)}(\cdot)$ as the convex hull of the row of channel matrices as follows.
\begin{defi}[\cite{Ahlswede1978}]
For a given $x\in\mathcal{X}$, let $\mathcal{I}_w(x)$ denote the convex hull of the set $\{W(\cdot|x,s):s\in\mathcal{S}\}$ of probability distributions on $\mathcal{Y}$, i.e., $\mathcal{I}_w(x)= \text{conv}\left(W(\cdot|x,s):s\in\mathcal{S}\right)$.
\end{defi}
\begin{ex}
~
\begin{figure}[h!]
\begin{center}
\resizebox{!}{6cm}{
\begin{tikzpicture}

\definecolor{color0}{rgb}{0.75,0.75,0}

\begin{axis}[
xlabel={$\delta_1$},
ylabel={$\delta_2$},
xmin=0, xmax=1,
ymin=0, ymax=1,
tick align=outside,
tick pos=both,
xmajorgrids,
x grid style={white},
ymajorgrids,
y grid style={white},
axis line style={white},
axis background/.style={fill=white!89.803921568627459!black}
]
\addplot [very thick, blue, dashed, forget plot]
table {%
0.1 0.9
0.2 0.8
};
\addplot [very thick, black, dashed, forget plot]
table {%
0.7 0.3
0.85 0.15
};
\addplot [very thick, red, dashed, forget plot]
table {%
0.8 0.2
0.9 0.1
};
\addplot [very thick, blue, forget plot]
table {%
0.25 0.75
0.3 0.7
};
\addplot [very thick, black, forget plot]
table {%
0.4 0.6
0.45 0.55
};
\addplot [very thick, red, forget plot]
table {%
0.6 0.4
0.65 0.35
};
\node at (axis cs:0.15,0.85)[
  anchor=base west,
  text=black,
  rotate=0.0
]{ $I_w(x_1)$};
\node at (axis cs:0.775,0.225)[
  anchor=base west,
  text=black,
  rotate=0.0
]{ $I_w(x_2)$};
\node at (axis cs:0.75,0.05)[
  anchor=base west,
  text=black,
  rotate=0.0
]{ $I_w(x_3)$};
\node at (axis cs:0.275,0.725)[
  anchor=base west,
  text=black,
  rotate=0.0
]{ $I_v(x_1)$};
\node at (axis cs:0.425,0.575)[
  anchor=base west,
  text=black,
  rotate=0.0
]{ $I_v(x_2)$};
\node at (axis cs:0.625,0.375)[
  anchor=base west,
  text=black,
  rotate=0.0
]{ $I_v(x_3)$};
\end{axis}

\end{tikzpicture}
}
\caption{Difference of capacities if the channel input is known or unknown at the jammer.}
\end{center}
\end{figure}
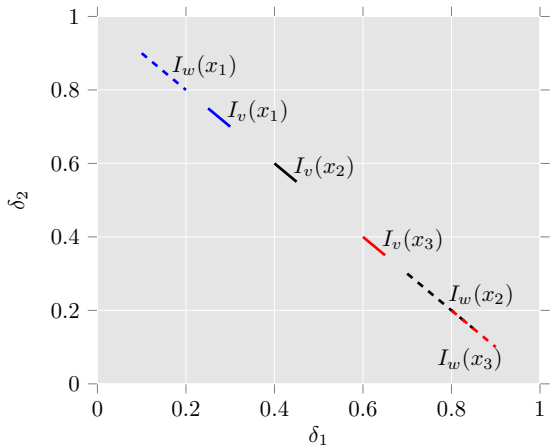
We consider the following example. 
Let the channel matrices be given as follows.
\begin{align*}
w(\cdot|\cdot,s_1) & =\begin{pmatrix}
0.1 & 0.9\\
0.7 & 0.3\\
0.8 & 0.2
\end{pmatrix}, 
& w(\cdot|\cdot,s_2) & =\begin{pmatrix}
0.2 & 0.8\\
0.85 & 0.15\\
0.9 & 0.1
\end{pmatrix}\\
v(\cdot|\cdot,s_1) & =\begin{pmatrix}
0.25 & 0.75\\
0.4 & 0.6\\
0.6 & 0.4
\end{pmatrix} ,
& v(\cdot|\cdot,s_2) & =\begin{pmatrix}
0.3 & 0.7\\
0.45 & 0.55\\
0.65 & 0.35
\end{pmatrix}
\end{align*}
It is easy to see that this channel \AVWC{} fulfills the strongly less capable property. 
We have
\begin{align*}
\widehat{W} & = \alpha w(\cdot|\cdot,s_1) + (1-\alpha)w(\cdot|\cdot,s_2) &= \begin{pmatrix}
		0.2 - 0.1 \alpha & 0.8 + 0.1 \alpha\\
		0.85 - 0.15 \alpha & 0.15 + 0.15 \alpha\\
		0.9 - 0.1 \alpha & 0.1 + 0.1 \alpha
		\end{pmatrix},\\
\widehat{V} & = \beta v(\cdot|\cdot,s_1) + (1-\beta) & = \begin{pmatrix}
		0.3 - 0.05 \beta & 0.7 + 0.05 \beta\\
		0.45 - 0.05 \beta & 0.55 + 0.05 \beta\\
		0.65 - 0.05 \beta & 0.35 + 0.05 \beta\\
		\end{pmatrix}.
\end{align*}
The secrecy capacity $\widehat{C}_{S}^{\text{ran}}(\mathcal{W},\mathcal{V})$ of this \AVWC{} can be calculated to $\widehat{C}_{S}^{\text{ran}}(\mathcal{W},\mathcal{V}) \approx 0.3$ bits per channel use, $p_X(0)=p_X(2)=0.5$, $p_X(1)=0$, $\alpha=0.5$, $\beta \approx 1$.
In contrast to that, one can easily see that the channels 
\begin{align*}
\widehat{\widehat{W}} & = \begin{pmatrix}
0.2 & 0.8 \\
0.8 & 0.2 \\
0.8 & 0.2
\end{pmatrix} & \widehat{\widehat{V}} & = \begin{pmatrix}
0.25 & 0.75 \\
0.4 & 0.6 \\
0.65 & 0.35
\end{pmatrix}
\end{align*}
correspond to the worst and the best channels to Bob and Eve, respectively, if the channel input is non-causally known at the jammer. 
In this case, the secrecy capacity for the \AVWC{} can be calculated to $\widehat{\widehat{C}}_S^{\text{ran}}(\mathcal{W},\mathcal{V}) \approx 0.26$ bits per channel use (which is strictly smaller than $\widehat{C}_{S}^{\text{ran}}(\mathcal{W},\mathcal{V})$), with input distribution $p_X(0)=p_X(1)=0.5$, $p_X(2)=0$. 
The second input symbol is used for the case with non-causal side information at the jammer instead of the third one as for the \AVWC{} without side information.
\end{ex}
\subsection{Summary}
In this work, we derive a single letter formula for the random code secrecy capacity under the maximum error criterion for an active attacker with non-causal side information of the codewords, provided there exists a best channel to the eavesdropper. 
Additionally, we provide a formula for the random code secrecy capacity for the case that the eavesdropping channel is strongly degraded, strongly noisier, or strongly less capable with respect to the main channel. 
We further allow that the messages might also be known at the jammer.
We apply and extend methods of \cite{Boche2019a} and \cite{Wiese2016}. 
We show that the derived secrecy capacities depend on the row convex closures of the sets of channels to Bob and Eve for the general and the strongly degraded cases, respectively, if the input is non-causally known at the jammer and depend on the convex closures of the sets of channels if the channel input is not non-causally known at the jammer. 

We compare our results to the random code secrecy capacity for the cases of maximum error criterion but no non-causal side information at the jammer, maximum error criterion with non-causal side information of the messages at the jammer, and the standard \AVWC . 
In the considered system model, the worst case occurs if the codewords (channel inputs) are non-causally known at the jammer. 
As we have shown, it does not matter if the jammer additionally knows the messages. 
The random code secrecy capacity is determined with respect to the row convex closures of the channel sets. 
In contrast, if the jammer does not know the channel input non-causally, then for the cases of maximum error criterion but without non-causal side information at the jammer, maximum error criterion with non-causal side information of the messages at the jammer, and the case of average error criterion without non-causal side information at the jammer, the random code secrecy capacity is determined with respect to the convex closure of the channel sets. 
We provided an example to illustrate this fundamental difference. 
It is quite obvious that optimizing over a larger set, here the row convex closure compared to the convex closure of the channel sets, may lead to a smaller random code secrecy capacity. 
%
%

From a resource theory point of view, the necessary amount of \CR{} is of interest. 
We do not upper bound the amount of \CR{}. 
To ensure that codewords occur in sufficiently many codebooks in order to confuse the jammer, we give a lower bound on the amount of \CR{}. 
This \CR{} is known at the eavesdropper, and hence cannot be used as key to achieve a secure transmission. 
Secrecy is achieved by wiretap coding.
%
%
%
\appendices
\section{Exchangeability of maximization orders}
\begin{lem}\label{lem:maxeqmax}
Let the sequence $(a_{i,j})_{\substack{i\in \mathcal{A},\\j \in \mathcal{B}}}$, $a_{i,j}\in \mathbb{R}$ be given, where $\mathcal{A},\mathcal{B}\subset \mathbb{N}$ are finite sets. 
Then 
\begin{align*}
\max_{i\in \mathcal{A}} \max_{j\in \mathcal{B}} (a_{i,j})_{\substack{i \in \mathcal{A},\\j \in \mathcal{B}}} & = \max_{j\in \mathcal{B}}\max_{i\in \mathcal{A}}  (a_{i,j})_{\substack{i \in \mathcal{A},\\j \in \mathcal{B}}}.
\end{align*}
\end{lem}
\begin{proof}
Let $\mathcal{J}^\ast$ and $\mathcal{I}^\ast$ be given as
\begin{align*}
\mathcal{J}^\ast & = \left\{\max_{i\in\mathcal{A}}(a_{i,j}):~j\in \mathcal{B}\right\}\\
\mathcal{I}^\ast & = \left\{\max_{j\in\mathcal{B}}(a_{i,j}):~i\in \mathcal{A}\right\}
\end{align*}
Then it is easy to see that
\begin{align*}
\max_{j\in\mathcal{B}}J^\ast  &= \max_{i\in\mathcal{A}}I^\ast
\end{align*}
Intuitively, the result follows when imagining a matrix. 
If the global maximum is unique, then the operations of collecting the maximum in each column in the set $\mathcal{J}^\ast$ and then taking the maximal element of $\mathcal{J}^\ast$ is equivalent to collecting the maximum in each row in the set $\mathcal{I}^\ast$ and then taking the maximal element of $\mathcal{I}^\ast$. 

If the global maximum is not unique, the result remains the same, but the indices $(i,j)\in \mathcal{A}\times\mathcal{B}$ might change. 
\end{proof}
\section{Variation Distance, Markov, Chernoff, and Chernoff-Hoeffding Bounds}
\begin{defi}[Variation Distance]
The variation distance of two distributions $P_1,P_2$ on $\mathcal{X}$ is defined as 
\begin{align}
||P_1-P_2||_V & = \sum_{x\in\mathcal{X}}|P_1(x)-P_2(x)|.
\end{align}
\end{defi}
\begin{lem}[{\cite[Lemma 2.7]{Csiszar2011}}]\label{lem:csiz}
If $||P_1-P_2||_V = \tau \leq \frac{1}{2}$, then
\begin{align*}
|H(P_1)-H(P_2)| & \leq - \tau \log{\frac{\tau}{|\mathcal{X}|}}.
\end{align*}
\end{lem}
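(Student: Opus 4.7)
The plan is to reduce the bound to a pointwise comparison of the binary entropy function $\eta(t):=-t\log t$, and then apply a single scalar inequality of Jensen/log-sum type at the end. The starting observation is that the total variation distance dominates the pointwise gap: for every $x\in\mathcal{X}$ one has $|P_1(x)-P_2(x)|\leq \|P_1-P_2\|_V=\tau\leq 1/2$, so all pointwise differences lie in the regime where $\eta$ is well behaved. Writing $\Delta_x:=P_1(x)-P_2(x)$, the triangle inequality gives
\begin{align*}
|H(P_1)-H(P_2)| \leq \sum_{x\in\mathcal{X}}\bigl|\eta(P_1(x))-\eta(P_2(x))\bigr|.
\end{align*}

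The first real step is the scalar inequality $|\eta(a)-\eta(b)|\leq \eta(|a-b|)$ whenever $a,b\in[0,1]$ and $|a-b|\leq 1/2$. I would prove this by exploiting concavity of $\eta$ on $[0,1]$ together with $\eta(0)=0$, which yields subadditivity $\eta(a+c)\leq \eta(a)+\eta(c)$ for $a,c\geq 0$ with $a+c\leq 1$; combined with the monotonicity of $\eta$ on $[0,1/e]$ and the symmetry $\eta(1-p)=\eta(p)+\text{(linear correction)}$ used in the range $[1/2,1]$, this produces the desired bound in all subcases. This is the step I expect to be the technical heart of the argument, since $\eta$ is not globally monotone and one must carefully split into the cases where $a,b$ are close to $0$, close to $1$, or on different sides of $1/e$. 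The restriction $|a-b|\leq 1/2$ is precisely what keeps the bad endpoint behaviour under control.

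Applying this scalar bound term by term gives
\begin{align*}
|H(P_1)-H(P_2)|\leq \sum_{x\in\mathcal{X}}\eta(|\Delta_x|) = -\sum_{x\in\mathcal{X}}|\Delta_x|\log|\Delta_x|.
\end{align*}
The final step is to maximise the right-hand side under the constraint $\sum_x|\Delta_x|=\tau$. Introducing the normalised weights $q_x:=|\Delta_x|/\tau$, which form a probability distribution on (a subset of) $\mathcal{X}$, one rewrites
\begin{align*}
-\sum_{x}|\Delta_x|\log|\Delta_x| = -\tau\log\tau + \tau\, H(q) \leq -\tau\log\tau + \tau\log|\mathcal{X}| = -\tau\log\frac{\tau}{|\mathcal{X}|},
\end{align*}
where the inequality is just the maximum-entropy bound $H(q)\leq\log|\mathcal{X}|$. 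This yields the claim. Equivalently, one may view the last step as a direct application of the log-sum inequality to $\sum_x|\Delta_x|\log|\Delta_x|$ versus $\sum_x|\Delta_x|\log(\tau/|\mathcal{X}|)$, but the $H(q)\leq\log|\mathcal{X}|$ formulation makes the role of the alphabet size most transparent.
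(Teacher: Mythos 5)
Your proof is correct and is essentially the standard argument from the cited reference (Csisz\'ar--K\"orner, Lemma 2.7); the paper itself imports the lemma without proof, so there is nothing to diverge from. One small remark on the step you flag as the technical heart: the claimed "symmetry $\eta(1-p)=\eta(p)+\text{linear correction}$" is not literally true, and the cleanest way to settle the hard case $\eta(b)>\eta(b+c)$ is to note that $b\mapsto \eta(b)-\eta(b+c)$ has derivative $\log\frac{b+c}{b}>0$, hence is maximized at $b=1-c$, where it equals $\eta(1-c)\leq\eta(c)$ precisely because $c\leq\frac{1}{2}$ — this is the only place the hypothesis $\tau\leq\frac{1}{2}$ is needed.
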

We give a reminder on Markov's inequality.
\begin{lem}[Markov's Inequality {\cite[Lemma 83]{Ahlswede2014}}]\label{lem:markov}
Let $X$ be a \RV{} with mean $E[X]=\mu$ and let $a$ be a positive number. 
Then
\begin{align*}
Pr\{X\geq a\} & \leq \frac{\mu}{a}.
\end{align*}
\end{lem}
Chernoff bounds are given as follows. 
\begin{lem}[Chernoff bounds, \cite{Cai2013}, {\cite[Lemma 2]{Boche2019a}}]\label{lem:chernoffbounds}
Let $X_1,X_2,...,X_n$ be \IID{} \RVs{} with values in $\{0,1\}$, with $Pr\{X_i=1\}=p$. 
For all $\epsilon\in (0,1)$ and $p_0<p<p_1$, the following bounds hold 
\begin{align}
Pr\left\{\frac{1}{n}\sum_{i=1}^n X_i > (1+\epsilon)p_1\right\} & < \exp_e\left\{-\frac{\epsilon^2}{8}np_1\right\},\\
Pr\left\{\frac{1}{n}\sum_{i=1}^n X_i < (1-\epsilon)p_0\right\} & < \exp_e\left\{-\frac{3\epsilon^2}{8}np_0\right\}.
\end{align}
\end{lem}
The Chernoff-Hoeffding bound is widely used in the proof. 
Therefore, it shall be stated here.
\begin{lem}[Chernoff-Hoeffding bounds, {\cite[Theorem 1.1]{Dubhashi2009},\cite{Ahlswede2002}}]\label{lem:chernoff}
Let $X_1,X_2,...,X_n$ be \IID{} \RVs{} with values in $[0,b]$, where $b$ is a positive number. 
Further, let $E[X_i]=\mu$, and $0<\epsilon<\frac{1}{2}$. Then
\begin{align}
Pr\left\{\frac{1}{n}\sum_{i=1}^n X_i \not \in [(1\pm\epsilon)\mu]\right\} & \leq 2 \exp_e{\left(-n\frac{\epsilon^2\mu}{3b}\right)}\label{eq:Chernoff},
\end{align}
where $[(1\pm\epsilon)\mu]$ means the interval $[(1-\epsilon)\mu,(1+\epsilon)\mu ]$. 
\end{lem}

\section{Typical Sets}
We summarize some known facts of typicality properties. 
Let $\delta > 0$.
\begin{lem}[Properties of typical sets I, {\cite[Lemma 2.13, Problem 2.5]{Csiszar2011}}]\label{lem:typsetI}
Let $x^n \in \mathcal{T}^n_{p,\delta}$. Then for any $W:\mathcal{X}\to \mathcal{P(Y)}$
\begin{align*}
|\mathcal{T}^n_{pW,2|\mathcal{X}|\delta}| & \leq \exp\{n(H(pW)+f_1(\delta))\},\\
W^n(y^n|x^n) & \leq \exp\{-n(H(W|p)-f_2(\delta))\}\quad \forall y^n \in \mathcal{T}^n_{W,\delta}(x^n),
\end{align*}
for some functions $f_1(\delta),f_2(\delta)>0$ with $\lim_{\delta\to 0}f_1(\delta)=0$ and $\lim_{\delta\to 0}f_2(\delta)=0$.
\end{lem}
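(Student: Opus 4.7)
\emph{Proof plan.} Both bounds are standard method-of-types estimates that unpack through direct manipulation of empirical frequencies. I would treat them separately, invoking nothing beyond the definitions of $\mathcal{T}^n_{p,\delta}$ and $\mathcal{T}^n_{W,\delta}(x^n)$ together with the continuity of the entropy functional on the simplex.

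For the cardinality bound, I would set $q:=pW$ and $\epsilon:=2|\mathcal{X}|\delta$, and proceed through the classical route: for any $y^n \in \mathcal{T}^n_{q,\epsilon}$, the identity $q^n(y^n)=\exp\{\sum_b N(b|y^n)\log q(b)\}$ combined with $|N(b|y^n)/n - q(b)| \leq \epsilon$ (and $N(b|y^n)=0$ whenever $q(b)=0$) yields the pointwise lower bound $q^n(y^n) \geq \exp\{-n(H(pW)+f_1(\delta))\}$ for an explicit $f_1(\delta)\to 0$ controlled by the continuity of $-x\log x$. Summing this lower bound over $y^n \in \mathcal{T}^n_{pW,2|\mathcal{X}|\delta}$ and using that the total probability of the set is at most one produces the claimed cardinality bound. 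The inflated slack $2|\mathcal{X}|\delta$ is chosen precisely so that the bound can later be invoked for conditionally typical sequences, since, whenever $x^n\in\mathcal{T}^n_{p,\delta}$ and $y^n\in\mathcal{T}^n_{W,\delta}(x^n)$, the triangle inequality
\begin{align*}
\left|\tfrac{1}{n}N(b|y^n) - (pW)(b)\right| & \leq \sum_a \left|\tfrac{1}{n}N(a,b|x^n,y^n) - W(b|a)\tfrac{1}{n}N(a|x^n)\right| + \sum_a W(b|a)\left|\tfrac{1}{n}N(a|x^n) - p(a)\right|
\end{align*}
furnishes exactly that factor.

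For the conditional-probability bound I would start from the decomposition
\begin{align*}
-\log W^n(y^n|x^n) & = -\sum_{a,b} N(a,b|x^n,y^n)\log W(b|a)
\end{align*}
and substitute $N(a,b|x^n,y^n) = W(b|a)N(a|x^n) + n\eta_{a,b}$ with $|\eta_{a,b}|<\delta$; the second typicality condition ensures that pairs $(a,b)$ with $W(b|a)=0$ contribute nothing under the convention $0\log 0 = 0$. The dominant term evaluates to $n\sum_a (N(a|x^n)/n)\,H(W(\cdot|a))$, which, using $|N(a|x^n)/n - p(a)|<\delta$, lies within $n f_2(\delta)$ of $nH(W|p)$. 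The residual $\sum_{a,b}\eta_{a,b}\log W(b|a)$ contributes a further $O(\delta)$ summand (restricted to the support of $W$, hence harmless), yielding $W^n(y^n|x^n)\leq \exp\{-n(H(W|p)-f_2(\delta))\}$ as required. The only delicate point throughout is the bookkeeping around the zeros of $W$, which is absorbed automatically by the second typicality condition; I do not foresee any real obstacle, since both estimates reduce to explicit moduli of continuity of the entropy, and making $f_1$ and $f_2$ concrete is a short routine computation.
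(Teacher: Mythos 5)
The paper offers no proof of this lemma at all --- it is listed among ``known facts'' and delegated to \cite{Csiszar2011} --- so your argument can only be judged on its own terms, and on those terms it is correct: the triangle-inequality justification of the $2|\mathcal{X}|\delta$ slack, the summing of a pointwise lower bound on $q^n(y^n)$ over $\mathcal{T}^n_{pW,2|\mathcal{X}|\delta}$ against total probability one, and the decomposition $N(a,b|x^n,y^n)=W(b|a)N(a|x^n)+n\eta_{a,b}$ all go through, with the zeros of $q$ and $W$ handled exactly by the second typicality conditions as you say. The one point worth flagging is that your route produces $f_1$ and $f_2$ that scale with $\max_{b:\,q(b)>0}|\log q(b)|$ and $\max_{(a,b):\,W(b|a)>0}|\log W(b|a)|$ respectively, i.e.\ they depend on the smallest positive entries of $pW$ and $W$ and are therefore not uniform over the channel; the textbook proof instead counts type classes and invokes the entropy-continuity estimate (the paper's own Lemma~\ref{lem:csiz}), yielding $f_1,f_2$ that depend only on $|\mathcal{X}|,|\mathcal{Y}|$. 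For the way Lemma~\ref{lem:typsetI} is actually deployed here (a fixed input type $p$ and the fixed best eavesdropper channel $V_{\theta^\ast}$) your non-uniform constants suffice, but the companion Lemma~\ref{lem:typsetIV} is applied under a minimum over the infinite family $\widehat{\widehat{\mathcal{W}}}$, where uniformity genuinely matters, so the type-counting version is the safer one to carry through the paper. A small wording quibble: the correction term in your first bound is controlled by $\log$ of the minimal positive mass of $q$, not by ``the continuity of $-x\log x$''; that continuity is what the type-counting proof uses, not yours.
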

\begin{lem}[Properties of typical sets II, {\cite[Lemma III.1.3]{Shields1996}}]\label{lem:typsetII}
For every $p\in\mathcal{P(X)}$, $W:\mathcal{X}\to \mathcal{P(Y)}$ and $x^n\in \mathcal{X}^n$ 
\begin{align*}
p^n(\mathcal{T}^n_{p,\delta}) & \geq 1 - (n+1)^{|\mathcal{X}|} \exp\{-nc\delta^2\},\\
W^n(\mathcal{T}^n_{W,\delta}(x^n)|x^n)& \geq 1 - (n+1)^{|\mathcal{X}||\mathcal{Y}|} \exp\{-nc\delta^2\}. 
\end{align*}
with $c=\frac{1}{2\ln{2}}$. 
Furthermore, there exists an $n_0$ and a $c^\prime>0$, depending on $|\mathcal{X}|,|\mathcal{Y}|$ and $\delta$, such that for all $n>n_0$ for each $p\in\mathcal{P(X)}$ and $W:\mathcal{X}\to \mathcal{P(Y)}$
\begin{align}
p^n(\mathcal{T}^n_{p,\delta}) & \geq 1 - \exp\{-nc^\prime \delta^2\},\\
W^n(\mathcal{T}^n_{W,\delta}(x^n)|x^n)& \geq 1 - \exp\{-nc^\prime \delta^2\}.
\end{align}
\end{lem}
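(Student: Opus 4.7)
The plan is to derive all four bounds by the standard method of types from \cite{Csiszar2011}. The two ingredients I would use throughout are the type-class probability estimate $p^n(T_q^n)\leq\exp(-nD(q\|p))$ together with its conditional analogue $W^n(T^n_{\widehat{W}}(x^n)|x^n)\leq\exp(-nD(\widehat{W}\|W\,|\,q_{x^n}))$, and Pinsker's inequality $D(q\|p)\geq\|q-p\|_V^2/(2\ln 2)$. Combined with the type-counting bounds $|\mathcal{P}_0^n(\mathcal{X})|\leq(n+1)^{|\mathcal{X}|}$ and the conditional-type count $(n+1)^{|\mathcal{X}||\mathcal{Y}|}$, these directly yield the polynomial-times-exponential form with $c=1/(2\ln 2)$.

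Concretely, for the first bound I would observe that $x^n\notin\mathcal{T}^n_{p,\delta}$ forces at least one letter $a$ with either $|N(a|x^n)/n-p(a)|\geq\delta$, or $p(a)=0$ but $N(a|x^n)>0$; in both cases the type $q$ of $x^n$ satisfies $\|q-p\|_V\geq\delta$ (with $D(q\|p)=\infty$ in the degenerate case). Decomposing into type classes gives
\begin{align*}
1-p^n(\mathcal{T}^n_{p,\delta}) = \sum_{\substack{q\in\mathcal{P}_0^n(\mathcal{X})\\ q\notin\mathcal{T}_{p,\delta}}} p^n(T_q^n) \leq (n+1)^{|\mathcal{X}|}\exp(-nc\delta^2),
\end{align*}
where the right inequality comes from the type-class bound and Pinsker. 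For the conditional bound I would hold $x^n$ fixed and index the analogous sum over the conditional types $\widehat{W}\in\mathcal{P}_0^n(\mathcal{Y}|\mathcal{X})$ of $y^n$ given $x^n$, reducing the claim to the conditional Pinsker inequality applied to $D(\widehat{W}\|W\,|\,q_{x^n})$.

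For the refined bounds valid for $n>n_0$, the only observation needed is that the polynomial prefactor is eventually dominated by any strictly smaller exponential. Writing $(n+1)^{|\mathcal{X}||\mathcal{Y}|}\exp(-nc\delta^2)=\exp(-nc\delta^2+|\mathcal{X}||\mathcal{Y}|\log(n+1))$, I would fix any $c'<c$ and choose $n_0$ depending on $|\mathcal{X}|,|\mathcal{Y}|,\delta$ large enough that $|\mathcal{X}||\mathcal{Y}|\log(n+1)\leq(c-c')n\delta^2$ for all $n>n_0$, producing the single-exponential bound (and the unconditional case is handled in the same way with $|\mathcal{X}|$ in place of $|\mathcal{X}||\mathcal{Y}|$).

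The main obstacle I anticipate is the conditional Pinsker step. The per-letter definition of $\mathcal{T}^n_{W,\delta}(x^n)$ registers deviations on the joint empirical measure $N(a,b|x^n,y^n)/n$ rather than on $W(\cdot|a)$ directly, so I need to check that if $|N(a,b|x^n,y^n)/n-W(b|a)q_{x^n}(a)|\geq\delta$ for some pair $(a,b)$ with $q_{x^n}(a)>0$, then $\|\widehat{W}(\cdot|a)-W(\cdot|a)\|_V\geq\delta/q_{x^n}(a)$, and consequently
\begin{align*}
D(\widehat{W}\|W\,|\,q_{x^n})\geq q_{x^n}(a)\cdot\frac{(\delta/q_{x^n}(a))^2}{2\ln 2}\geq c\delta^2,
\end{align*}
which is uniform in $q_{x^n}(a)$. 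Once this bookkeeping is in place, the type-counting step closes the argument immediately.
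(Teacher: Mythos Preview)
The paper does not prove this lemma; it is listed among the ``known facts from \cite{Csiszar2011}'' and simply cited. Your argument is the standard method-of-types derivation (type-class probability bound plus Pinsker, then absorption of the polynomial prefactor into a slightly smaller exponent), which is precisely how the result is obtained in Csisz\'ar--K\"orner, so your proposal is correct and aligned with the intended source.

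One minor remark on the conditional step: your bookkeeping is fine, and in fact you get a little more than needed. From $|N(a,b|x^n,y^n)/n - W(b|a)q_{x^n}(a)|\geq\delta$ with $q_{x^n}(a)>0$ you obtain $|\widehat{W}(b|a)-W(b|a)|\geq\delta/q_{x^n}(a)$, hence
\[
D(\widehat{W}\|W\,|\,q_{x^n})\;\geq\; q_{x^n}(a)\cdot\frac{1}{2\ln 2}\Bigl(\frac{\delta}{q_{x^n}(a)}\Bigr)^{2}\;=\;\frac{c\,\delta^{2}}{q_{x^n}(a)}\;\geq\;c\,\delta^{2},
\]
exactly as you wrote; the case $q_{x^n}(a)=0$ is vacuous since then $N(a,b|x^n,y^n)=0$ automatically, and the support-violation case gives infinite divergence. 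Nothing further is required.
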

\begin{lem}[Properties of typical sets III, {\cite[Lemma 2.2]{Csiszar2011}}]\label{lem:typsetIII}
Let $\mathcal{P}_0^n(\mathcal{S})$ be the set of all possible types of $n$-length sequences on $\mathcal{S}^n$. 
The cardinality of the set of all possible types of length $n$ is upper bounded by
\begin{align*}
|\mathcal{P}_0^n(\mathcal{S})| & \leq (n+1)^{|\mathcal{S}|}.
\end{align*}
\end{lem}
\begin{lem}[Properties of typical sets IV, {\cite[Lemma 3]{Wyrembelski2010a}}{\cite[Lemma 3]{Bjelakovic2013a}}]\label{lem:typsetIV}
Assume, the distributions $p,\overbar{p} \in \mathcal{P(X)}$ and the two matrices $W,\overbar{W}:\mathcal{X} \to \mathcal{P(Y)}$ are given. For any positive integer $n$ and sufficiently small $\delta>0$,
\begin{align*}
(pW)^n(\mathcal{T}^n_{\overbar{W},\delta}(\overbar{x}^n)) & \leq (n+1)^{|\mathcal{X}||\mathcal{Y}|}\exp\{-n(I(\overbar{p};\overbar{W})-f_3(\delta))\},
\end{align*}
for all $\overbar{x}^n \in \mathcal{T}^n_{\overbar{p},\delta}$ holds, with some $f_3(\delta)>0$ and $\lim_{\delta \to 0}f_3(\delta)=0$. 
Furthermore, there exist an $n_0$ and a $\nu>0$, depending on $|\mathcal{X}|,|\mathcal{Y}|$ and $\delta$, such that for all $n>n_0$,
\begin{align}
(pW)^n(\mathcal{T}^n_{\overbar{W},\delta}(\overbar{x}^n)) & \leq \exp\{-n(I(\overbar{p};\overbar{W}) - \nu))\}.
\end{align}
\end{lem}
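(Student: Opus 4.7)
The plan is to prove this as a large-deviations estimate via the method of types. The core observation is that $(pW)^n$ is the $n$-fold product of the single-letter marginal $pW$ on $\mathcal{Y}$, so for any $y^n\in\mathcal{Y}^n$ of empirical type $q_Y$ one has
\[
(pW)^n(y^n) \;=\; \exp\bigl\{-n[H(q_Y)+D(q_Y\|pW)]\bigr\}.
\]
I would proceed in three steps: first, pin down the marginal type $q_Y$ of any $y^n$ in the target set; second, bound the cardinality of that set; and finally combine the two estimates using the identity $I(\overbar{p};\overbar{W}) = H(\overbar{p}\overbar{W}) - H(\overbar{W}|\overbar{p})$.

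For the first step, if $\overbar{x}^n\in\mathcal{T}^n_{\overbar{p},\delta}$ and $y^n\in\mathcal{T}^n_{\overbar{W},\delta}(\overbar{x}^n)$, summing the conditional-typicality constraint over $a\in\mathcal{X}$ shows that the marginal type of $y^n$ differs from $\overbar{p}\overbar{W}$ by at most order $|\mathcal{X}|\delta$ in variation distance. Using Lemma \ref{lem:csiz} and the boundedness of $-\log pW(\cdot)$ on the support of $pW$, I would deduce
\[
H(q_Y)+D(q_Y\|pW) \;\geq\; H(\overbar{p}\overbar{W}) + D(\overbar{p}\overbar{W}\|pW) - g(\delta),
\]
with $g(\delta)\to 0$ as $\delta\to 0$. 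For the second step, I would invoke the standard method-of-types conditional-type cardinality estimate (as in \cite{Csiszar2011}), namely $|\mathcal{T}^n_{\overbar{W},\delta}(\overbar{x}^n)|\leq \exp\{n(H(\overbar{W}|\overbar{p})+f_1(\delta))\}$, and sum over the at most $(n+1)^{|\mathcal{X}||\mathcal{Y}|}$ joint types consistent with $\overbar{x}^n$ (Lemma \ref{lem:typsetIII}) to absorb the polynomial prefactor. Combining these and dropping the nonnegative term $D(\overbar{p}\overbar{W}\|pW)$ yields the first stated inequality with $f_3(\delta) := f_1(\delta)+g(\delta)$. The second inequality is then immediate: for any $\nu>0$, first choose $\delta$ small enough that $f_3(\delta)<\nu/2$ and then an $n_0$ such that $(n+1)^{|\mathcal{X}||\mathcal{Y}|}\leq \exp\{n\nu/2\}$ for all $n>n_0$, absorbing the polynomial factor into $\exp\{n\nu\}$.

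The main delicate point is the support condition underlying the first step: if $pW(b)=0$ while $\overbar{p}\overbar{W}(b)>0$, the divergence $D(\overbar{p}\overbar{W}\|pW)$ is infinite and the asserted bound is vacuously true, so I would dispose of this case at the outset and assume henceforth that $\mathrm{supp}(\overbar{p}\overbar{W})\subseteq\mathrm{supp}(pW)$. On this effective support the function $-\log pW(\cdot)$ is bounded, and Lemma \ref{lem:csiz} yields the required uniform continuity estimate linking $q_Y$ close to $\overbar{p}\overbar{W}$ with the corresponding inequality for the sum $H(q_Y)+D(q_Y\|pW)$. Everything else is routine bookkeeping of the $\delta$-dependent error terms.
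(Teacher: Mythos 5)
The paper does not prove this lemma at all: it is quoted as a known fact from \cite{Csiszar2011} (it is the standard packing-type estimate, essentially Lemma 2.10 and its corollaries there), so there is no in-paper argument to compare against. Your proof is the standard method-of-types derivation and is correct: writing $(pW)^n(y^n)=\exp\{-n[H(q_Y)+D(q_Y\|pW)]\}$, noting that every $y^n\in\mathcal{T}^n_{\overbar{W},\delta}(\overbar{x}^n)$ with $\overbar{x}^n\in\mathcal{T}^n_{\overbar{p},\delta}$ has marginal type within $O(|\mathcal{X}|\delta)$ of $\overbar{p}\overbar{W}$, bounding the cardinality of the conditionally typical set by $\exp\{n(H(\overbar{W}|\overbar{p})+f_1(\delta))\}$, and dropping the nonnegative divergence term yields exactly the claimed exponent $H(\overbar{p}\overbar{W})-H(\overbar{W}|\overbar{p})=I(\overbar{p};\overbar{W})$ up to $f_3(\delta)$; the polynomial prefactor is then slack and is absorbed for $n>n_0$ as you say. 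Two minor points of hygiene: in the degenerate case $\mathrm{supp}(\overbar{p}\overbar{W})\not\subseteq\mathrm{supp}(pW)$ the correct justification is that the left-hand side is zero for sufficiently small $\delta$ (since $q_Y(b)\geq\overbar{p}\overbar{W}(b)-2|\mathcal{X}|\delta>0$ forces a symbol of $pW$-probability zero to occur in $y^n$), not that the bound is ``vacuously true'' because a divergence is infinite; and the second assertion of the lemma quantifies $\nu$ as a function of $\delta$ rather than the reverse, though your absorption of the polynomial factor works under either reading.
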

\begin{lem}[Properties of typical sets V, {\cite[Lemma 2]{Sarwate2010}}]\label{lem:typicalsetsV}
Let the sequences $x^n\in\mathcal{X}^n$, $s^n\in\mathcal{S}^n$, and  $\delta,\hat{\delta}>0$ be given. 
Further, let $(\Psi,X)$ be distributed according to $p_{\Psi,X}=p_{\Psi}\rho_{X|\Psi}$. 
Define the channel 
\begin{align*}
\theta (s|x) & := \frac{1}{N(x|x^n)}\sum_{i=1}^n\mathds{1}(s_i=s,x_i=x).
\end{align*}
Then,
\begin{align}
Pr\left\{(\Psi^n,x^n,s^n)\notin \mathcal{T}_{p_{\Psi}\times\rho_{X|\Psi}\times\theta,\delta}^n|(\Psi^n,x^n)\in \mathcal{T}_{p_{\Psi}\times\rho_{X|\Psi},\hat{\delta}}^n\right\} & \leq \exp\{-n h(\delta)\},
\end{align}
where $h(\delta)\to 0$ as $\delta\to 0$.
\begin{proof}
Follows for example by {\cite[Lemma 2.10, Lemma 2.12]{Csiszar2011}}.
\end{proof}
\end{lem}
\begin{lem}\label{lem:existenceofaQ}
Let $(\Psi^n,X^n,S^n)\in \mathit{\Psi}^n\times\mathcal{X}^n\times\mathcal{S}^n$ be distributed according to $p_{\Psi}^n\rho_{X|\Psi}^np_{S^n|\Psi^n,X^n}$. 
Let $A$ be defined as the following event. 
\begin{align*}
A &: = \{\nexists \underline{\theta} \in \mathcal{P}_{0}(\mathcal{S}^n|\mathcal{X}^n) :\quad (\Psi^n,X^n,S^n)\in \mathcal{T}_{p_{\Psi}\times\rho_{X|\Psi}\times\underline{\theta},\delta}^n\}.
\end{align*}
Then 
\begin{align*}
Pr\{A\} & \leq \exp\{-n c^\prime \hat{\delta}^2\} + (n+1)^{|\mathcal{X}||\mathcal{S}|}\exp\left\{-n \min_{\theta \in \mathcal{P}_{0}(\mathcal{S}^n|\mathcal{X}^n)}h_{\theta}(\delta)\right\},
\end{align*}
where $h(\delta)\to 0$ if $\delta \to 0$.
\end{lem}
\begin{proof}
\begin{align*}
Pr\{A\} & = Pr\left\{(\Psi^n,X^n)\notin \mathcal{T}_{p_{\Psi}\times\rho_{X|\Psi},\hat{\delta}}^n\right\}Pr\left\{A|(\Psi^n,X^n)\notin \mathcal{T}_{p_{\Psi}\times\rho_{X|\Psi},\hat{\delta}}^n\right\}\\
& \quad + Pr\left\{(\Psi^n,X^n)\in \mathcal{T}_{p_{\Psi}\times\rho_{X|\Psi},\hat{\delta}}^n\right\}Pr\left\{A|(\Psi^n,X^n)\in \mathcal{T}_{p_{\Psi}\times\rho_{X|\Psi},\hat{\delta}}^n\right\}\\
&\overset{(a)}{\leq} Pr\left\{(\Psi^n,X^n)\notin \mathcal{T}_{p_{\Psi}\times\rho_{X|\Psi},\hat{\delta}}^n\right\}\\
&\qquad + \smashoperator[lr]{\sum_{(x^n,s^n)\in \mathcal{X}^n\times\mathcal{S}^n}} p(x^n,s^n) Pr\left\{(\Psi^n,x^n,s^n)\notin \mathcal{T}_{p_{\Psi}\times\rho_{X|\Psi}\times\underline{\theta},\delta}^n|(\Psi^n,x^n)\in \mathcal{T}_{p_{\Psi}\times\rho_{X|\Psi},\hat{\delta}}^n\right\}\\
&\overset{(b)}{\leq} \exp\{-n c^\prime \hat{\delta}^2\}+ \sum_{\substack{\hat{p}\in\mathcal{P}_0(\mathcal{X}^n)\\ \underline{\theta}\in\mathcal{P}_0(\mathcal{S}^n|\mathcal{X}^n)\\(x^n,s^n)\in \mathcal{T}_{\hat{p}\times \underline{\theta}}^n}}\hat{p}(x^n)\underline{\theta}(s^n|x^n)Pr\{(\Psi^n,x^n,s^n)\notin \mathcal{T}_{p_{\Psi}\times\rho_{X|\Psi}\times\underline{\theta},\delta}^n|(\Psi^n,x^n)\in \mathcal{T}_{p_{\Psi}\times\rho_{X|\Psi},\hat{\delta}}^n\}\\
&\overset{(c)}{\leq} \exp\{-n c^\prime \hat{\delta}^2\}+ \sum_{\substack{\hat{p}\in\mathcal{P}_0(\mathcal{X}^n)\\ \underline{\theta}\in\mathcal{P}_0(\mathcal{S}^n|\mathcal{X}^n)}}\exp\{-n h_{\underline{\theta}}(\delta)\}\\
&\overset{(d)}{\leq} \exp\{-n c^\prime \hat{\delta}^2\}+ (n+1)^{|\mathcal{X}||\mathcal{S}|}\exp\left\{-n \min_{\theta \in \mathcal{P}_{0}(\mathcal{S}^n|\mathcal{X}^n)}h_{\theta}(\delta)\right\}.
\end{align*} 
Here, $(a)$ follows by upper bounding $Pr\left\{A|(\Psi^n,X^n)\notin \mathcal{T}_{p_{\Psi}\times\rho_{X|\Psi},\hat{\delta}}^n\right\}$ and $Pr\left\{(\Psi^n,X^n)\in \mathcal{T}_{p_{\Psi}\times\rho_{X|\Psi},\hat{\delta}}^n\right\}$ by $1$. 
Note that $\underline{\theta}$ in $(a)$ is dependent on the sequences $(x^n, s^n)$ according to Lemma \ref{lem:typicalsetsV}, and hence different for different conditional types of $(x^n, s^n)$.
$(b)$ follows because of Lemma \ref{lem:typsetII}, $(c)$ follows because of Lemma \ref{lem:typicalsetsV}, and $(d)$ follows by type counting.
\end{proof}
\section{}\label{proof:cor-boundprefix}
\begin{lem}\label{cor:boundprefix}
For any conditional type $\underline{\theta}\in\mathcal{P}_{0}(\mathcal{S}^n|\mathcal{X}^n)$, define the probability measure $p_{\Psi}\times \rho \times \underline{\theta}$ as 
\begin{align*}
(p_{\Psi}\times \rho \times \underline{\theta})(\psi , x,s) & = p_{\Psi}(\psi) \rho (x|\psi)\underline{\theta}(s|x). 
\end{align*}
Let $\delta>0$ and let $p_{\overline{\Psi X S}}$ be a type fulfilling $p_{\overline{\Psi}}=p_{\Psi}$ and
\begin{align}
||p_{\Psi X S} - p_{\overline{\Psi X S}}||_V &\leq \delta. \label{eq:boundprefixvardistance}
\end{align} 
Moreover, let $\Psi^{\prime n}$ be uniformly distributed on $\mathcal{T}_{p_{\Psi}}^n$. 
Then there exist an $n_0$ and a $\nu$, depending on $|\mathcal{X}|,|\mathcal{Y}|$,$|\mathit{\Psi}|,|\mathcal{S}|$ and $\delta$, such that for all $n>n_0$ we have for any $(x^n,s^n)\in\mathcal{T}_{p_{\overline{X S}}}^n$,
\begin{align*}
E\left[W^{n}\left(\left(\smashoperator[r]{\bigcup_{\underline{\theta}\in \mathcal{P}_0(\mathcal{S}^{n}|\mathcal{X}^n)}}\mathcal{T}^n_{\rho  W_{\underline{\theta}},\delta}(\Psi^{\prime n})\right)\bigg\vert x^{n},s^{n}\right)\right] & \leq \exp\left\{-n\left(\min_{\underline{\theta}\in \mathcal{P}_0(\mathcal{S}^{n}|\mathcal{X}^n)}I(p_{\Psi};\rho W_{\underline{\theta}})- \nu \right)  \right\}\\
& \leq \exp\left\{-n\left(\min_{\underline{\theta}\in \mathcal{P}(\mathcal{S}|\mathcal{X})}I(p_{\Psi};\rho W_{\underline{\theta}})- \nu \right)  \right\}.
\end{align*} 
\end{lem}

\begin{proof}[Proof of Lemma \ref{cor:boundprefix}]
We divide the proof into two steps. 
First we provide an upper bound, and show then secondly that this upper bound holds for arbitrary sequences of the same type.

Let $(\Psi^n,X^{n},S^{n})$ be uniformly distributed according to $ p_{\Psi}^n\times \rho^n \times \theta^n$ and independent of $\Psi^{\prime n}$. 
First, we have 
\begin{align*}
&E\left[W^{n}\left(\left(\smashoperator[r]{\bigcup_{\underline{\theta}\in \mathcal{P}_0(\mathcal{S}^{n}|\mathcal{X}^n)}}\mathcal{T}^n_{\rho W_{\underline{\theta}},\delta}(\Psi^{\prime n})\right)\bigg\vert X^{n},S^{n}\right)\right]\\
& \overset{(a)}{\leq} \sum_{\underline{\theta}\in \mathcal{P}_0(\mathcal{S}^{n}|\mathcal{X}^n)}E\left[W^{n}\left(\mathcal{T}^n_{\rho W_{\underline{\theta}},\delta}(\Psi^{\prime n})\bigg\vert X^{n},S^{n}\right)\right]\\
& \overset{(b)}{=} \sum_{\underline{\theta}\in \mathcal{P}_0(\mathcal{S}^{n}|\mathcal{X}^n)}\sum_{\psi^{\prime n} \in \mathit{\Psi}^n}p_{\Psi^n}(\psi^{\prime n})\smashoperator[lr]{\sum_{(\psi^n x^{n} s^{n})\in \mathit{\Psi}^n\times\mathcal{X}^{n}\times \mathcal{S}^{n}}}p^n_{\Psi X S}(\psi^n, x^{n},s^{n})W^{n}\left(\mathcal{T}^n_{\rho W_{\underline{\theta}},\delta}\left(\Psi^{\prime n}\right)\bigg\vert x^{n},s^{n}\right)\\
& \overset{(c)}{=} \sum_{\underline{\theta}\in \mathcal{P}_0(\mathcal{S}^{n}|\mathcal{X}^n)}\sum_{\psi^{\prime n} \in \mathit{\Psi}^n}p_{\Psi^n}(\psi^{\prime n})\smashoperator[lr]{\sum_{(\psi^n x^{n} s^{n})\in \mathit{\Psi}^n\times\mathcal{X}^{n}\times \mathcal{S}^{n}}}p^n_{\Psi}(\psi^n)\rho^n(x^{n}|\psi^n)\theta^n(s^{n}|x^{n})W^{n}\left(\mathcal{T}^n_{\rho W_{\underline{\theta}},\delta}\left(\psi^{\prime n}\right)\bigg\vert x^{n},s^{n}\right)\\
& \overset{(d)}{=} \sum_{\underline{\theta}\in \mathcal{P}_0(\mathcal{S}^{n}|\mathcal{X}^n)}\sum_{\psi^{\prime n} \in\mathit{\Psi}^n}p_{\Psi^n}(\psi^{\prime n})(p_{\Psi} \rho W_{\theta})^{n}\left(\mathcal{T}^n_{\rho W_{\underline{\theta}},\delta}(\psi^{\prime n})\right)\\
& \overset{(e)}{\leq} \sum_{\underline{\theta}\in \mathcal{P}_0(\mathcal{S}^{n}|\mathcal{X}^n)}\exp{\left\{-n\left(I(p_\Psi;\rho W_{\underline{\theta}}) -\hat{\nu} \right)\right\}}\sum_{\psi^{\prime n} \in \mathit{\Psi}^n}p_{\Psi^n}(\psi^{\prime n})\\
& \overset{(f)}{\leq} (n+1)^{|\mathcal{X}||\mathcal{S}|}\exp{\left\{-n\left(\min_{\underline{\theta}\in \mathcal{P}_0(\mathcal{S}^{n}|\mathcal{X}^n)}I(p_\Psi;\rho W_{\underline{\theta}}) -\hat{\nu} \right)\right\}}\\
& \overset{(g)} {\leq}\exp{\left\{-n\left(\min_{\underline{\theta}\in \mathcal{P}_0(\mathcal{S}^{n}|\mathcal{X}^n)}I(p_\Psi;\rho W_{\underline{\theta}}) - \nu \right)\right\}}
\end{align*}
Here, $(a)$ follows by the union bound. $(b)$ follows by evaluating the expectation. $(c)$ follows by assumption that $(\Psi^n,X^{n},S^{n})$ is uniformly distributed according to $ p_{\Psi}^n\times \rho^n \times \underline{\theta}^n$ and independent of $\Psi^{\prime n}$. $(d)$ follows by expressing the probability function  ${\sum_{(\psi^n x^{n} s^{n})\in \mathit{\Psi}^n\times\mathcal{X}^{n}\times \mathcal{S}^{n}}}p^n_{\Psi}(\psi^n)\rho^n(x^{n}|\psi^n)\theta^n(s^{n}|x^{n})W^{n}\left(\left(\mathcal{T}^n_{\rho W_{\underline{\theta}},\delta}(\psi^{\prime n}\right)\bigg\vert x^{n},s^{n}\right)$ as the output probability function $(p_{\Psi} \rho W_{\theta})^{n}\left(\mathcal{T}^n_{\rho W_{\underline{\theta}},\delta}(\psi^{\prime n})\right)$. $(e)$ follows by Lemma \ref{lem:typsetIV}, $(f)$, and $(g)$ follow by Lemma \ref{lem:typsetIII}.

Next, assume that $(\Psi^n,X^{n},S^{n})$ is uniformly distributed on $\mathcal{T}_{p_{\overline{\Psi X S}}}^n$. 
We will show that the above inequality also holds in this case up to small terms. 
Due to (\ref{eq:boundprefixvardistance}) and Lemma \ref{lem:csiz}, we have 
\begin{align*}
H(p_{\overline{\Psi X S}}) & \geq H(p_{\Psi X S}) + \delta \log \frac{\delta}{|\mathit{\Psi}||\mathcal{X}||\mathcal{S}|}\\
& = : H(p_{\Psi X S}) + \delta^\prime.
\end{align*}
Furthermore, because of (\ref{eq:boundprefixvardistance}), we have $\mathcal{T}_{p_{\overline{\Psi X S}}}^n\subset \mathcal{T}_{p_{\Psi X S},\delta}^n$.
Hence, for any nonnegative function $f(\psi^n,x^n,s^n)$, we have
\begin{align*}
E[f(\Psi^n,X^n,S^x)] & =\sum_{(\psi^n,x^n,s^n)\in \mathcal{T}_{p_{\overline{\Psi X S}}}^n}p_{\overline{\Psi X S}}^n(\psi^n,x^n,s^n)f(\psi^n,x^n,s^n)\\
& = \frac{1}{|\mathcal{T}_{p_{\overline{\Psi X S}}}^n|}\sum_{(\psi^n,x^n,s^n)\in \mathcal{T}_{p_{\overline{\Psi X S}}}^n}f(\psi^n,x^n,s^n)\\
& \leq (n+1)^{|\mathit{\Psi}||\mathcal{X}||\mathcal{S}|} \exp\{-nH(p_{\overline{\Psi X S}})\}\sum_{(\psi^n,x^n,s^n)\in \mathcal{T}_{p_{\overline{\Psi X S}}}^n}f(\psi^n,x^n,s^n)\\
& \leq (n+1)^{|\mathit{\Psi}||\mathcal{X}||\mathcal{S}|} \exp\{-n(H(p_{\Psi X S})-\delta^\prime)\}\sum_{(\psi^n,x^n,s^n)\in \mathcal{T}_{p_{\Psi X S},\delta}^n}f(\psi^n,x^n,s^n)\\
& \leq (n+1)^{|\mathit{\Psi}||\mathcal{X}||\mathcal{S}|} \exp\{n\delta^{\prime\prime}\}\sum_{(\psi^n,x^n,s^n)\in \mathcal{T}_{p_{\Psi X S},\delta}^n}p^n_{\Psi}(\psi^n)\rho^n(x^{n}|\psi^n)\theta^n(s^{n}|x^{n})f(\psi^n,x^n,s^n)\\
& \leq (n+1)^{|\mathit{\Psi}||\mathcal{X}||\mathcal{S}|} \exp\{n\delta^{\prime\prime}\}\sum_{(\psi^n,x^n,s^n)\in \mathit{\Psi}^n\times\mathcal{X}^{n}\times \mathcal{S}^{n}}p^n_{\Psi}(\psi^n)\rho^n(x^{n}|\psi^n)\theta^n(s^{n}|x^{n})f(\psi^n,x^n,s^n).
\end{align*} 
With 
\begin{align*}
f(\psi^n,x^n,s^n) & = \sum_{\underline{\theta}\in \mathcal{P}_0(\mathcal{S}^{n}|\mathcal{X}^n)}\sum_{\psi^{\prime n} \in \mathit{\Psi}^n}p_{\Psi^n}(\psi^{\prime n})W^{n}\left(\left(\mathcal{T}^n_{\rho W_{\underline{\theta}},\delta}(\Psi^{\prime n}\right)\bigg\vert x^{n},s^{n}\right), 
\end{align*}
this shows
\begin{align*}
E\left[W^{n}\left(\left(\smashoperator[r]{\bigcup_{\underline{\theta}\in \mathcal{P}_0(\mathcal{S}^{n}|\mathcal{X}^n)}}\mathcal{T}^n_{\rho W_{\underline{\theta}},\delta}(\Psi^{\prime n})\right)\bigg\vert X^{n},S^{n}\right)\right] & \leq \exp\left\{-n\left(\min_{\underline{\theta}\in \mathcal{P}_0(\mathcal{S}^{n}|\mathcal{X}^n)}I(p_{\Psi};\rho W_{\underline{\theta}})- \nu \right)  \right\}.
\end{align*} 

Secondly, for an arbitrary permutation of the index set $\{1,2,...,n\}$ we have per definition
\begin{align*}
\pi\left(\mathcal{T}^n_{\rho W_{\underline{\theta}},\delta}(\psi^{\prime n})\right) & := \left\{\pi(y^n)\in \mathcal{Y}^n:\quad \left|\frac{1}{n}N(a,b|\psi^{\prime n},y^n) - \rho W_{\underline{\theta}}(b|a)\frac{1}{n}N(a|\psi^{\prime n})\right|\leq \delta , \forall a\in\mathit{\Psi}, b\in\mathcal{Y} \right\}\\
& = \left\{y^n\in \mathcal{Y}^n:\quad \left|\frac{1}{n}N(a,b|\psi^{\prime n},\pi^{-1}(y^n)) - \rho  W_{\underline{\theta}}(b|a)\frac{1}{n}N(a|\psi^{\prime n})\right|\leq \delta , \forall a\in\mathit{\Psi} ,b\in\mathcal{Y} \right\}\\
& = \left\{y^n\in \mathcal{Y}^n:\quad \left|\frac{1}{n}N(a,b|\pi(\psi^{\prime n}),y^n) - \rho  W_{\underline{\theta}}(b|a)\frac{1}{n}N(a|\pi(\psi^{\prime n}))\right|\leq \delta , \forall a\in\mathit{\Psi}, b\in\mathcal{Y} \right\}\\
& =:\mathcal{T}^n_{\rho W_{\underline{\theta}},\delta}\left(\pi(\psi^{\prime n})\right).
\end{align*}
Therefore, for a $(\tilde{x}^n,\tilde{s}^n)$ with $(\psi^n,\tilde{x}^n,\tilde{s}^n)\in \mathcal{T}_{p_{\Psi XS}}^n$ and an arbitrary permutation $\pi$, we have
\begin{align*}
E_{\Psi^{\prime n}}\left[W^n\left(\left(\smashoperator[r]{\bigcup_{\underline{\theta}\in \mathcal{P}_0(\mathcal{S}^{n}|\mathcal{X}^n)}}\mathcal{T}^n_{\rho W_{\underline{\theta}},\delta}(\Psi^{\prime n})\right)\bigg\vert \tilde{x}^n,\tilde{s}^n\right)\right] & =\sum_{\psi^{\prime n} \in \mathcal{T}_{P}^n}p_{\Psi^n}(\psi^{\prime n})W^n\left(\left(\smashoperator[r]{\bigcup_{\underline{\theta}\in \mathcal{P}_0(\mathcal{S}^{n}|\mathcal{X}^n)}}\mathcal{T}^n_{\rho W_{\underline{\theta}},\delta}(\Psi^{\prime n})\right)\bigg\vert \tilde{x}^n,\tilde{s}^n\right)\\
& = \sum_{\psi^{\prime n} \in \mathcal{T}_{p}^n}p_{\Psi^n}(\psi^{\prime n})W^n\left(\left(\smashoperator[r]{\bigcup_{\underline{\theta}\in \mathcal{P}_0(\mathcal{S}^{n}|\mathcal{X}^n)}}\pi\left(\mathcal{T}^n_{\rho W_{\underline{\theta}},\delta}(\psi^{\prime n})\right)\right)\bigg\vert \pi(\tilde{x}^n,\tilde{s}^n)\right)\\
& = \sum_{\psi^{\prime n} \in \mathcal{T}_{p}^n}p_{\Psi^n}(\psi^{\prime n})W^n\left(\left(\smashoperator[r]{\bigcup_{\underline{\theta}\in \mathcal{P}_0(\mathcal{S}^{n}|\mathcal{X}^n)}}\mathcal{T}^n_{\rho W_{\underline{\theta}},\delta}\left(\pi(\psi^{\prime n})\right)\right)\bigg\vert \pi(\tilde{x}^n,\tilde{s}^n)\right)\\
&\overset{(a)}{=} \sum_{\psi^{\prime n} \in \mathcal{T}_{p}^n}p_{\Psi^n}(\psi^{\prime n})W^n\left(\left(\smashoperator[r]{\bigcup_{\underline{\theta}\in \mathcal{P}_0(\mathcal{S}^{n}|\mathcal{X}^n)}}\mathcal{T}^n_{\rho W_{\underline{\theta}},\delta}\left(\psi^{\prime n}\right)\right)\bigg\vert \pi(\tilde{x}^n,\tilde{s}^n)\right)\\
& = E_{\Psi^{\prime n}}\left[W^n\left(\left(\smashoperator[r]{\bigcup_{\underline{\theta}\in \mathcal{P}_0(\mathcal{S}^{n}|\mathcal{X}^n)}}\mathcal{T}^n_{\rho W_{\underline{\theta}},\delta}(\Psi^{\prime n})\right)\bigg\vert \pi(\tilde{x}^n,\tilde{s}^n)\right)\right],
\end{align*}
where $(a)$ follows because we sum up over all $\psi^{\prime n}$ with the same type\footnote{Types are permutation invariant.} (hence, $p_{\Psi^n}(\psi^{\prime n})$ is identical for all $\psi^{\prime n}$ of the same type).

Hence, we can rewrite the expectation as 
\begin{align*}
& E\left[W^{n}\left(\left(\smashoperator[r]{\bigcup_{\underline{\theta}\in \mathcal{P}_0(\mathcal{S}^{n}|\mathcal{X}^n)}}\mathcal{T}^n_{\rho W_{\underline{\theta}},\delta}(\Psi^{\prime n})\right)\bigg\vert X^{n},S^{n}\right)\right]\\
& = \sum_{(\psi^n,x^n,s^n)\in \mathcal{T}_{p_{\Psi XS}}^n}p_{\Psi^nX^nS^n}(\psi^n, x^n,s^n)E_{\Psi^{\prime n}}\left[W\left(\left(\smashoperator[r]{\bigcup_{\underline{\theta}\in \mathcal{P}_0(\mathcal{S}^{n}|\mathcal{X}^n)}}\mathcal{T}^n_{\rho W_{\underline{\theta}},\delta}(\Psi^{\prime n})\right)\bigg\vert x^n,s^n\right)\right] \\
& = E\left[W\left(\left(\smashoperator[r]{\bigcup_{\underline{\theta}\in \mathcal{P}_0(\mathcal{S}^{n}|\mathcal{X}^n)}}\mathcal{T}^n_{\rho W_{\underline{\theta}},\delta}(\Psi^{\prime n})\right)\bigg\vert \tilde{x}^n,\tilde{s}^n\right)\right],
\end{align*}
for all $(\psi^n,\tilde{x}^n,\tilde{s}^n)\in \mathcal{T}_{p_{\Psi XS}}^n$.

\end{proof}
\section{Proof of Lemma \ref{rem:theFunc}}\label{proof:theFunc}
\begin{proof}[Proof of Lemma \ref{rem:theFunc}]
We consider both, the error probability and the information leakage. 
Let the maximum error probability and the information leakage, respectively, be given as 
\begin{align*}
&\hat{e}(\mathcal{K}_n) :=  \max_{\substack{f\in\mathcal{F}}}\max_{j\in\mathcal{J}_n} \sum_{x^n\in \mathcal{X}^n}E(x^n|j)W^n(\mathcal{D}_{j}^c|x^n,f(x^n)),\\
&\lim_{n\to \infty}\max_{\substack{f\in\mathcal{F}}}\max_{u\in \mathcal{U}_n}I(p_{J_n};E_uV_{f}^n) = 0
\end{align*}
Using the same $(n,J_n)$ deterministic wiretap code $\mathcal{K}_n$, fulfilling the above criteria  and considering now the maximization over $\theta \in \mathcal{P}(\mathcal{S}^n|\mathcal{X}^n)$ we can express the maximum error probability of transmitting one codeword as 
\begin{align*}
\max_{j\in\mathcal{J}_n}\sum_{x^n\in \mathcal{X}^n}E(x^n|j)W_{\theta}^n(\mathcal{D}_{j}^c|x^n)&=\max_{j\in\mathcal{J}_n}\sum_{x^n\in \mathcal{X}^n}\sum_{s^n\in \mathcal{S}^n}E(x^n|j)\theta(s^n|x^n)W^n(\mathcal{D}_{j}^c|x^n,s^n),
\end{align*}
and hence we have

\begin{align*}
\max_{j\in\mathcal{J}_n}\sum_{x^n\in \mathcal{X}^n}\sum_{s^n\in \mathcal{S}^n}E(x^n|j)\theta(s^n|x^n)W^n(\mathcal{D}_{j}^c|x^n,s^n)&\leq \max_{\substack{f\in\mathcal{F}}}\max_{j\in\mathcal{J}_n}\sum_{x^n\in \mathcal{X}^n}E(x^n|j)W^n(\mathcal{D}_{j}^c|x^n,f(x^n))\\
&\leq\max_{\theta\in \mathcal{P}(\mathcal{S}^n|\mathcal{X}^n)}\max_{j\in\mathcal{J}_n} \sum_{x^n\in \mathcal{X}^n}E(x^n|j)W_{\theta}^n(\mathcal{D}_{j}^c|x^n)\\
&=\hat{e}(\mathcal{K}_n)
\end{align*}
Since the mutual information is convex (row convex) with respect to the channel for fixed input/ input distribution, the optimal jamming strategy with respect to the reliability constraint is achieved at the boundary of the probability polytope,i.e., is deterministic, {\cite[Proposition 2.4.1]{Bertsekas2009}}. 
Hence, even though the set of stochastic jamming strategies is larger than the set of deterministic jamming strategies, both will lead to the same error expression.

Since 
\begin{align*}
E_uV_{f}^n  &=\sum_{x^n\in \mathcal{X}^n}E_u(x^n|j)V^n(z^n|x^n,f(x^n)),\\
V_{\theta}^n  &= \sum_{s^n\in \mathcal{S}^n}\theta(s^n|x^n)V^n(z^n|x^n,s^n),\\
E_uV_{\theta}^n  &=\sum_{x^n\in \mathcal{X}^n}E_u(x^n|j)\sum_{s^n\in \mathcal{S}^n}\theta(s^n|x^n)V^n(z^n|x^n,s^n), 
\end{align*}
for the leakage we can show that 
\begin{align*}
\max_{\substack{f\in\mathcal{F}}}I(p_{J_n};E_uV_{f}^n) &= \max_{\theta\in \mathcal{P}(\mathcal{S}^n|\mathcal{X}^n)}I(p_{J_n};E_uV_{\theta^n}^n)
\end{align*}

because the mutual information is convex in $V^n(z^n|x^n,s^n)$ for fixed input distribution. 
Hence, taking convex combinations of $V^n(z^n|x^n,s^n)$ does not increase the leakage term. 
Using Jensen's inequality and the fact that each value of $I(p_{J_n};E_uV_{f}^n)$ can also be achieved by $I(p_{J_n};E_uV_{\theta^n}^n)$, since the deterministic mappings $\mathcal{F}$ are a subset of the stochastic mappings $\mathcal{P}(\mathcal{S}^n|\mathcal{X}^n)$, $\mathcal{F} \subset \mathcal{P}(\mathcal{S}^n|\mathcal{X}^n)$, the equality is established, \cite{MolavianJazi2009}.
\end{proof}
\section{Proof of Theorem \ref{th:general}}\label{proof:th-general}

The extension from the standard \AVWC{} to the case where the jammer knows additionally the channel input is not trivial. 
When using standard proof techniques from the \AVWC , the jammer might be able to locate a channel input $x^n$ to a specific deterministic wiretap codebook $\mathcal{K}_n$. 
This automatically leads to the consideration of the deterministic code secrecy capacity of an \AVWC{} under the maximum error criterion. 
Even without secrecy constraints, this problem remains unsolved, \cite{Ahlswede1978,Noetzel2016}. 
To ensure that the confusion at the jammer with respect to the used codebook is sufficiently high, even if the channel input $x^n$ is non-causally known, we fulfill an additional requirement in contrast to the standard \AVWC .
The used codewords $x^n$ occur in multiple codebooks $\mathcal{K}_{n,\mathcal{U}_n}$, where $\mathcal{U}_n$ is the set of codebooks containing $x^n$ as codeword.  

We use random coding arguments as in \cite{Boche2019a} and generate random sets of deterministic wiretap codebooks. 
Note that we have to take into account that the jammer possesses non-causal knowledge about the channel input (and we allow knowledge of the messages, since we consider the maximum error), which results in a different error probability. 
For the prefixing we follow {\cite[Lemma 4 and its proof]{Csiszar1978}}, or {\cite[p.97, Addition of prefix channel]{Bloch2011a}} with slight modifications.
In the original system model (Figure \ref{fig:SysMod}) the jammer knows the channel input $X_u^n$. 
If we concatenate a channel with the \AVWC , and call the prefix variable $\Psi_u^n$, then the jammer does not know the channel input $\Psi_u^n$ of the concatenated channel but an intermediate variable $X^{n}$, which is in fact the channel input of the original channel. 
However, we adopt the codebook generation and decoding regions according to the concatenated channels $\rho W$ and $\rho V$, respectively, with
\begin{align*}
\rho W & = \sum_{x\in\mathcal{X}}\rho(x|\psi)W(y|x,s)\\
\rho V & = \sum_{x\in\mathcal{X}}\rho(x|\psi)V(z|x,s).
\end{align*} 
%
For the secrecy  analysis, we have to show that the leakage to the eavesdropper vanishes asymptotically. 
For the leakage analysis, we consider the mutual information $I(p_{J_n};E_u\rho V_{\theta^\ast}^n)$ 
Last, we show that the probability of obtaining codes for which both the decoding error probability and the leakage vanish asymptotically approaches one. 
For the converse, we modify the standard converse proof for the \WTC .

\subsection{Codebook Generation}

We assume that for all $u \in \mathcal{U}_n$, $p_U(u)=\frac{1}{|\mathcal{U}_n|}$.
Let $p \in \mathcal{P}(\Psi)$ be given. 
Partition the set of typical sequences $\mathcal{T}_{p,\delta}^n$ into disjoint subsets $C_{(j,l)}$ of size $|C_{(j,l)}|= \frac{|\mathcal{T}_{p,\delta}^n|}{|\mathcal{J}_n||\mathcal{L}_n|}$.
Here $j\in \mathcal{J}_n=\{1,2,\dots,J_n\}$ and $l\in \mathcal{L}_n=\{1,2,\dots,L_n\}$ correspond to the secure and confusing messages, respectively. 
We have $J_n\cdot L_n = \exp{\{nR\}}$, and the transmission rate $R$ will determined later.
Let the random variable $\Psi_{ujl}^n$ denote the codeword for the message pair $(j,l)\in\mathcal{J}_n\times \mathcal{L}_n$, if the \CR{} has the realization $U=u$. 
The codewords $\Psi_{ujl}^n$ and $\Psi_{u(jl)^\prime}^n$ are independent of each other for all $(j,l)\neq (j,l)^\prime$.
Let $\hat{\chi} :=\{\Psi_{ujl}^n: j\in\mathcal{J}_n, l\in\mathcal{L}_n, u\in \mathcal{U}_n\}$ be the family of \RV , representing the random codewords.
We start by generating a deterministic wiretap code for each $u\in \mathcal{U}_n$ (still random in terms of random coding arguments).  
To indicate that each codebook at this point is a random variable, we add the argument $\hat{\chi}$.
For each codebook $\mathcal{K}_{n,u}(\hat{\chi})$, we draw $J_n\cdot L_n$ codewords $\Psi_{ujl}^n$ uniformly from the subsets $C_{(j,l)}$.
For each $\Psi_{ujl}^n$ we generate the conditional typical set $\mathcal{T}_{\rho,\delta}^n(\Psi_{ujl}^n)$ and choose randomly $X^{n}$ uniformly distributed over $\mathcal{T}_{\rho,\delta}^n(\Psi_{ujl}^n)$ as the channel input.

%
\subsection{Decoding regions}
Let $\hat{\mathcal{D}}_{ujl}^\prime (\hat{\chi})$ be given as 
\begin{align*}
\hat{\mathcal{D}}_{ujl}^\prime(\hat{\chi}) & = \bigcup_{\underline{\theta} \in \mathcal{P}_0(\mathcal{S}^{n}|\mathcal{X}^n)}\mathcal{T}_{{\rho W_{\underline{\theta}}},\delta}^n(\Psi^n_{ujl}).
\end{align*}
with\footnote{Note that $\underline{\theta}(s|x),x\in\mathcal{X},s\in\mathcal{S}$ is a single letter distribution  on the set of all possible conditional types of $s^n$ given $x^n$.} $(\rho W_{\underline{\theta}})(y|\psi)=\sum_{\substack{x\in\mathcal{X}\\s\in\mathcal{S}}}\rho(x|\psi) \underline{\theta}(s|x) W)(y|x,s)$.

Then, we can define the decoding sets $\hat{\mathcal{D}}_{ujl}(\hat{\chi})$ as follows.
\begin{align}
\hat{\mathcal{D}}_{ujl}(\hat{\chi}) & =\hat{\mathcal{D}}_{ujl}^\prime (\hat{\chi})\bigcap \left(\bigcup_{\substack{(jl)^\prime\in \mathcal{J}_n\times \mathcal{L}_n\\(jl)^\prime\neq (jl)}}\hat{\mathcal{D}}_{u(jl)^\prime}^\prime (\hat{\chi})\right)^c
\end{align}
\subsection{Codebook properties for reliability}\label{par:properties_for_reliability}
%
%
As already mentioned, we have to make sure, that every codeword occurs in multiple codebooks. 
By generating the codebooks $\mathcal{K}_{n,u}(\hat{\chi})$, $u\in \mathcal{U}_n$ as above, there are   at most
\begin{align*}
\frac{|\mathcal{T}_{p,\delta}^n|}{J_n\cdot L_n} & = \exp{\{n(H(\Psi)- R +\epsilon_1(n))\}}
\end{align*}
nonoverlapping codebooks in the worst case, where $R$ corresponds to the code rate of a code with $J_n\cdot L_n$ messages. 
Intuitively, to ensure the occurrence of each codeword in $k$ codebooks (on average), we should use an amount of \CR{} which corresponds roughly to 
\begin{align*}
|\mathcal{U}_n|& \geq k \exp{\{n(H(\Psi)- R +\epsilon_1(n))\}}.
\end{align*}
Later, we will derive a lower bound on the amount of \CR{}, explicitly.
We follow and extend the ideas of \cite{Bjelakovic2013,Bjelakovic2013a} and \cite{Boche2019a}.
Here, in contrast to the classical \DMC , we have three error terms:
\begin{itemize}
\item given the received sequence $Y^{n}$, we do not find sequences $\Psi_{ujl}^n$ and a channel input $X^{n}\in \mathcal{T}_{\rho,\delta}^n(\Psi_{ujl}^n)$, such that $Y^{n}$ is conditional typical given $\Psi_{ujl}^n$ and $X^{n}\in \mathcal{T}_{\rho,\delta}^n(\Psi_{ujl}^n)$,
\item given the received sequence $Y^{n}$ which is conditional typical given the codeword $\Psi_{ujl}^n$ and the channel input $X^{n}\in \mathcal{T}_{\rho,\delta}^n(\Psi_{ujl}^n)$, we find another codeword $\Psi_{u(jl)^\prime}^n$ and channel input $X^{\prime n}\in \mathcal{T}_{\rho,\delta}^n(\Psi_{u(jl)^\prime}^n)$, such that $Y^{n}$ is conditional typical given $\Psi_{u(jl)^\prime}^n$ and $X^{\prime n}\in \mathcal{T}_{\rho,\delta}^n(\Psi_{u(jl)^\prime}^n)$,
\item  given the received sequence $Y^{n}$, there exist \CR{} realizations $u$, such that for some messages $(j,l)\in\mathcal{J}_n\times\mathcal{L}_n$, the codeword $\Psi_{ujl}^n=\psi^n$, the channel input $X^{n}\in \mathcal{T}_{\rho,\delta}^n(\Psi_{ujl}^n)$, $X^{n}=x^{n}$, and the state sequence $S^n=s^n$, the probability of $Y^{n}\in \hat{\mathcal{D}}_{ujl}^c(\hat{\chi})$ is lower bounded by some $\lambda$.
\end{itemize}
Since we apply random codes, we do actually not know which codebook realizations (in terms of random coding arguments) lead to a good error performance. 
But we know that the error probability vanishes averaged over a set of codebooks. 
Since the codewords occur in multiple codebooks, we have to take care of the situation that the codewords perform well in some codebooks, but not so well in others.

First, let us fix a pair $(j,l)\in \mathcal{J}_n\times\mathcal{L}_n$.
Randomly pick and fix the sequences $\psi^n \in \mathcal{C}_{(j,l)}$, $x^{n} \in \mathcal{T}_{\rho,\delta}^n(\psi^n)$ and $s^{n}\in\mathcal{S}^{n}$. 
The probability, that $\exists \underline{\theta}\in\mathcal{P}_0(\mathcal{S}^n|\mathcal{X}^n): (\psi^n,x^n,s^n)\in\mathcal{T}_{p_{\Psi}\times\rho\times\underline{\theta},\delta}$ is close to one according to Lemma \ref{lem:existenceofaQ}. 
For now, assume that $\exists \underline{\theta}\in\mathcal{P}_0(\mathcal{S}^n|\mathcal{X}^n): (\psi^n,x^n,s^n)\in\mathcal{T}_{p_{\Psi}\times\rho\times\underline{\theta},\delta}$
We have to show that if the sequence $\psi^n$ is a codeword (occurring in multiple codebooks), then the state sequence is bad only for few codebooks, such that averaged over all codebooks, the error probability still vanishes.
This has to hold for all pairs $(j,l)$, sequences $\psi^n \in \mathcal{C}_{(j,l)}$, $x^{n} \in \mathcal{T}_{\rho,\delta}^n(\psi^n)$, and $s^{n}\in\mathcal{S}^{n}$ for which there exists $\underline{\theta}\in\mathcal{P}_0(\mathcal{S}^n|\mathcal{X}^n): (\psi^n,x^n,s^n)\in\mathcal{T}_{p_{\Psi}\times\rho\times\underline{\theta},\delta}$. 
%
%
We now can define the sets $\mathcal{U}(j,l,\psi^n,x^{n},\hat{\chi})$ and $\mathcal{U}_0(j,l,\psi^n,x^{n},s^{n}, \hat{\chi})$ as
\begin{align*}
\mathcal{U}(j,l,\psi^n,x^{n},\hat{\chi})&:=\left\{u:\quad \Psi_{ujl}^{n}=\psi^{n},~ X^{n}=x^{n} \right\},\\
\mathcal{U}_0(j,l,\psi^n,x^{n},s^{n},\hat{\chi})&:=\left\{u:\quad \Psi_{ujl}^{n}=\psi^{n},~ X^{n}=x^{n}, \text{ and }  W^{n}(\hat{\mathcal{D}}_{ujl}^c(\hat{\chi})|x^{n},s^{n})>\lambda\right\}.
\end{align*}

Here, $\mathcal{U}(j,l,\psi^n,x^{n},\hat{\chi})$ denotes the set of all codebooks, for which the sequence $\psi^n$ is the codeword for the message pair $(j,l)$ and $x^{n}$ is the corresponding channel input, and $\mathcal{U}_0(j,l,\psi^n,x^{n},s^{n},\hat{\chi})$ is the set of all codebooks,for which the sequence $\psi^n$ is the codeword for the message pair $(j,l)$, $x^{n}$ is the corresponding channel input, and the error bound $\lambda$ is not met.


We can define the binary random variable $B(u,j,l,\psi^n,x^{n},\hat{\chi})$ as 
\begin{align}
B(u,j,l,\psi^n,x^{n},\hat{\chi}) & = \begin{cases}
1&\text{ if} \quad u\in \mathcal{U}(j,l,\psi^n,x^{n},\hat{\chi})\\
0&\text{ else.}  
\end{cases}\\
Pr\{B(u,j,l,\psi^n,x^{n},\hat{\chi})=1\} & = Pr\{\Psi_{ujl}^n=\psi^n\} Pr\{X^{n}=x^{n}|\Psi_{ujl}^n=\psi^n\}\\
& = \frac{1}{| \mathcal{C}_{(j,l)}|} \frac{1}{|\mathcal{T}_{\rho,\delta}^n(\psi^n)|}, \quad \forall u\in \mathcal{U}_n, \quad \forall (j,l) \in \mathcal{J}_n\times \mathcal{L}_n.
\end{align}

It indicates whether the sequences $\psi^n$ and $x^{n}$ are the prefix variable and the channel input realizations  for the codebook realization $u$ and the message pair $(j,l)$.
By the Chernoff bound we obtain
\begin{align*}
&Pr\left\{|\mathcal{U}(j,l,\psi^n,x^{n},\hat{\chi})|\leq (1-\epsilon_2)|\mathcal{U}_n|Pr\{B(u,j,l,\psi^n,x^{n},\hat{\chi})=1\}\right\} \\
&= Pr\left\{\sum_{u\in\mathcal{U}_n}B(u,j,l,\psi^n,x^{n},\hat{\chi})\leq (1-\epsilon_2)|\mathcal{U}_n|Pr\{B(u,j,l,\psi^n,x^{n},\hat{\chi})=1\}\right\}\\
& \leq \exp_e{\left\{-\frac{3\epsilon_2^2 n|\mathcal{U}_n|J_n\cdot L_n \exp\{-n(H(X|\Psi)+\delta)\}}{8|\mathcal{T}^n_{p,\delta}|}\right\}}\\
& \leq \exp_e{\left\{-\frac{3}{8}\epsilon_2^2 n|\mathcal{U}_n| \exp\{-n(H(X,\Psi)-R +\tilde{\delta})\}\right\}}.
\end{align*}
Next, we will upper bound the probability that $|\mathcal{U}_0(j,l,\psi^n,x^{n},s^{n},\hat{\chi})|$ exceeds its expected value. 
We define the binary random variable $\tilde{B}(j,l,\psi^n,x^{n},s^{n},u,\lambda,\hat{\chi})$ as 
\begin{align}
\tilde{B}(j,l,\psi^n,x^{n},s^{n},u,\lambda,\hat{\chi}) & = \begin{cases}
1&\text{ if} \quad u\in \mathcal{U}_0(j,l,\psi^n,x^{n},s^{n},\hat{\chi})\\
0&\text{ else.}  
\end{cases}\\
Pr\left\{\tilde{B}(j,l,\psi^n,x^{n},s^{n},u,\lambda,\hat{\chi})=1\right\}& =Pr\{B(u,j,l,\psi^n,x^{n},\hat{\chi})=1\}\cdot\\
&\qquad \cdot Pr\left\{W^{n}(\hat{\mathcal{D}}_{ujl}^c(\hat{\chi})|x^{n},s^{n})>\lambda|B(u,j,l,\psi^n,x^{n},\hat{\chi})=1\right\}.
\end{align}

It indicates whether the sequences $\psi^n$ and $x^{n}$ are the prefix variable and the channel input realizations  for the codebook realization $u$ and the message pair $(j,l)$, and the error bound $\lambda$ is not met.

We consider the case that the error bound is not met for a fixed $u\in \mathcal{U}_n$.  
By the Markov inequality Lemma \ref{lem:markov} and by Lemma \ref{cor:boundprefix} we have
\begin{align*}
&Pr\left\{W^{n}(\hat{\mathcal{D}}_{ujl}^c(\hat{\chi})|x^{n},s^{n})>\lambda|B(u,j,l,\psi^n,x^{n},\hat{\chi})=1\right\}\\
& \overset{(a)}{\leq} \frac{E\left[W^{n}(\hat{\mathcal{D}}_{ujl}^c(\hat{\chi})|x^{n},s^{n})|B(u,j,l,\psi^n,x^{n},\hat{\chi})=1\right]}{\lambda}\\
& \leq \frac{E\left[\left(W^{n}\left(\hat{\mathcal{D}}_{ujl}^{\prime c}(\hat{\chi}) |x^{n},s^{n}\right)  + W^{n}\left(\bigcup_{\substack{(j,l)^\prime \in \mathcal{J}_n\times\mathcal{L}_n\\(j,l)^\prime \neq (j,l)}}\hat{\mathcal{D}}_{u(jl)^\prime}^\prime(\hat{\chi})|x^{n},s^{n}\right)\right)\Big| B(u,j,l,\psi^n,x^{n},\hat{\chi})=1\right]}{\lambda}\\
& \overset{(b)}{\leq}\frac{\exp\{-n c^\prime {\delta^\prime}^2\}}{\lambda} + \frac{\sum_{\substack{(j,l)^\prime \in \mathcal{J}_n\times\mathcal{L}_n\\(j,l)^\prime \neq (j,l)}}E\left[\left( W^{n}\left(\hat{\mathcal{D}}_{u(jl)^{\prime}}^\prime(\hat{\chi})|x^{n},s^{n}\right)\right)\Big| B(u,j,l,\psi^n,x^{n},\hat{\chi})=1\right]}{\lambda}\\
& \leq \frac{\exp\{-n c^\prime {\delta^\prime}^2\}}{\lambda} + \frac{\sum_{\substack{(j,l)^\prime \in \mathcal{J}_n\times\mathcal{L}_n\\(j,l)^\prime \neq (j,l)}}E\left[W^{n}\left(\left(\smashoperator[r]{\bigcup_{\underline{\theta}\in \mathcal{P}_0(\mathcal{S}^{n}|\mathcal{X}^n)}}\mathcal{T}^n_{\rho \underline{\theta} W,\delta}(\Psi^{n}_{u(jl)^\prime})\right)\bigg\vert x^{n},s^{n}\right)\Big| B(u,j,l,\psi^n,x^{n},\hat{\chi})=1\right]}{\lambda}\\
& \overset{(c)}{\leq} \frac{\exp\{-n c^\prime {\delta^\prime}^2\}}{\lambda} + \sum_{\substack{(j,l)^\prime \in \mathcal{J}_n\times\mathcal{L}_n\\(j,l)^\prime \neq (j,l)}}\frac{\exp{\left\{-n\left(\min_{W\in \widehat{\widehat{\mathcal{W}}}}I(p_\Psi;\rho W) -\nu \right)\right\}}}{\lambda}\\
& \leq \frac{\exp\{-n c^\prime {\delta^\prime}^2\}}{\lambda} + \frac{\exp{\left\{-n\left(\min_{W\in \widehat{\widehat{\mathcal{W}}}}I(p_\Psi;\rho W)  - R -\nu \right)\right\}}}{\lambda}.
\end{align*}
Here, $(a)$ follows by the Markov inequality (Lemma \ref{lem:markov}), $(b)$ follows by Lemma \ref{lem:typsetII} and the union bound, and $(c)$ follows by Lemma \ref{cor:boundprefix} and the fact that $\Psi_{u(jl)^{\prime }}^n$ and $\Psi_{ujl}^n$ are independent of each other.

Then, identifying $p_1$ in Lemma \ref{lem:chernoffbounds} as
\begin{align*}
p_1 & = \frac{\exp\{-n c^\prime {\delta^\prime}^2\}}{\lambda} + \frac{\exp{\left\{-n\left(\min_{W\in \widehat{\widehat{\mathcal{W}}}}I(p_\Psi;\rho W)  - R -\nu \right)\right\}}}{\lambda},
\end{align*}
we can bound the probability that $|\mathcal{U}_0(j,l,\psi^n,x^{n},s^{n},\hat{\chi})|$ exceeds a certain value as
\begin{align*}
&Pr\left\{|\mathcal{U}_0(j,l,\psi^n,x^{n},s^{n},\hat{\chi})|\geq (1+\epsilon_2)|\mathcal{U}_n|Pr\{B(u,j,l,\psi^n,x^{n},\hat{\chi})=1\}p_1\right\}\\
& = Pr\left\{\sum_{u\in\mathcal{U}_n}\tilde{B}(j,l,\psi^n,x^{n},s^{n},u,\lambda,\hat{\chi})\geq (1+\epsilon_2)|\mathcal{U}_n|Pr\{B(u,j,l,\psi^n,x^{n},\hat{\chi})=1\}p_1\right\}\\
& \leq \exp_e{\left\{-\frac{\epsilon_2^2 n|\mathcal{U}_n| \exp\{-n(H(X,\Psi)-R +\tilde{\delta})\} \left(\frac{\exp\{-n c^\prime \delta^\prime\}}{\lambda} + \frac{\exp{\left\{-n\left(\min_{W\in \widehat{\widehat{\mathcal{W}}}}I(p_\Psi;\rho W)  - R -\nu\right)\right\}}}{\lambda} \right)}{8}\right\}}.
\end{align*}
Hence for all $|\mathcal{U}_n|$ fulfilling 
\begin{align*}
|\mathcal{U}_n| & > \exp{\{n(H(X,\Psi)-R +\tilde{\delta}) \}} \left(\frac{\exp\{-n c^\prime \delta^\prime\}}{\lambda} + \frac{\exp{\left\{-n\left(\min_{W\in \widehat{\widehat{\mathcal{W}}}}I(p_\Psi;\rho W)  - R -\nu \right)\right\}}}{\lambda} \right)^{-1}
\end{align*}
the probabilities that codewords do not occur in at least $1-\epsilon_2$ times the expected number of codebooks and that codewords occur in more than $1+\epsilon_2$ times the expected number of codebooks for which the error bound is not met, vanish super exponentially fast.

The above described events have to hold for all $(j,l)\in \mathcal{J}_n\times \mathcal{L}_n$, $\psi^n\in \mathcal{C}_{(j,l)}$ $x^{n}\in \mathcal{T}_{\rho,\delta}^n(\psi^n)$ and $s^{n}\in \mathcal{S}^{n}$,
for which there exists $\underline{\theta}\in\mathcal{P}_0(\mathcal{S}^n|\mathcal{X}^n): (\psi^n,x^n,s^n)\in\mathcal{T}_{p_{\Psi}\times\rho\times\underline{\theta},\delta}$. 
Hence, 
\begin{align*}
&Pr\left\{\bigcap_{(j,l)\in\mathcal{J}_n\times \mathcal{L}_n}\qquad\smashoperator[lr]{\bigcap_{\substack{\psi^n\in\mathcal{C}_{(j,l)}\\x^{n}\in\mathcal{T}_{\rho,\delta}^n(\psi^n)\\s^{n}\in\mathcal{S}^{n}\\\exists\underline{\theta}\in\mathcal{P}_0(\mathcal{S}^n|\mathcal{X}^n): (\psi^n,x^n,s^n)\in\mathcal{T}_{p_{\Psi}\times\rho\times\underline{\theta},\delta}}}}\left\{|\mathcal{U}_0(j,l,\psi^n,x^{n},s^{n},\hat{\chi})|\leq (1+\epsilon_2)|\mathcal{U}_n|\cdot Pr\{B(u,j,l,\psi^n,x^{n},\hat{\chi})=1\} p_1\right\}\right\}\\
&=  1 -Pr\left\{\left(\bigcap_{(j,l)\in\mathcal{J}_n\times \mathcal{L}_n}\qquad\smashoperator[lr]{\bigcap_{\substack{\psi^n\in\mathcal{C}_{(j,l)}\\x^{n}\in\mathcal{T}_{\rho,\delta}^n(\psi^n)\\s^{n}\in\mathcal{S}^{n}\\\exists\underline{\theta}\in\mathcal{P}_0(\mathcal{S}^n|\mathcal{X}^n): (\psi^n,x^n,s^n)\in\mathcal{T}_{p_{\Psi}\times\rho\times\underline{\theta},\delta}}}}\left\{|\mathcal{U}_0(j,l,\psi^n,x^{n},s^{n},\hat{\chi})|\leq (1+\epsilon_2)|\mathcal{U}_n|\cdot Pr\{B(u,j,l,\psi^n,x^{n},\hat{\chi})=1\} p_1\right\}\right)^c\right\}\\
& \overset{(a)}{\geq} 1 -  |\mathcal{J}_n||\mathcal{L}_n|\frac{|\mathcal{T}_{p,\delta}^n|}{|\mathcal{J}_n||\mathcal{L}_n|} |\mathcal{T}_{\rho,\delta}^n||\mathcal{S}^{n}|\exp_e\left\{-\frac{\epsilon_2^2 n |\mathcal{U}_n|\exp{\{-n(H(X,\Psi) - R + \epsilon_1(n)) \}}}{8}\right. \\
&\left. \qquad \qquad \qquad \qquad \qquad \qquad \left(\frac{\exp\{-n c^\prime \delta^\prime\}}{\lambda} + \frac{\exp{\left\{-n\left(\min_{W\in \widehat{\widehat{\mathcal{W}}}}I(p_{\Psi};\rho W)  - R -\nu \right)\right\}}}{\lambda} \right)\right\}\\
&= 1 -  |\mathcal{T}_{p,\delta}^n||\mathcal{T}_{\rho,\delta}^n||\mathcal{S}^{n}|\exp_e\left\{-\frac{\epsilon_2^2 n |\mathcal{U}_n|\exp{\{-n(H(X,\Psi) - R + \epsilon_1(n)) \}}}{8}\right. \\
&\left.   \qquad \qquad \qquad \qquad \qquad \qquad \left(\frac{\exp\{-n c^\prime \delta^\prime\}}{\lambda} + \frac{\exp{\left\{-n\left(\min_{W\in \widehat{\widehat{\mathcal{W}}}}I(p_{\Psi};\rho W)  - R -\nu \right)\right\}}}{\lambda} \right)\right\}
\end{align*}
and
\begin{align*}
&Pr\left\{\bigcap_{(j,l)\in\mathcal{J}_n\times \mathcal{L}_n}\bigcap_{\psi^n\in\mathcal{C}_{(j,l)}}\bigcap_{x^{n}\in\mathcal{T}_{\rho,\delta}^n(\psi^n)}\bigcap_{s^{n}\in\mathcal{S}^{n}}\left\{|\mathcal{U}(j,l,\psi^n,x^{n},\hat{\chi})|\leq (1-\epsilon_2)|\mathcal{U}_n|Pr\{B(u,j,l,\psi^n,x^{n},\hat{\chi})=1\}\right\}\right\}\\
&= 1 - Pr\left\{\left(\bigcap_{(j,l)\in\mathcal{J}_n\times \mathcal{L}_n}\bigcap_{\psi^n\in\mathcal{C}_{(j,l)}}\bigcap_{x^{n}\in\mathcal{T}_{\rho,\delta}^n(\psi^n)}\bigcap_{s^{n}\in\mathcal{S}^{n}}\right.\right.\\
&\qquad \qquad \qquad \qquad \qquad \qquad \left.\left.\left\{|\mathcal{U}(j,l,\psi^n,x^{n},\hat{\chi})|\leq (1-\epsilon_2)|\mathcal{U}_n|Pr\{B(u,j,l,\psi^n,x^{n},\hat{\chi})=1\}\right\}\right)^c\right\}\\
&\overset{(b)}{\geq} 1 -  |\mathcal{J}_n||\mathcal{L}_n|\frac{|\mathcal{T}_{p,\delta}^n|}{|\mathcal{J}_n||\mathcal{L}_n|}|\mathcal{T}_{\rho,\delta}^n||\mathcal{S}^{n}| \exp_e{\left\{-\frac{3\epsilon_2^2 n|\mathcal{U}_n|J_n\cdot L_n}{8|\mathcal{T}^n_{p,\delta}||\mathcal{T}_{\rho,\delta}^n|}\right\}}\\
&= 1 -  |\mathcal{T}_{p,\delta}^n||\mathcal{T}_{\rho,\delta}^n||\mathcal{S}^{n}|  \exp_e{\left\{-\frac{3\epsilon_2^2 n|\mathcal{U}_n|J_n\cdot L_n}{8|\mathcal{T}^n_{p,\delta}||\mathcal{T}_{\rho,\delta}^n|}\right\}}.
\end{align*}
Here, $(a)$ and $(b)$ follow by the union bound and summing over all $(j,l)\in \mathcal{J}_n\times \mathcal{L}_n$, $\psi^n\in \mathcal{C}_{(j,l)}$ $x^{n}\in \mathcal{T}_{\rho,\delta}^n(\psi^n)$ and $s^{n}\in \mathcal{S}^{n}$.

Furthermore, we bound the probability that the amount of sequences $(\psi^n,x^n,f(x^n))$ for which there does not exist a $\underline{\theta}\in\mathcal{P}_0(\mathcal{S}^n|\mathcal{X}^n): (\psi^n,x^n,f(x^n))\in\mathcal{T}_{p_{\Psi}\times\rho\times\underline{\theta},\delta}$ is not $\epsilon_3$ close to its expected value, vanishes super exponentially fast. 
More explicitly, for any $(j,l)\in\mathcal{J}_n\times\mathcal{L}_n$ we have
\begin{align*}
&Pr\left\{\left|\{(\psi^n,x^n,f(x^n)): \quad \nexists\underline{\theta}\in\mathcal{P}_0(\mathcal{S}^n|\mathcal{X}^n): (\psi^n,x^n,f(x^n))\in\mathcal{T}_{p_{\Psi}\times\rho\times\underline{\theta},\delta}\}\right|\geq \right.\\
&\left. \quad(1+\epsilon_3)|\mathcal{C}_{(j,l)}||\mathcal{T}_{\rho,\delta}^n(\psi^n)|(n+1)^{|\mathcal{X}||\mathcal{S}|}\exp\{-n \min_{\theta \in \mathcal{P}_{0}(\mathcal{S}^n|\mathcal{X}^n)}h_{\theta}(\delta)\}\right\}\\
&\leq \exp_e \left\{-\frac{\epsilon_3^2 n |\mathcal{C}_{(j,l)}||\mathcal{T}_{\rho,\delta}^n(\psi^n)|(n+1)^{|\mathcal{X}||\mathcal{S}|}\exp\{-n \min_{\theta \in \mathcal{P}_{0}(\mathcal{S}^n|\mathcal{X}^n)}h_{\theta}(\delta)\}}{8}\right\}\\
& \leq \exp_e \left\{-\frac{\epsilon_3^2 n \exp\{n(H(\Psi X) - R + \epsilon_1(n))\} (n+1)^{|\mathcal{X}||\mathcal{S}|}\exp\{-n \min_{\theta \in \mathcal{P}_{0}(\mathcal{S}^n|\mathcal{X}^n)}h_{\theta}(\delta)\}}{8}\right\}\\
& = \exp_e \left\{-\frac{\epsilon_3^2 n \exp\{n(H(\Psi X) - R + \epsilon_1(n)-\hat{\lambda})\} }{8}\right\},
\end{align*}
where the last inequality vanishes super exponentially fast in $n$.
\subsection{Codebook realization}
Now, let $\mathcal{K}_n^{\text{ran}}$ be a codebook realization of $\mathcal{K}_n^{\text{ran}}(\hat{\chi})$, fulfilling the aforementioned properties (codewords occur in sufficiently many (deterministic) codebooks, indexed by the realization of the \CR{}, and are bad only for few), with  $\mathcal{D}_{ujl}^\prime$ as 
\begin{align*}
\hat{\mathcal{D}}_{ujl}^\prime & = \bigcup_{\underline{\theta} \in \mathcal{P}_0(\mathcal{S}^{n}|\mathcal{X}^n)}\mathcal{T}_{\rho W_{\underline{\theta}},\delta}^n(\psi^n_{ujl}).
\end{align*}
with\footnote{Note that $\underline{\theta}(s|x),x\in\mathcal{X},s\in\mathcal{S}$ is a single letter distribution on the set of all possible conditional types of $s^n$ given $x^n$.} $(\rho W_{\underline{\theta}})(y|\psi)=\sum_{\substack{x\in\mathcal{X}\\s\in\mathcal{S}}}\rho(x\psi) \underline{\theta}(s|x) W)(y|x,s)$

and decoding sets $\mathcal{D}_{ujl}$, being as follows.
\begin{align}
\mathcal{D}_{ujl} & =\mathcal{D}_{ujl}^\prime \bigcap \left(\bigcup_{\substack{(jl)^{\prime}\in\mathcal{J}_n\times\mathcal{L}_n\\(jl)\neq (jl)^{\prime}}}\mathcal{D}_{u(jl)^{\prime}}^\prime\right)^c
\end{align} 
\subsection{Adaptation of the error criterion}
We will modify the error criterion and require that both the secret message $J$ and the confusing message $L$ should be successfully decoded at Bob. 

Hence, we have 
\begin{align*}
&\max_{j\in \mathcal{J}_n} \max_{l\in \mathcal{L}_n} \max_{f\in\mathcal{F}}\sum_{u\in \mathcal{U}_n}p_U(u)\sum_{\psi^n\in \mathit{\Psi}^n} E_u(\psi^n|j)\sum_{x^{n}\in \mathcal{T}_{\rho,\delta}^n(\psi^n)}\frac{1}{|\mathcal{T}_{\rho,\delta}^n(\psi^n)|}W^{n}(\mathcal{D}_{ujl}^c|x^{n},f(x^{n}))\\
& =\max_{j\in \mathcal{J}_n} \max_{l\in \mathcal{L}_n}\max_{f\in\mathcal{F}}\sum_{\psi^n\in \mathit{\Psi}^n}\sum_{x^{n}\in \mathcal{T}_{\rho,\delta}^n(\psi^n)}\sum_{u\in \mathcal{U}_n}p_U(u) E(\psi^n|j,l,u)\frac{1}{|\mathcal{T}_{\rho,\delta}^n(\psi^n)|}W^{n}(\mathcal{D}_{ujl}^c|x^{n},f(x^{n}))\\
& =\max_{j\in \mathcal{J}_n} \max_{l\in \mathcal{L}_n}\max_{f\in\mathcal{F}}\sum_{u\in \mathcal{U}_n}p_U(u)\smashoperator[lr]{\sum_{\substack{\psi^n\in \mathit{\Psi}^n\\x^{n}\in \mathcal{T}_{\rho,\delta}^n(\psi^n)\\\exists\underline{\theta}\in\mathcal{P}_0(\mathcal{S}^n|\mathcal{X}^n): (\psi^n,x^n,f(x^n))\in\mathcal{T}_{p_{\Psi}\times\rho\times\underline{\theta},\delta}}}} E(\psi^n|j,l,u)\frac{1}{|\mathcal{T}_{\rho,\delta}^n(\psi^n)|}W^{n}(\mathcal{D}_{ujl}^c|x^{n},f(x^{n}))\\
& \qquad  +\max_{j\in \mathcal{J}_n} \max_{l\in \mathcal{L}_n}\max_{f\in\mathcal{F}}\sum_{u\in \mathcal{U}_n}p_U(u)\smashoperator[lr]{\sum_{\substack{\psi^n\in \mathit{\Psi}^n\\x^{n}\in \mathcal{T}_{\rho,\delta}^n(\psi^n)\\\nexists\underline{\theta}\in\mathcal{P}_0(\mathcal{S}^n|\mathcal{X}^n): (\psi^n,x^n,f(x^n))\in\mathcal{T}_{p_{\Psi}\times\rho\times\underline{\theta},\delta}}}} E(\psi^n|j,l,u)\frac{1}{|\mathcal{T}_{\rho,\delta}^n(\psi^n)|}W^{n}(\mathcal{D}_{ujl}^c|x^{n},f(x^{n}))\\
& \leq \max_{j\in \mathcal{J}_n} \max_{l\in \mathcal{L}_n}\max_{f\in\mathcal{F}}\sum_{u\in \mathcal{U}_n}p_U(u)\smashoperator[lr]{\sum_{\substack{\psi^n\in \mathit{\Psi}^n\\x^{n}\in \mathcal{T}_{\rho,\delta}^n(\psi^n)\\\exists\underline{\theta}\in\mathcal{P}_0(\mathcal{S}^n|\mathcal{X}^n): (\psi^n,x^n,f(x^n))\in\mathcal{T}_{p_{\Psi}\times\rho\times\underline{\theta},\delta}}}} E(\psi^n|j,l,u)\frac{1}{|\mathcal{T}_{\rho,\delta}^n(\psi^n)|}W^{n}(\mathcal{D}_{ujl}^c|x^{n},f(x^{n}))\\
& \qquad +\max_{j\in \mathcal{J}_n} \max_{l\in \mathcal{L}_n}\max_{f\in\mathcal{F}}\smashoperator[lr]{\sum_{\substack{\psi^n\in \mathcal{C}_{(j,l)}\\x^{n}\in \mathcal{T}_{\rho,\delta}^n(\psi^n)\\\nexists\underline{\theta}\in\mathcal{P}_0(\mathcal{S}^n|\mathcal{X}^n): (\psi^n,x^n,f(x^n))\in\mathcal{T}_{p_{\Psi}\times\rho\times\underline{\theta},\delta}}}}\quad \frac{1}{|\mathcal{C}_{(j,l)}|}\frac{1}{|\mathcal{T}_{\rho,\delta}^n(\psi^n)|}\\
&\overset{(a)}{\leq} \max_{j\in \mathcal{J}_n} \max_{l\in \mathcal{L}_n}\max_{f\in\mathcal{F}}\sum_{u\in \mathcal{U}_n}p_U(u)\smashoperator[lr]{\sum_{\substack{\psi^n\in \mathit{\Psi}^n\\x^{n}\in \mathcal{T}_{\rho,\delta}^n(\psi^n)\\\exists\underline{\theta}\in\mathcal{P}_0(\mathcal{S}^n|\mathcal{X}^n): (\psi^n,x^n,f(x^n))\in\mathcal{T}_{p_{\Psi}\times\rho\times\underline{\theta},\delta}}}} E(\psi^n|j,l,u)\frac{1}{|\mathcal{T}_{\rho,\delta}^n(\psi^n)|}W^{n}(\mathcal{D}_{ujl}^c|x^{n},f(x^{n}))\\
& \qquad  + \max_{j\in \mathcal{J}_n} \max_{l\in \mathcal{L}_n}\frac{(1+\epsilon_3)|\mathcal{C}_{(j,l)}||\mathcal{T}_{\rho,\delta}^n(\psi^n)|(n+1)^{|\mathcal{X}||\mathcal{S}|}\exp\{-n \min_{\theta \in \mathcal{P}_{0}(\mathcal{S}^n|\mathcal{X}^n)}h_{\theta}(\delta)\}}{|\mathcal{C}_{(j,l)}||\mathcal{T}_{\rho,\delta}^n(\psi^n)|}\\
& = \max_{j\in \mathcal{J}_n} \max_{l\in \mathcal{L}_n}\max_{f\in\mathcal{F}}\sum_{u\in \mathcal{U}_n}p_U(u)\smashoperator[lr]{\sum_{\substack{\psi^n\in \mathit{\Psi}^n\\x^{n}\in \mathcal{T}_{\rho,\delta}^n(\psi^n)\\\exists\underline{\theta}\in\mathcal{P}_0(\mathcal{S}^n|\mathcal{X}^n): (\psi^n,x^n,f(x^n))\in\mathcal{T}_{p_{\Psi}\times\rho\times\underline{\theta},\delta}}}} E(\psi^n|j,l,u)\frac{1}{|\mathcal{T}_{\rho,\delta}^n(\psi^n)|}W^{n}(\mathcal{D}_{ujl}^c|x^{n},f(x^{n}))\\
& \qquad  + (1+\epsilon_3)(n+1)^{|\mathcal{X}||\mathcal{S}|}\exp\{-n \min_{\theta \in \mathcal{P}_{0}(\mathcal{S}^n|\mathcal{X}^n)}h_{\theta}(\delta)\}\\
& \leq \max_{j\in \mathcal{J}_n} \max_{l\in \mathcal{L}_n} \max_{f\in\mathcal{F}}\smashoperator[lr]{\sum_{\substack{\psi^n\in \mathit{\Psi}^n\\x^{n}\in \mathcal{T}_{\rho,\delta}^n(\psi^n)\\\exists\underline{\theta}\in\mathcal{P}_0(\mathcal{S}^n|\mathcal{X}^n): (\psi^n,x^n,f(x^n))\in\mathcal{T}_{p_{\Psi}\times\rho\times\underline{\theta},\delta}}}}\quad\sum_{u\in \mathcal{U}_n}p_{UJL\Psi^n X^{n}}(u,j,l,\psi^n,x^{n})W^{n}(\mathcal{D}_{u,j,l}^c|x^{n},f(x^{n}))+ \exp\{-n\hat{\lambda}\}\\
& = \max_{j\in \mathcal{J}_n}\max_{l\in \mathcal{L}_n} \max_{f\in\mathcal{F}}\smashoperator[lr]{\sum_{\substack{\psi^n\in \mathit{\Psi}^n\\x^{n}\in \mathcal{T}_{\rho,\delta}^n(\psi^n)\\\exists\underline{\theta}\in\mathcal{P}_0(\mathcal{S}^n|\mathcal{X}^n): (\psi^n,x^n,f(x^n))\in\mathcal{T}_{p_{\Psi}\times\rho\times\underline{\theta},\delta}}}}\quad\sum_{u\in \mathcal{U}_n}p_{U|JL\Psi^n X^{n}}(u|j,l,\psi^n,x^{n})p_{J L \Psi^n X^{n}}(j,l,\psi^n,x^{n})W^n(\mathcal{D}_{ujl}^c|x^n,f(x^n)) + \exp\{-n\hat{\lambda}\}\\
&\leq \max_{j\in \mathcal{J}_n}\max_{l\in \mathcal{L}_n}\max_{f\in\mathcal{F}}\quad\smashoperator[lr]{\max_{\substack{\psi^n\in \mathit{\Psi}^n\\x^{n}\in \mathcal{T}_{\rho,\delta}^n(\psi^n)\\\exists\underline{\theta}\in\mathcal{P}_0(\mathcal{S}^n|\mathcal{X}^n): (\psi^n,x^n,f(x^n))\in\mathcal{T}_{p_{\Psi}\times\rho\times\underline{\theta},\delta}}}}\quad ~\sum_{u\in \mathcal{U}_n}p_{U|\Psi^n X^{n} JL}(u|\psi^n,x^{n},j,l)W^{n}(\mathcal{D}_{ujl}^c|x^{n},f(x^{n}))+ \exp\{-n\hat{\lambda}\}\\
&\leq \max_{j\in \mathcal{J}_n}\max_{l\in \mathcal{L}_n}\quad\smashoperator[lr]{\max_{\substack{\psi^n\in \mathit{\Psi}^n\\x^{n}\in \mathcal{T}_{\rho,\delta}^n(\psi^n)\\s^{n}\in \mathcal{S}^{n}\\\exists\underline{\theta}\in\mathcal{P}_0(\mathcal{S}^n|\mathcal{X}^n): (\psi^n,x^n,s^n)\in\mathcal{T}_{p_{\Psi}\times\rho\times\underline{\theta},\delta}}}}\quad ~\sum_{u\in \mathcal{U}_n}p_{U|\Psi^n X^{n} J L}(u|\psi^n,x^{n},j,l)W^{n}(\mathcal{D}_{ujl}^c|x^{n},s^{n}) + \exp\{-n\hat{\lambda}\}\\
&:=\hat{\hat{e}}(\mathcal{K}_n^{\text{ran}})
\end{align*}

We first split the error probability into two terms with respect to sequences $(\psi^n,x^n,f(x^n))$. In the first term, there exists a $\underline{\theta}\in\mathcal{P}_0(\mathcal{S}^n|\mathcal{X}^n): (\psi^n,x^n,f(x^n))\in\mathcal{T}_{p_{\Psi}\times\rho\times\underline{\theta},\delta}$, in the second term there does not exist such a $\underline{\theta}\in\mathcal{P}_0(\mathcal{S}^n|\mathcal{X}^n)$. 
Here, we have implicitly shown in Appendix \ref{par:properties_for_reliability}, that $(a)$ follows with probability 1, where Lemma \ref{lem:existenceofaQ} is applied.

Secondly, we consider the maximization over all terms ($(\psi^n,x^n,s^n)$).
Our motivation to do so is to reduce the size of the space, over which should be optimized. 
The family $\mathcal{F}=\{f: \mathcal{X}^{n}\to \mathcal{S}^{n}\}$ consists of $|\mathcal{F}|=|\mathcal{S}^{n}|^{|\mathcal{X}^{n}|}$ elements, hence it grows doubly exponentially with $n$. 
By considering the maximum with respect to $x^{n}$, it is sufficient to consider the state sequence $s^{n}$ maximizing the error probability. 
Hence, we can reduce the space size used for optimization to $\mathcal{X}^{n} \times \mathcal{S}^{n}$, which grows only exponentially in $n$.
\subsection{Error Analysis}
For the error probability we can overall conclude
\begin{align*}
&\hat{\hat{e}}(\mathcal{K}_n^{\text{ran}}) =  \exp\{-n\hat{\lambda}\} +  \max_{j\in \mathcal{J}_n}\max_{l\in \mathcal{L}_n}\quad\smashoperator[lr]{\max_{\substack{\psi^n\in \mathit{\Psi}^n\\x^{n}\in \mathcal{T}_{\rho,\delta}^n(\psi^n)\\s^{n}\in \mathcal{S}^{n}\\\exists\underline{\theta}\in\mathcal{P}_0(\mathcal{S}^n|\mathcal{X}^n): (\psi^n,x^n,s^n)\in\mathcal{T}_{p_{\Psi}\times\rho\times\underline{\theta},\delta}}}}\quad ~\sum_{u\in \mathcal{U}(j,l,x^{n},\psi^n)}p_{U|\Psi^n X^{n} J L}(u|\psi^n,x^{n},j,l)W^{n}(\mathcal{D}_{ujl}^c|x^{n},s^{n})\\
&= \exp\{-n\hat{\lambda}\} + \max_{j\in \mathcal{J}_n}\max_{l\in \mathcal{L}_n}\quad\smashoperator[lr]{\max_{\substack{\psi^n\in \mathit{\Psi}^n\\x^{n}\in \mathcal{T}_{\rho,\delta}^n(\psi^n)\\s^{n}\in \mathcal{S}^{n}\\\exists\underline{\theta}\in\mathcal{P}_0(\mathcal{S}^n|\mathcal{X}^n): (\psi^n,x^n,s^n)\in\mathcal{T}_{p_{\Psi}\times\rho\times\underline{\theta},\delta}}}}\quad ~\left(\smashoperator[r]{\sum_{u\in \mathcal{U}_0^c(j,l,\psi^n,x^{n},s^{n})}}p_{U|\Psi^n X^{n} J L}(u|\psi^n,x^{n},j,l)W^{n}(\mathcal{D}_{ujl}^c|x^{n},s^{n})\right.\\
&\qquad \qquad \qquad \qquad \left.+\smashoperator[r]{\sum_{u\in \mathcal{U}_0(j,\psi^n,l,x^{n},s^{n})}}p_{U|\Psi^n X^{n} J L}(u|\psi^n,x^{n},j,l)W^{n}(\mathcal{D}_{ujl}^c|x^{n},s^{n})\right)\\
&\leq \exp\{-n\hat{\lambda}\} + \lambda + \max_{j\in \mathcal{J}_n}\max_{l\in \mathcal{L}_n}\max_{\psi^n\in\mathit{\Psi}^n}\max_{x^{n}\in\mathcal{T}_{\rho,\delta}^n(\psi^n)}\max_{s^{n}\in \mathcal{S}^{n}}\smashoperator[r]{\sum_{u\in \mathcal{U}_0(j,\psi^n,l,x^{n},s^{n})}}p_{U|\Psi^n X^{n} J L}(u|\psi^n,x^{n},j,l)W^{n}(\mathcal{D}_{ujl}^c|x^{n},s^{n})\\
&\leq  \exp\{-n\hat{\lambda}\} + \lambda +\max_{j\in \mathcal{J}_n}\max_{l\in \mathcal{L}_n}\quad\smashoperator[lr]{\max_{\substack{\psi^n\in \mathit{\Psi}^n\\x^{n}\in \mathcal{T}_{\rho,\delta}^n(\psi^n)\\s^{n}\in \mathcal{S}^{n}\\\exists\underline{\theta}\in\mathcal{P}_0(\mathcal{S}^n|\mathcal{X}^n): (\psi^n,x^n,s^n)\in\mathcal{T}_{p_{\Psi}\times\rho\times\underline{\theta},\delta}}}}\quad ~\smashoperator[r]{\sum_{u\in \mathcal{U}_0(j,\psi^n,l,x^{n},s^{n})}}p_{U|\Psi^n X^{n} J L}(u|\psi^n,x^{n},j,l)\\
&=\exp\{-n\hat{\lambda}\} + \lambda+\max_{j\in \mathcal{J}_n}\max_{l\in \mathcal{L}_n}\quad\smashoperator[lr]{\max_{\substack{\psi^n\in \mathit{\Psi}^n\\x^{n}\in \mathcal{T}_{\rho,\delta}^n(\psi^n)\\s^{n}\in \mathcal{S}^{n}\\\exists\underline{\theta}\in\mathcal{P}_0(\mathcal{S}^n|\mathcal{X}^n): (\psi^n,x^n,s^n)\in\mathcal{T}_{p_{\Psi}\times\rho\times\underline{\theta},\delta}}}}\quad ~\sum_{u\in \mathcal{U}_0(j,\psi^n,l,x^{n},s^{n})}\frac{p_{U \Psi^n X^{n} J L}(u,\psi^n,x^{n},j,l)}{p_{\Psi^n X^{n} J L}(\psi^n,x^{n},j,l)}\\
&=\exp\{-n\hat{\lambda}\} + \lambda+\max_{j\in \mathcal{J}_n}\max_{l\in \mathcal{L}_n}\quad\smashoperator[lr]{\max_{\substack{\psi^n\in \mathit{\Psi}^n\\x^{n}\in \mathcal{T}_{\rho,\delta}^n(\psi^n)\\s^{n}\in \mathcal{S}^{n}\\\exists\underline{\theta}\in\mathcal{P}_0(\mathcal{S}^n|\mathcal{X}^n): (\psi^n,x^n,s^n)\in\mathcal{T}_{p_{\Psi}\times\rho\times\underline{\theta},\delta}}}}\quad ~\frac{\sum_{u\in \mathcal{U}_0(j,\psi^n,l,x^{n},s^{n})}p_{U \Psi^n X^{n} J L}(u,\psi^n,x^{n},j,l)}{\sum_{u^\prime \in\mathcal{U}(j,l,\psi^n,x^{n})}p_{U \Psi^n X^{n} J L}(u^\prime,\psi^n,x^{n},j,l)} \\
&=\exp\{-n\hat{\lambda}\} + \lambda+\max_{j\in \mathcal{J}_n}\max_{l\in \mathcal{L}_n}\quad\smashoperator[lr]{\max_{\substack{\psi^n\in \mathit{\Psi}^n\\x^{n}\in \mathcal{T}_{\rho,\delta}^n(\psi^n)\\s^{n}\in \mathcal{S}^{n}\\\exists\underline{\theta}\in\mathcal{P}_0(\mathcal{S}^n|\mathcal{X}^n): (\psi^n,x^n,s^n)\in\mathcal{T}_{p_{\Psi}\times\rho\times\underline{\theta},\delta}}}}\quad ~\frac{\sum_{u\in \mathcal{U}_0(j,\psi^n,l,x^{n},s^{n})}p_{U}(u)p_{\Psi^n|UJL}(\psi^n|u,j,l)p_{X^{n}|\Psi^n}(x^{n}|\psi^n)}{\sum_{u^\prime \in\mathcal{U}(j,l,\psi^n,x^{n})}p_{U}(u^\prime)p_{\Psi^n|UJL}(\psi^n|u^\prime,j,l)p_{X^{n}|\Psi^n}(x^{n}|\psi^n)} \\
&= \exp\{-n\hat{\lambda}\} + \lambda+\max_{j\in \mathcal{J}_n}\max_{l\in \mathcal{L}_n}\quad\smashoperator[lr]{\max_{\substack{\psi^n\in \mathit{\Psi}^n\\x^{n}\in \mathcal{T}_{\rho,\delta}^n(\psi^n)\\s^{n}\in \mathcal{S}^{n}\\\exists\underline{\theta}\in\mathcal{P}_0(\mathcal{S}^n|\mathcal{X}^n): (\psi^n,x^n,s^n)\in\mathcal{T}_{p_{\Psi}\times\rho\times\underline{\theta},\delta}}}}\quad ~\frac{\sum_{u\in \mathcal{U}_0(j,\psi^n,l,x^{n},s^{n})}p_{U}(u)}{\sum_{u^\prime \in\mathcal{U}(j,l,\psi^n,x^{n})}p_{U}(u^\prime)} \\
& =\exp\{-n\hat{\lambda}\} + \lambda + \max_{j\in \mathcal{J}_n}\max_{l\in \mathcal{L}_n}\quad\smashoperator[lr]{\max_{\substack{\psi^n\in \mathit{\Psi}^n\\x^{n}\in \mathcal{T}_{\rho,\delta}^n(\psi^n)\\s^{n}\in \mathcal{S}^{n}\\\exists\underline{\theta}\in\mathcal{P}_0(\mathcal{S}^n|\mathcal{X}^n): (\psi^n,x^n,s^n)\in\mathcal{T}_{p_{\Psi}\times\rho\times\underline{\theta},\delta}}}}\quad ~ \frac{|\mathcal{U}_0(j,l,\psi^n,x^{n},s^{n})|}{|\mathcal{U}(j,l,\psi^n,x^{n})|}.
\end{align*}
In Appendix \ref{par:properties_for_reliability}, we have implicitly shown, that the probability
\begin{align*}
&Pr\left\{\frac{|\mathcal{U}_0(j,l,\psi^n,x^{n},s^{n})|}{|\mathcal{U}(j,l,\psi^n,x^{n})|} \geq \frac{(1+\epsilon_2)|\mathcal{U}_n|Pr\{B(u,j,l,\psi^n,x^{n},\hat{\chi})=1\} p_1}{(1-\epsilon_2)|\mathcal{U}_n|Pr\{B(u,j,l,\psi^n,x^{n},\hat{\chi})=1\}}\right\}
\end{align*}
vanishes super exponentially fast.
Hence, with probability 1, we can upper bound $\hat{\hat{e}}(\mathcal{K}_n^{\text{ran}})$ as 
\begin{align*}
&\hat{\hat{e}}(\mathcal{K}_n^{\text{ran}})  \leq  \exp\{-n\hat{\lambda}\} + \lambda + \max_{j\in \mathcal{J}_n}\max_{l\in \mathcal{L}_n}\quad\smashoperator[lr]{\max_{\substack{\psi^n\in \mathit{\Psi}^n\\x^{n}\in \mathcal{T}_{\rho,\delta}^n(\psi^n)\\s^{n}\in \mathcal{S}^{n}\\\exists\underline{\theta}\in\mathcal{P}_0(\mathcal{S}^n|\mathcal{X}^n): (\psi^n,x^n,s^n)\in\mathcal{T}_{p_{\Psi}\times\rho\times\underline{\theta},\delta}}}}\quad ~ \frac{(1+\epsilon_2)|\mathcal{U}_n|Pr\{B(u,j,l,\psi^n,x^{n},\hat{\chi})=1\} p_1}{(1-\epsilon_2)|\mathcal{U}_n|Pr\{B(u,j,l,\psi^n,x^{n},\hat{\chi})=1\}} \\
& = \exp\{-n\hat{\lambda}\} + \lambda + \frac{1+\epsilon_2}{1-\epsilon_2}\left(\frac{\exp\{-n c^\prime \delta^\prime\}}{\lambda} + \frac{\exp{\left\{-n\left(\min_{W\in \widehat{\widehat{\mathcal{W}}}}I(p_{\Psi};\rho W)  - R -\nu \right)\right\}}}{\lambda} \right)
\end{align*}
We choose 
\begin{align*}
R & \leq \min_{W\in \widehat{\widehat{\mathcal{W}}}}I(p_{\Psi};\rho W)-\nu \\
\lambda&=\exp\{-n\frac{\tau}{2}\},\\
\tau & < \min\left\{c^\prime \delta^\prime,\min_{W\in \widehat{\widehat{\mathcal{W}}}}I(p_{\Psi};\rho W)  - R -\nu \right\}
\end{align*}
and have shown an exponential vanishing error probability.
%
\subsection{Codebook properties for secure communication}
We have to show that the leakage to the eavesdropper vanishes asymptotically. 
Therefore, we make use of the fact that there exists a best channel to the eavesdropper and the fact that the probability that the implied probability distributions are not in an $\epsilon$ region around the expected typical ones can be upper bounded using Chernoff bounds. 
Then we apply Lemma \ref{lem:csiz}.
If the variation distance of the channel output probability distribution and the conditional channel output probability distribution can be upper bounded, then the leakage can be upper bounded as well. 
To upper bound the variation distance, the triangle inequality will be used in combination  with properties of typical sequences.  
Note that the existence of a best channel to the eavesdropper is crucial at this point to reduce the jammer's possible choices of jamming sequence from double exponentially many to exactly one, for the case of a best channel to the eavesdropper. 

Notice that in contrast to the error analysis we do not average with respect to the \CR{} when considering the leakage. 
In other words, the leakage has to vanish for all $u\in \mathcal{U}_n$, hence we will omit indexing on $u$. 
Operationally, that means the eavesdropper may have access to the \CR . 
It is sufficient to consider the best channel to the eavesdropper, invoked by $\theta^{\ast,n}\in\mathcal{P}^n(\mathcal{S}|\mathcal{X})$, since fulfilling the secrecy requirement for the best channel to the eavesdropper implies that the secrecy requirement is fulfilled for all other channels to the eavesdropper by the data processing inequality, as well.

For a fixed $u\in \mathcal{U}_n$, we have
\begin{align*}
I(p_{J_n};E_uV_{\theta^{\ast,n}}^{n}) & = H(p_{J_n}E_uV_{\theta^{\ast,n}}^{n}) - H(E_uV_{\theta^{\ast,n}}^{n}|p_{J_n})\quad( = H(Z^{n}_{\theta^{\ast,n}}) - H(Z^{n}_{\theta^{\ast,n}}|J))\\
& = \frac{1}{J_n}\sum_{j\in\mathcal{J}_n}(H(p_{J_n}E_uV_{\theta^{\ast,n}}^{n}) - H(E_uV_{\theta^{\ast,n}}^{n}|j))\\
& =  \frac{1}{J_n}\sum_{j\in\mathcal{J}_n} \left( H\left(\frac{1}{J_n}\sum_{j\in\mathcal{J}_n}\sum_{\psi^n\in\mathit{\Psi}^n}\sum_{x^{n}\in \mathcal{T}_{\rho,\delta}(\psi^n)}p_u(\psi^n|j)\rho(x^{n}|\psi^n)V_{\theta^{\ast,n}}(\cdot|x^{n})\right)\right.\\
&\left. \qquad \qquad - H\left(\sum_{\psi^n\in\mathit{\Psi}^n}\sum_{x^{n}\in \mathcal{T}_{\rho,\delta}(\psi^n)}p_u(\psi^n|j)\rho(x^{n}|\psi^n)V_{\theta^{\ast,n}}(\cdot|x^{n})\right)\right)\\
& = \frac{1}{J_n}\sum_{j\in\mathcal{J}_n}\left( H\left(\rho\bar{V}_{\theta^{\ast,n}}(\cdot)\right) - H\left(\rho\hat{V}_{\theta^{\ast,n}}(\cdot|j)\right)\right),
\end{align*}
where we define
\begin{align*}
\frac{1}{J_n}\sum_{j\in\mathcal{J}_n}\sum_{\psi^n\in\mathit{\Psi}^n}\sum_{x^{n}\in \mathcal{T}_{\rho,\delta}(\psi^n)}p_u(\psi^n|j)\frac{1}{|\mathcal{T}_{\rho,\delta}^n(\psi^n)|}V_{\theta^{\ast,n}}(\cdot|x^{n}) & = \rho\bar{V}_{\theta^{\ast,n}}(\cdot)\\
\sum_{\psi^n\in\mathit{\Psi}^n}\sum_{x^{n}\in \mathcal{T}_{\rho,\delta}(\psi^n)}p_u(\psi^n|j)\frac{1}{|\mathcal{T}_{\rho,\delta}^n(\psi^n)|}V_{\theta^{\ast,n}}(\cdot|x^{n}) & = \rho\hat{V}_{\theta^{\ast,n}}(\cdot|j).
\end{align*}
Now, if we can show that 
\begin{align*}
|| \rho\bar{V}_{\theta^{\ast,n}}(\cdot) - \rho\hat{V}_{\theta^{\ast,n}}(\cdot|j)||_V & \leq \epsilon_3 \leq \frac{1}{2} 
\end{align*}
then we can apply Lemma \ref{lem:csiz} and obtain
\begin{align*}
|  H(\rho\bar{V}_{\theta^{\ast,n}}(\cdot)) - H(\rho\hat{V}_{\theta^{\ast,n}}(\cdot|j))| & \leq - \epsilon_3 \log{\frac{\epsilon_3 }{|\mathcal{Z}|^n}}
\end{align*}
We extend \cite{Wiese2016} to prove that the secrecy requirement is fulfilled. 
For some $\Omega(Z^n)$ that will be defined later in this section, we have
\begin{align}
|| \rho\bar{V}_{\theta^{\ast,n}}(\cdot) - \rho\hat{V}_{\theta^{\ast,n}}(\cdot|j)||_V & \leq ||\rho\hat{V}_{\theta^{\ast,n}}(\cdot|j) - \Omega(\cdot)||_V + || \Omega(\cdot) - \rho\bar{V}_{\theta^{\ast,n}}(\cdot) ||_V . \label{eq:vdistance1prefix}
\end{align}
We will concentrate on the first term, since 
\begin{align*}
|| \Omega(\cdot) - \rho\bar{V}_{\theta^{\ast,n}}(\cdot) ||_V & = ||\frac{1}{J_n}\sum_{j\in\mathcal{J}_n} \left(\rho\hat{V}_{\theta^{\ast,n}}(\cdot|j) - \Omega(\cdot)\right)||_V\\
& \leq \frac{1}{J_n}\sum_{j\in\mathcal{J}_n} ||\rho\hat{V}_{\theta^{\ast,n}}(\cdot|j) - \Omega(\cdot)||_V .
\end{align*}
Let $(\psi^n,s^{n})$ have type $p_0\in \mathcal{P}_0^n(\Psi^n \mathcal{S}^{n})$, with 
\begin{align*}
p_0(\psi^n,s^{n}) & = p^n(\psi^n)\sum_{x^{n}\in \mathcal{T}_{\rho,\delta}(\psi^n)}\frac{1}{|\mathcal{T}_{\rho,\delta}(\psi^n)|} \theta^{\ast,n}(s^{n}|x^{n})\\
& = p^n(\psi^n)\sum_{x^{n}\in \mathcal{T}_{\rho,\delta}(\psi^n)}\frac{1}{|\mathcal{T}_{\rho,\delta}(\psi^n)|} \prod_{i=1}^n\theta^{\ast}_i(s_i|x_i)\\
& \overset{(a)}{=} p^n(\psi^n)\sum_{x^{n}\in \mathcal{T}_{\rho,\delta}(\psi^n)}\frac{1}{|\mathcal{T}_{\rho,\delta}(\psi^n)|} \prod_{i=1}^n\theta^{\ast}(s_i|x_i),
\end{align*}
where $(a)$ follows because of Definition \ref{def:bestchannel}. This effectively transforms the channel $V_{\theta^{\ast, n}}(z^n|x^n)$ to a \DMC{} with transition probability $V_{\theta^{\ast, n}}(z^n|x^n)=\prod_{i=1}^n \sum_{s\in\mathcal{S}}\theta^\ast(s_i|x_i)V(z_i|x_i,s_i)$.
We define the set $\mathbf{\varepsilon}_1(\psi^n)$ and $\tilde{\Omega}(z^{n})$ as
\begin{align}
\mathbf{\varepsilon}_1(\psi^n) &= \mathcal{T}_{\rho V_{\theta^\ast},\delta}^n(\psi^n),\\
\tilde{\Omega}(z^{n}) & = \mathbb{E}_{\Psi^n}[\rho V_{\theta^{\ast,n}}^{n}(z^{n}|\Psi^n)\mathds{1}_{\mathbf{\varepsilon}_1(\Psi^n)}(z^{n})],
\end{align}
%
%
where we take the expectation over all $\psi^n \in \mathcal{T}_{p,\delta}^n$, and $\rho V_{\theta^{\ast,n}}^n(z^{n}|\psi^n)$ is defined as 
\begin{align*}
\rho V_{\theta^{\ast,n}}^{n}(z^{n}|\psi^n) & = \sum_{x^{n}\in\mathcal{T}_{\rho ,\delta}^{n}(\psi^n)}\frac{1}{|\mathcal{T}_{\rho,\delta}(\psi^n)|} \sum_{s^{n}\in\mathcal{S}^{n}}\theta^{\ast ,n} (s^{n}|x^{n}) V^{n}(z^{n}|x^{n},s^{n})
\end{align*}
%
%
%
%
Further, we define the set 
\begin{align}
\mathbf{\varepsilon}_2&:= \left\{z^{n} \in \mathcal{T}_{Z_{\theta^{\ast,n}},2|\mathcal{X}||\Psi|\delta}: \tilde{\Omega}(z^{n}) \geq \exp\{-nc^\prime \delta^2\} \exp\{-n(H(Z_{\theta^{\ast}})+f_1(\delta))\} \right\}\label{eq:lowerboundEpsilon2prefix},
\end{align}
with
\begin{align*}
|\mathcal{T}_{Z_{\theta^{\ast,n}},2|\mathcal{X}||\Psi|\delta}| & \leq \exp\{n(H(Z_{\theta^{\ast}})+f_1(\delta))\},\\
\epsilon_n & = \exp\{-nc^\prime \delta^2\}.
\end{align*}
where these bounds follow by Lemmas \ref{lem:typsetI} and \ref{lem:typsetII}, respectively.
%
We set
\begin{align}
\Omega(z^{n}) & = \tilde{\Omega}(z^{n})\mathds{1}_{\mathbf{\varepsilon}_2}(z^{n}).
\end{align}
By definition, $\Omega(z^{n})\geq \epsilon_n \exp\{-n(H(Z_{\theta^{\ast}})+f_1(\delta))\}$, for all $z^{n}\in \mathbf{\varepsilon}_2$, else $\Omega(z^{n})=0$. 
Note, that when summing up over all $z^{n}\in \mathbf{\varepsilon}_2$ we get
\begin{align*}
\sum_{z^{n}\in \mathbf{\varepsilon}_2}\Omega(z^{n}) & = \Omega(\mathbf{\varepsilon}_2)\\
& =  \tilde{\Omega}(\mathbf{\varepsilon}_2)\\
&=\tilde{\Omega}\left(\mathcal{T}_{Z_{\theta^{\ast,n}},2|\mathcal{X}||\Psi|\delta}\right)-\tilde{\Omega}\left(\mathcal{T}_{Z_{\theta^{\ast,n}},2|\mathcal{X}||\Psi|\delta}\setminus \mathbf{\varepsilon}_2\right)\\
& \geq 1 - 2 \epsilon_n,
\end{align*}
where the inequality follows by the properties of typical sets and sequences, Lemma \ref{lem:typsetII}, hence by $\tilde{\Omega}\left(\mathcal{T}_{Z_{\theta^{\ast,n}},2|\mathcal{X}||\Psi|\delta}\right)\geq 1-\epsilon_n$, and $\tilde{\Omega}\left(\mathcal{T}_{Z_{\theta^{\ast,n}},2|\mathcal{X}||\Psi|\delta}\setminus \mathbf{\varepsilon}_2\right)\leq \epsilon_n$.
Similar to \cite{Wiese2016} we obtain a modification of $\rho V^{n}_{\theta^{\ast,n}}$ as
\begin{align}
Q_{\theta^{\ast,n}}(z^{n}|\psi^n) & := \rho V_{\theta^{\ast,n}}^{n}(z^{n}|\psi^n)\mathds{1}_{\mathbf{\varepsilon}_1(\psi^n)}(z^{n})\mathds{1}_{\mathbf{\varepsilon}_2}(z^{n}),
\end{align}
and can define the event 
\begin{align}
\iota_1(j,z^{n}) &:= \left\{\frac{1}{L_n}\sum_{l=1}^{L_n}Q_{\theta^{\ast,n}}(\Psi_{jl}^n|z^{n})\in [(1\pm\epsilon_n)\Omega(z^{n})]\right\}
\end{align}
\begin{lem}\label{lem:vdistanceaprefix}
For $\tau_a >0$, the probability that $\iota_1(j,z^{n})$ is not fulfilled can be upper bounded as
\begin{align}
Pr\{\iota_1(j,z^{n})^c\} & \leq 2 \exp_e\left\{-\frac{1}{3}\exp\{n\tau_a\}\right\} \label{eq:vdistancea1prefix}
\end{align}
\end{lem}
\begin{proof}
We will apply a Chernoff-Hoeffding bound, Lemma \ref{lem:chernoff}. 
\begin{align*}
Pr\left\{\frac{1}{L_n}\sum_{l=1}^{L_n}Q_{\theta^{\ast,n}}(z^{n}|\Psi_{jl}^n)\notin [(1\pm\epsilon_n)\Omega(z^{n})]\right\}\leq 2 \exp_e{\left(-L_n\frac{\epsilon_n^2\Omega(z^{n})}{3b_n}\right)}.
\end{align*}
We can plug in the bounds for $Q_{\theta^{\ast,n}}(\Psi_{jl}^n,z^{n})$ according to $\varepsilon_1(\psi^n)$, and $\Omega(z^{n})$ according to $\varepsilon_2$, 
\begin{align*}
Q_{\theta^{\ast,n}}(z^{n}|\Psi_{jl}^n) & \leq \exp\{-n(H(Z_{\theta^{\ast}}|\Psi)-f_2(\delta))\},\\
\Omega(z^{n}) & \geq \epsilon_n \exp\{-n(H(Z_{\theta^{\ast}})+f_1(\delta))\},
\end{align*}
and obtain for the exponent
\begin{align*}
-L_n\frac{\epsilon_n^2\Omega(z^{n})}{3b_n} & \leq -\frac{1}{3}L_n \epsilon_n^3 \exp\{-n(H(Z_{\theta^{\ast}})+f_1(\delta))\}\exp\{n(H(Z_{\theta^{\ast}}|\Psi)-f_2(\delta))\}\\
& = -\frac{1}{3}L_n \exp\{-n(H(Z_{\theta^{\ast}})-H(Z_{\theta^{\ast}}|\Psi)+f_1(\delta)+f_2(\delta)) + 3c^\prime \delta^2\}\\
& = -\frac{1}{3}L_n \exp\{-n(I(Z_{\theta^{\ast}};\Psi)+f_1(\delta)+f_2(\delta)) + 3c^\prime \delta^2\}.
\end{align*}
If we choose $L_n$ to be
\begin{align*}
L_n & \geq \exp\{n(I(Z_{\theta^{\ast}};\Psi)+f_1(\delta)+f_2(\delta) + 3c^\prime \delta^2 + \tau_a)\},\\
&\lim_{\delta\to 0}f_1(\delta) = \lim_{\delta\to 0} f_2(\delta) = \lim_{\delta\to 0} 3c^\prime \delta^2 = 0, 
\end{align*}
then the probability that $\iota_1(j,z^{n})$ is not fulfilled vanishes doubly exponentially fast.
\end{proof}
We define the event $\iota_0$ as the event that $\iota_1(j,z^{n})$ holds for all $j\in \mathcal{J}_n$, $z^{n} \in \mathcal{Z}^{n}$, and $u\in \mathcal{U}_n$ 
\begin{align}
\iota_0 := \bigcap_{j\in \mathcal{J}_n} \bigcap_{z^{n} \in \mathcal{Z}^{n}}\bigcap_{u\in \mathcal{U}_n}\iota_1(j,z^n).
\end{align}
We can bound the probability of $\iota_0$ from below as 
\begin{align*}
Pr\{\iota_0\} & = 1 - Pr\{\iota_0^c\}\\
			  &	= 1 -  Pr\left\{\bigcup_{j\in \mathcal{J}_n} \bigcup_{z^{n} \in \mathcal{Z}^{n}}\bigcup_{u\in \mathcal{U}_n}\iota_1^c(j,z^n)\right\}\\
			  & \geq 1 - 2 J_n |\mathcal{Z}|^{n}|\mathcal{U}_n| \exp_e\{-\frac{1}{3}\exp\{n\tau_a\}\}.
\end{align*}
Since $\mathcal{J}_n$, $|\mathcal{Z}|^{n}$, and $|\mathcal{U}_n|$ grow only exponentially fast in $n$, but $Pr\left\{\iota_1^c(j,z^{n})\right\}$ vanishes doubly exponentially fast in $n$, the probability that $\iota_0$ holds, approaches one.
\paragraph{Leakage analysis}
Let $\mathcal{K}_n^\text{ran}$ be a realization of the random \CR{} assisted code $\mathcal{K}_n^\text{ran}(\hat{\chi})$, fulfilling the required properties for guaranteeing secrecy. 
Furthermore, let $\psi_{jl}^n$ be the codeword realization for the message pair $(j,l)\in\mathcal{J}_n\times\mathcal{L}_n$ for the \CR{} assisted code $\mathcal{K}_n^\text{ran}$ for a specific realization of $u\in\mathcal{U}_n$. 
Keep in mind that the leakage has to vanish for all $u\in\mathcal{U}_n$, and that we omit the indexing on $u$ as before.
We can bound the first term in equation (\ref{eq:vdistance1prefix}) for any $j\in\mathcal{J}_n$ as 
\begin{align}
\left\lVert\rho \hat{V}_{\theta^{\ast,n}}(\cdot|j) - \Omega(\cdot)\right\rVert_V & \leq \left\lVert\frac{1}{L_n}\sum_{l=1}^{L_n}Q_{\theta^{\ast,n}}(\cdot|\psi_{jl}^n) - \Omega(\cdot)\right\rVert_V \label{eq:vdistanceaprefix} \\
&\quad+ \left\lVert \frac{1}{L_n}\sum_{l=1}^{L_n}\rho V_{\theta^{\ast,n}}^{n}(\cdot|\psi_{jl}^n)\mathds{1}_{\mathbf{\varepsilon}_1(\psi_{jl}^n)}(\cdot)(\mathds{1}_{\mathcal{Z}^{n}}(\cdot)- \mathds{1}_{\mathbf{\varepsilon}_2}(\cdot)) \right\rVert_V  \label{eq:vdistancebprefix}\\
& \quad + \left\lVert \frac{1}{L_n}\sum_{l=1}^{L_n}\rho V_{\theta^{\ast,n}}^{n}(\cdot|\psi_{jl}^n)(\mathds{1}_{\mathcal{Z}^{n}}(\cdot)- \mathds{1}_{\mathbf{\varepsilon}_1(\psi_{jl}^n)}(\cdot)) \right\rVert_V.\label{eq:vdistancecprefix}
\end{align}
In the following, we bound the right hand side of (\ref{eq:vdistanceaprefix}), and the terms in (\ref{eq:vdistancebprefix}), (\ref{eq:vdistancecprefix}), individually.

The right hand side of (\ref{eq:vdistanceaprefix}) can be bounded by the result of Lemma \ref{lem:vdistanceaprefix} to  
\begin{align*}
\left\lVert\frac{1}{L_n}\sum_{l=1}^{L_n}Q_{\theta^{\ast,n}}(\cdot|\psi_{jl}^n) - \Omega(\cdot)\right\rVert_V & = \sum_{z^{n}\in \mathcal{Z}^{n}}\left|\frac{1}{L_n}\sum_{l=1}^{L_n}Q_{\theta^{\ast,n}}(z^{n}|\psi_{jl}^n) - \Omega(z^{n})\right|\\
& \leq \sum_{z^{n}\in \mathcal{Z}^{n}}\epsilon_n\Omega(z^{n})\\
& \leq \epsilon_n
\end{align*}
For (\ref{eq:vdistancebprefix}), we obtain 
\begin{align*}
&\left\lVert \frac{1}{L_n}\sum_{l=1}^{L_n}\rho V_{\theta^{\ast,n}}^{n}(\cdot|\psi_{jl}^n)\mathds{1}_{\mathbf{\varepsilon}_1(\psi_{jl}^n)}(\cdot)(\mathds{1}_{\mathcal{Z}^{n}}(\cdot)- \mathds{1}_{\mathbf{\varepsilon}_2}(\cdot)) \right\rVert_V \\
& = \sum_{z^{n}\in \mathcal{Z}^{n}}\left|\frac{1}{L_n}\sum_{l=1}^{L_n}\rho V_{\theta^{\ast,n}}^{n}(z^{n}|\psi_{jl}^n)\mathds{1}_{\mathbf{\varepsilon}_1(\psi_{jl}^n)}(z^n)(\mathds{1}_{\mathcal{Z}^{n}}(z^{n})- \mathds{1}_{\mathbf{\varepsilon}_2}(z^{n})) \right|\\
& = \frac{1}{L_n}\sum_{l=1}^{L_n}\sum_{z^{n}\in \mathcal{Z}^{n}}\rho V_{\theta^{\ast,n}}^{n}(z^{n}|\psi_{jl}^n)\mathds{1}_{\mathbf{\varepsilon}_1(\psi_{jl}^n)}(z^{n})\mathds{1}_{\mathcal{Z}^{n}}(z^{n}) - \sum_{z^{n}\in \mathcal{Z}^{n}}\frac{1}{L_n}\sum_{l=1}^{L_n}\rho V_{\theta^{\ast,n}}^{n}(z^{n}|\psi_{jl}^n)\mathds{1}_{\mathbf{\varepsilon}_1(\psi_{jl}^n)}(z^{n})\mathds{1}_{\mathbf{\varepsilon}_2}(z^{n})\\
& \leq 1 - \sum_{z^{n}\in \mathcal{Z}^{n}}\frac{1}{L_n}\sum_{l=1}^{L_n}Q_{\theta^{\ast,n}}(z^{n}|\psi_{j,l}^n)\\
& \leq 1 - \sum_{z^{n}\in \mathcal{Z}^{n}}(1 - \epsilon_n)\Omega(z^{n})\\
& \leq 1 - (1 - \epsilon_n) (1 - 2\epsilon_n) \\
& \leq  3\epsilon_n - 2\epsilon_n^2\\
& \leq  3\epsilon_n.
\end{align*}
For (\ref{eq:vdistancecprefix}), we obtain
\begin{align*}
&\left\lVert \frac{1}{L_n}\sum_{l=1}^{L_n}\rho V_{\theta^{\ast,n}}^{n}(\cdot|\psi_{jl}^n)(\mathds{1}_{\mathcal{Z}^{n}}(\cdot)- \mathds{1}_{\mathbf{\varepsilon}_1(\psi_{jl}^n)}(\cdot)) \right\rVert_V  \overset{(a)}{=} \frac{1}{L_n}\sum_{l=1}^{L_n}\rho V_{\theta^{\ast,n}}^{n}(\mathbf{\varepsilon}^c_1(\psi_{jl}^n)|\psi_{jl}^n)\\
& \overset{(b)}{=} \frac{1}{L_n}\sum_{l\in\mathcal{L}_n}\rho V_{\theta^{\ast,n}}^{n}(\mathcal{T}_{\rho V_{\theta^{\ast,n},\delta}}^c(\psi_{jl}^n)|\psi_{jl}^n)\\
& \overset{(c)}{\leq} \frac{1}{L_n}\sum_{l\in\mathcal{L}_n}\exp\{-nc^\prime \delta^2\} \\
&\overset{(d)}{=} \epsilon_n.
\end{align*}
Here, $(a)$ follows by summing up only over $z^{n}\in\varepsilon_1^c(\cdot)$. 
$(b)$ follows by the definition of $\varepsilon_1(\psi_{jl}^n)$. 
$(c)$ follows since the probability of not obtaining a conditional typical $z^{n}$ can be upper bounded. 
$(d)$ follows since the upper bound in $(c)$ is valid for all $\psi_{jl}^n$.

Therefore, for (\ref{eq:vdistance1prefix}) we obtain 
\begin{align*}
|| \rho \bar{V}_{\theta^{\ast,n}}(Z^{n}) - \rho\hat{V}_{\theta^{\ast,n}}(Z^{n}|j)||_V &\leq 10 \epsilon_n\\
I(p_{J_n};E_u V_{\theta^{\ast,n}}^{n}) & \leq 10 n\epsilon_n \log\left(|\mathcal{Z}|\right)  - 10 \epsilon_n\log\left(10 \epsilon_n \right),
\end{align*}
which vanishes as $n$ goes to infinity because $\epsilon_n$ vanishes exponentially in $n$.
\subsection{Existence of codes fulfilling both the error and the secrecy requirement}
It remains to show that there exist codes fulfilling the error requirement and the secrecy requirement simultaneously. 

Therefore, we define the following event. 

\begin{align*}
\tilde{\iota} &:= \left\{\hat{\hat{e}}(\mathcal{K}_n^{\text{ran}}) \leq \lambda + \frac{1+\epsilon_2}{1-\epsilon_2}\left(\frac{\exp\{-n c^\prime \delta^\prime\}}{\lambda} + \frac{\exp{\left\{-n\left(\min_{W\in \widehat{\widehat{\mathcal{W}}}}I(p_{\Psi};\rho W)  - R -\nu \right)\right\}}}{\lambda} \right)\right\} \\
\hat{\iota} & := \iota_0 \cap \tilde{\iota}
\end{align*}

Here, we can apply the union bound and obtain 
\begin{align*}
Pr\{\hat{\iota}\} & = 1 - Pr\{\hat{\iota}^c\} = 1 - Pr\{\iota_0^c \cup \tilde{\iota}^c\}\geq 1 - Pr\{\iota_0^c\} - Pr\{\tilde{\iota}^c\},
\end{align*}
where both, $Pr\{\iota_0^c\}$ and $Pr\{\tilde{\iota}^c\}$ vanish super exponentially fast. 
Hence, there exist codes fulfilling the aforementioned criteria simultaneously. 
%
Finally, we get the achievable \CR{} assisted code secrecy rate as 
\begin{align*}
\widehat{\widehat{R}}_S^{ran}& \leq \max_{\Psi\leftrightarrow X \leftrightarrow (Y,Z)}\left(\min_{\theta\in \mathcal{P}(\mathcal{S}|\mathcal{X})} I(\Psi;Y_{\theta}) - \max_{\theta\in \mathcal{P}(\mathcal{S}|\mathcal{X})}I(\Psi;Z_{\theta})\right)\\
& = \max_{p_\Psi\rho(X|\Psi)}\left(\min_{W\in\widehat{\widehat{\mathcal{W}}}}I(p_{\Psi};\rho W) - \min_{V\in\widehat{\widehat{\mathcal{V}}}}I(P_{\Psi};\rho V)\right).
\end{align*}
%
\subsection{Converse}
What remains is to show the converse.  

We adopt the standard converse of the \WTC . 
As usual, we assumed strong secrecy in the achievability part and show in the converse, that even with weak secrecy the upper and lower bounds match.

Let $nR_L\geq\max_{u\in\mathcal{U}}I(J;Z_{\theta^\ast}^n|U=u)$.
We consider a sequence $(\mathcal{K}_n^{\text{ran}})_{n=1}^{\infty}$ of $(n,J_n,\mathcal{U}_n,p_U)$ wiretap codes for which $e(\mathcal{K}_n^{\text{ran}})=0$ and $R_L\leq\epsilon$ for an $\epsilon>0$, as $n\to \infty$.
\begin{align*}
nR_s & = H(J)\\
	 & \overset{(a)}{\leq} \min_{\theta\in \mathcal{P}(\mathcal{S}^n|\mathcal{X}^n)} I(J;Y_{\theta}^n|U) + 1 + \hat{\epsilon}H(J),\\
\rightarrow  nR_s & \leq \frac{1}{1-\hat{\epsilon}}	\left(\min_{\theta\in \mathcal{P}(\mathcal{S}^n|\mathcal{X}^n)} I(J;Y_{\theta}^n|U) - I(J;Z_{\theta^{\ast}}^n|U) + \max_{u\in\mathcal{U}}I(J;Z_{\theta^{\ast}}^n|U=u) + 1\right)\\
	 & \overset{(b)}{\leq} \frac{1}{1-\hat{\epsilon}}	\left(\min_{\theta\in \mathcal{P}(\mathcal{S}^n|\mathcal{X}^n)} I(J;Y_{\theta}^n|U) - I(J;Z_{\theta^{\ast}}^n|U) + nR_L + 1\right) \\
	 & \overset{(c)}{\leq} \frac{1}{1-\hat{\epsilon}}	\left(\min_{\theta\in \mathcal{P}(\mathcal{S}^n|\mathcal{X}^n)} I(J;Y_{\theta}^n|U) - I(J;Z_{\theta^{\ast}}^n|U) + n\epsilon +1 \right) \\
	 & \overset{(d)}{=} \frac{1}{1-\hat{\epsilon}}	\left(\min_{\theta\in \mathcal{P}(\mathcal{S}^n|\mathcal{X}^n)} I(J,U;Y_{\theta}^n|U) - I(J,U;Z_{\theta^{\ast}}^n|U) + n\epsilon +1 \right) \\
	 & \overset{(e)}{=} \frac{1}{1-\hat{\epsilon}}	\left(\min_{\theta\in \mathcal{P}(\mathcal{S}^n|\mathcal{X}^n)} I(\tilde{\Psi}^n;Y_{\theta}^n|U) - I(\tilde{\Psi}^n;Z_{\theta^{\ast}}^n|U) + n\epsilon +1 \right) \\
	 & \overset{(f)}{\leq} \frac{1}{1-\hat{\epsilon}}	\left(\max_{u\in\mathcal{U}}\left(\min_{\theta\in \mathcal{P}(\mathcal{S}^n|\mathcal{X}^n)} I(\tilde{\Psi}^n;Y_{\theta}^n|U=u) - I(\tilde{\Psi}^n;Z_{\theta^{\ast}}^n|U=u) + n\epsilon +1 \right)\right)\\
	 & \overset{(g)}{=} \frac{1}{1-\hat{\epsilon}}	\left(\min_{\theta\in \mathcal{P}(\mathcal{S}^n|\mathcal{X}^n)} I(\tilde{\Psi}^n;Y_{\theta}^n) - I(\tilde{\Psi}^n;Z_{\theta^{\ast}}^n) + n\epsilon +1 \right)\\
	 & \overset{(h)}{\leq}\frac{1}{1-\hat{\epsilon}}	\left(\min_{\theta^n\in \mathcal{P}^n(\mathcal{S}|\mathcal{X})} I(\tilde{\Psi}^n;Y_{\theta^n}^n) - I(\tilde{\Psi}^n;Z_{\theta^{\ast}}^n) + n\epsilon +1 \right)\\
	 & = \frac{1}{1-\hat{\epsilon}}	\left(\min_{\theta^n\in \mathcal{P}^n(\mathcal{S}|\mathcal{X})} \sum_{i=1}^n I(\tilde{\Psi}^n;Y_{i,\theta_i}|Y^{i-1}_{\theta^{i-1}}) - \sum_{i=1}^nI(\tilde{\Psi}^n;Z_{i,\theta_i^{\ast}}|Z^{n}_{i+1,\theta_{i+1}^{n,\ast}}) + n\epsilon +1 \right)\\
	 & = \frac{1}{1-\hat{\epsilon}}	\left(\min_{\theta^n\in \mathcal{P}^n(\mathcal{S}|\mathcal{X})} \sum_{i=1}^n \left(I(\tilde{\Psi}^n,Z^{n}_{i+1,\theta_{i+1}^{n,\ast}};Y_{i,\theta_i}|Y^{i-1}_{\theta^{i-1}}) - I(Z^{n}_{i+1,\theta_{i+1}^{n,\ast}};Y_{i,\theta_i}|\tilde{\Psi}^n,Y^{i-1}_{\theta^{i-1}})\right) \right. \\
	 &\left. \qquad \qquad \qquad - \sum_{i=1}^nI(\tilde{\Psi}^n;Z_{i,\theta_i^{\ast}}|Z^{i+1}_{\theta^{i+1,\ast}}) + n\epsilon +1 \right)\\
	  & = \frac{1}{1-\hat{\epsilon}}	\left(\min_{\theta^n\in \mathcal{P}^n(\mathcal{S}|\mathcal{X})} \sum_{i=1}^n \left(I(\tilde{\Psi}^n,Z^{n}_{i+1,\theta_{i+1}^{n,\ast}};Y_{i,\theta_i}|Y^{i-1}_{\theta^{i-1}}) - I(Z^{n}_{i+1,\theta_{i+1}^{n,\ast}};Y_{i,\theta_i}|\tilde{\Psi}^n,Y^{i-1}_{\theta^{i-1}})\right)\right.\\
	  &\qquad \qquad \left. - \sum_{i=1}^n\left(I(\tilde{\Psi}^n,Y^{i-1}_{\theta^{i-1}};Z_{i,\theta_i^{\ast}}|Z^{i+1}_{\theta^{i+1,\ast}}) + I(Y^{i-1}_{\theta^{i-1}};Z_{i,\theta_i^{\ast}}|\tilde{\Psi}^n,Z^{i+1}_{\theta^{i+1,\ast}})\right) + n\epsilon +1 \right)\\
	  & \overset{(i)}{=}  \frac{1}{1-\hat{\epsilon}}	\left(\min_{\theta^n\in \mathcal{P}^n(\mathcal{S}|\mathcal{X})} \sum_{i=1}^n I(\tilde{\Psi}^n,Z^{n}_{i+1,\theta_{i+1}^{n,\ast}};Y_{i,\theta_i}|Y^{i-1}_{\theta^{i-1}})  - \sum_{i=1}^nI(\tilde{\Psi}^n,Y^{i-1}_{\theta^{i-1}};Z_{i,\theta_i^{\ast}}|Z^{i+1}_{\theta^{i+1,\ast}})  + n\epsilon +1 \right)\\
	  & = \frac{1}{1-\hat{\epsilon}}	\left(\min_{\theta^n\in \mathcal{P}^n(\mathcal{S}|\mathcal{X})} \sum_{i=1}^n \left(I(Z^{n}_{i+1,\theta_{i+1}^{n,\ast}};Y_{i,\theta_i}|Y^{i-1}_{\theta^{i-1}}) +I(\tilde{\Psi}^n;Y_{i,\theta_i}|Y^{i-1}_{\theta^{i-1}},Z^{n}_{i+1,\theta_{i+1}^{n,\ast}})\right.\right.\\
	  &\qquad \qquad \left.\left.  - I(Y^{i-1}_{\theta^{i-1}};Z_{i,\theta_i^{\ast}}|Z^{i+1}_{\theta^{i+1,\ast}}) -I(\tilde{\Psi}^n;Z_{i,\theta_i^{\ast}}|Z^{i+1}_{\theta^{i+1,\ast}},Y^{i-1}_{\theta^{i-1}})\right) + n\epsilon +1 \right)\\
	  & \overset{(j)}{=}  \frac{1}{1-\hat{\epsilon}}	\left(\min_{\theta^n\in \mathcal{P}^n(\mathcal{S}|\mathcal{X})} \sum_{i=1}^n \left(I(\tilde{\Psi}^n;Y_{i,\theta_i}|Y^{i-1}_{\theta^{i-1}},Z^{n}_{i+1,\theta_{i+1}^{n,\ast}}) -I(\tilde{\Psi}^n;Z_{i,\theta_i^{\ast}}|Z^{i+1}_{\theta^{i+1,\ast}},Y^{i-1}_{\theta^{i-1}})\right) + n\epsilon +1 \right)\\
	   & \overset{(k)}{=}  \frac{1}{1-\hat{\epsilon}}	\left(\min_{\theta^n\in \mathcal{P}^n(\mathcal{S}|\mathcal{X})} \sum_{i=1}^n \left(I(\tilde{\Psi}^n;Y_{i,\theta_i}|V_i) -I(\tilde{\Psi}^n;Z_{i,\theta_i^{\ast}}|V_i)\right) + n\epsilon +1 \right)\\
	   & = \frac{1}{1-\hat{\epsilon}}	\left(\min_{\theta^n\in \mathcal{P}^n(\mathcal{S}|\mathcal{X})} \sum_{i=1}^n \left(I(\tilde{\Psi}^n,V_i;Y_{i,\theta_i}|V_i) -I(\tilde{\Psi}^n,V_i;Z_{i,\theta_i^{\ast}}|V_i)\right) + n\epsilon +1 \right)\\
	   & \overset{(l)}{=} \frac{1}{1-\hat{\epsilon}}	\left(\min_{\theta^n\in \mathcal{P}^n(\mathcal{S}|\mathcal{X})} \sum_{i=1}^n \left(I(\Psi^{\prime}_i;Y_{i,\theta_i}|V_i) -I(\Psi^{\prime}_i;Z_{i,\theta_i^{\ast}}|V_i)\right) + n\epsilon +1 \right)\\
	   & \overset{(m)}{=} \frac{1}{1-\hat{\epsilon}}	\left(\min_{\theta^n\in \mathcal{P}^n(\mathcal{S}|\mathcal{X})} n (I(\Psi^{\prime}_Q;Y_{Q,\theta_Q}|V_Q) -I(\Psi^{\prime}_Q;Z_{Q,\theta_Q^{\ast}}|V_Q,Q)) + n\epsilon +1 \right)\\
	   & = \frac{1}{1-\hat{\epsilon}}	\left(\min_{\theta^n\in \mathcal{P}^n(\mathcal{S}|\mathcal{X})} n (I(\Psi^{\prime};Y_{\theta}|V) -I(\Psi^{\prime};Z_{\theta^{\ast}}|V)) + n\epsilon +1 \right)\\
	   & \leq \frac{1}{1-\hat{\epsilon}}	\left(\min_{\theta^n\in \mathcal{P}^n(\mathcal{S}|\mathcal{X})} n \max_{V=v}(I(\Psi^{\prime};Y_{\theta}|V=v) -I(\Psi^{\prime};Z_{\theta^{\ast}}|V=v)) + n\epsilon +1 \right)\\
	 & = \frac{1}{1-\hat{\epsilon}} 	\left(n\min_{\theta\in \mathcal{P}(\mathcal{S}|\mathcal{X})} I(\Psi^{\prime};Y_{\theta}) - nI(\Psi^{\prime};Z_{\theta^{\ast}}) + n\epsilon +1 \right)\\
	 & \leq \frac{1}{1-\hat{\epsilon}}	\left(\max_{\Psi \leftrightarrow X\leftrightarrow(Y_\theta ,Z_{\theta^{\ast}})}\left(n\min_{\theta\in \mathcal{P}(\mathcal{S}|\mathcal{X})} I(\Psi^{\prime};Y_{\theta}) - nI(\Psi^{\prime};Z_{\theta^{\ast}})\right) + n\epsilon +1\right)\\
\Rightarrow R_s & \leq \frac{1}{1-\hat{\epsilon}}	\left(\max_{\Psi^{\prime} \leftrightarrow X\leftrightarrow(Y,Z)}\left(\min_{\theta\in \mathcal{P}(\mathcal{S}|\mathcal{X})} I(\Psi^{\prime};Y_{\theta}) - I(\Psi^{\prime};Z_{\theta^{\ast}})\right)  +\frac{1}{n} +\epsilon \right) 
\end{align*}

Here, $(a)$ follows by Fano's inequality, where $\hat{\epsilon}$ approaches zero as $n\to \infty$, $(b)$ follows by the definition of the leakage to the eavesdropper, $(c)$ follows because the leakage to the eavesdropper vanishes with $n$. 
Now, $(d)$ follows because $J$ and $U$ are independent, $(e)$ by defining $\tilde{\Psi} = (J,U)$, $(f)$ follows naturally. 
$(g)$ follows because $\tilde{\Psi} \leftrightarrow X^n\leftrightarrow (Y_{\theta}^n,Z_{\theta^{\ast}}^n)$ forms a conditional Markov chain, given $u\in \mathcal{U}$. 
To see this we evaluate the following term.
\begin{align*}
p_{\tilde{\Psi},X^n,Y_{\theta}^n,Z_{\theta^{\ast}}^n|U}(\cdot|u) & = p_{\tilde{\Psi}|U}(\cdot|u) p_{X^n|\tilde{\Psi},U}(\cdot|\cdot,u) p_{Y_{\theta}^n,Z_{\theta^{\ast}}^n|X^n,\tilde{\Psi},U}(\cdot|\cdot,u)\\
&\overset{(i)}{=} p_{\tilde{\Psi}|U}(\cdot|u) p_{X^n|\tilde{\Psi},U}(\cdot|\cdot,u) p_{Y_{\theta}^n,Z_{\theta^{\ast}}^n|X^n,U}(\cdot|\cdot,u)
\end{align*} 
$(i)$ follows because $X^n$ and $(Y^n_\theta,Z_{\theta^{\ast}}^n)$ are connected through a memoryless channel. 
Remember that when upper bounding the capacity, only the marginals are of interest. 
Then, we can invoke the same marginals property and can describe the input output relation between $X^n$ and $(Y^n_\theta,Z_{\theta^{\ast}}^n)$ by the channels $W^n_\theta(y^n|x^n)$, $V^n_{\theta^{\ast}}(z^n|x^n)$. 
Finally, $(h)$ follows since $\min_{\theta\in \mathcal{P}(\mathcal{S}^n|\mathcal{X}^n)} I(\tilde{\Psi}^n;Y_{\theta}^n)\leq \min_{\theta^n\in \mathcal{P}^n(\mathcal{S}|\mathcal{X})} I(\tilde{\Psi}^n;Y_{\theta^n}^n)$, with $\theta^n(s^n|x^n)=\prod_{i=1}^n\theta_i(s_i|x_i)$. 
$(i)$ and $(j)$ follow because of Csiszar's Sum Identity, $(k)$ follows by identifying $V_i=(Z^{i+1}_{\theta^{i+1,\ast}},Y^{i-1}_{\theta^{i-1}})$, $(l)$ by identifying $\Psi^{\prime}_i = (\tilde{\Psi}^n,V_i)$, and $(m)$ follows by introducing a uniformly distributed time sharing variable $Q$.

\section{Proof of Theorem \ref{th:degraded}}\label{proof:th-degraded}
~
\subsection{Achievability}
Since strongly degraded implies strongly less capable, we use the same approach as in \cite{Bloch2011a}.
We have 
\begin{align*}
I(X;Y_{\theta})&\geq I(X;Z_{\theta^\ast}),\\
I(\Psi;Y_{\theta}) & = I(\Psi,X;Y_{\theta}) - I(X;Y_{\theta}|\Psi)\\
& = I(X;Y_{\theta})+I(\Psi;Y_{\theta}|X)-I(X;Y_{\theta}|\Psi)\\
& =I(X;Y_{\theta})-I(X;Y_{\theta}|\Psi),\\
I(\Psi;Z_{\theta^\ast})& =  I(X;Z_{\theta^\ast})-I(X;Z_{\theta^\ast}|\Psi),\\
I(\Psi;Y_{\theta})- I(\Psi;Z_{\theta^\ast}) & =I(X;Y_{\theta})-I(X;Z_{\theta^\ast})+I(X;Z_{\theta^\ast}|\Psi)-I(X;Y_{\theta}|\Psi),
\end{align*}
where we can upper bound
\begin{align*}
I(X;Z_{\theta^\ast}|\Psi)-I(X;Y_{\theta}|\Psi) & \leq \max_{p_{\Psi X}}(I(X;Z_{\theta^\ast}|\Psi)-I(X;Y_{\theta}|\Psi))\\
& = \max_{p_{\Psi X}}\left(\sum_{\psi\in\Psi} p_{\Psi}(\psi)I(X;Z_{\theta^\ast}|\Psi=\psi)-I(X;Y_{\theta}|\Psi=\psi)\right)\\
& = \max_{p_X}(I(X;Z_{\theta^\ast})-I(X;Y_\theta))\\
& \leq 0.
\end{align*}
Hence, in total we obtain the following 
\begin{align*}
\max_{p_{\Psi},\rho_{X|\Psi}}(I(\Psi;Y_{\theta})- I(\Psi;Z_{\theta^\ast})) & \leq \max_{p_X}(I(X;Y_{\theta})-I(X;Z_{\theta^\ast})),
\end{align*}
with equality if we choose  $\Psi=X$ as the channel input.

\section{Nomenclature}\label{ap:Nomenclature}
\begin{longtable}{|p{.3\textwidth}|p{.66\textwidth}|}
\hline
Symbols & Meaning\\
\hhline{==}
$\log{(\cdot)}$ & Logarithm to base $2$, $\log_2{(\cdot)}$, unless stated otherwise.\\
\hline
$\exp{\{\cdot \}}$, $\exp_e{\{\cdot \}}$ &  $2^{\{\cdot \}}$, $e^{\{\cdot \}}$.\\
\hline
$X$ , $x$ & The random variable $X$ and its realization $x$.\\
\hline
$\mathcal{U}$ & The set $\mathcal{U}$, sets are denoted by calligraphic letters. \\
\hline
$|\mathcal{U}|$ & The cardinality of a set $\mathcal{U}$.\\
\hline
$\mathcal{P(U)}$ & The set of all probability measures on a set $\mathcal{U}$.\\
\hline
$p^n(x^n)$ & For $p\in \mathcal{P(U)}$ we define $p^n\in \mathcal{P}(\mathcal{U}^n)$ as $p^n(x^n)=\prod_i^n p(x_i)$.\\
\hline
$pW$ , $pW(y)$ & Induced output probability function by $p_X$ and the channel $W(y|x)$, $pW(y)=\sum_{x\in\mathcal{X}}p(x)W(y|x)$.\\
\hline
$H(X)$, $H(p_X)$ & Entropy of the \RV{} $X$, written in terms of the involved \RV{} or the involved probability function $p_X$.\\
\hline
$H(W|p)$ & The conditional Entropy of $Y$ given $X$, $H(W|p)= - \sum_{x,y} p(x)W(y|x) \log W(y|x)$.\\
\hline 
$I(p;W)$, $I(X;Y)$ & Mutual information between channel input and channel output, written in terms of the involved probability functions or the involved \RV{}.\\
\hline
$N(a|s^n)$ & Number of occurrences of the symbol $a$ in the sequence $s^n$.\\
\hline
$\mathcal{P}_0^n(\mathcal{S})$ & The set of all possible types of sequences of length $n$.\\
\hline 
$\mathcal{T}_{p,\delta}^n \subset \mathcal{X}^n$ & For a $p\in \mathcal{P}(\mathcal{X})$ and $\delta >0$, this denotes the $\delta$-typical set. \\
\hline
$\mathcal{T}_{W,\delta}^n(x^n) \subset \mathcal{Y}^n$ & For a $W\in \mathcal{P}(\mathcal{Y}|\mathcal{X})$ and a $\delta > 0$ this denotes the $\delta$-conditionally typical set, given the sequence $x^n$.\\
\hline
$\mathcal{J}_n$, $\mathcal{L}_n$ & Secure and confusing message sets.\\
\hline 
$\Psi_{j,l,u}$ , $\psi_{j,l,u}$ & Codeword (\RV{} and realization) for the messages $j\in\mathcal{J}_n$ and $l\in\mathcal{L}_n$ with \CR{} realization $u\in\mathcal{U}_n$.\\ 
\hline
$\mathcal{X}, \mathcal{S},\mathcal{Y}, \mathcal{Z}$ & Channel input set, channel state set, channel output set at Bob, channel output set at Eve. All are finite sets.\\
\hline 
$\rho^n(x^n|\psi_{j,l,u}^n)$ & Mapping from codeword to channel input.\\ 
\hline
$W^n(y^n|x^n,s^n)$, $V^n(z^n|x^n,s^n)$ & \DMCs{} from Alice to Bob and Alice to Eve, here $s^n$ is the channel state, $x^n$ is the channel input, and $y^n$ and $z^n$ are the received sequences at Bob and Eve, respectively.\\
\hline 
$\mathcal{U}_n$ & Common source of randomness, shared between Alice, Bob and Eve.\\
\hline
$\mathcal{W}=\{ (W_s:\mathcal{X}\to \mathcal{P}(\mathcal{Y})) : ~s\in \mathcal{S}\}$ & The family of channels to the legitimate receiver.\\
\hline
$\mathcal{V}=\{ (V_s:\mathcal{X}\to \mathcal{P}(\mathcal{Z})) : ~s\in \mathcal{S}\}$ & The family of channels to the illegitimate receiver.\\
\hline
$(\mathcal{W},\mathcal{V})$ &  The \AVWC .\\
\hline
$\mathcal{K}_n$ & An $(n,J_n)$ deterministic wiretap-code $\mathcal{K}_n$.\\
\hline
$E: \mathcal{J}_n\to \mathcal{P}(\mathcal{X}^n)$ & A stochastic encoder for an $(n,J_n)$ deterministic wiretap-code $\mathcal{K}_n$.\\
\hline
$\mathcal{D}_{j}$, $\mathcal{D}_{j,u}$, $\mathcal{D}_{jlu}$, $j\in \mathcal{J}_n$, $l\in \mathcal{L}_n$, $u\in \mathcal{U}_n$ & Mutually disjoint decoding sets for an $(n,J_n)$ deterministic wiretap-code $\mathcal{K}_n$, an $(n,J_n,\mathcal{U}_n,p_{U})$ \CR{} assisted wiretap code  $\mathcal{K}_n^{\text{ran}}$, and an $(n,J_n,\mathcal{U}_n,p_{U})$ \CR{} assisted wiretap code  $\mathcal{K}_n^{\text{ran}}$ with the requirement that confusing message should also be decoded at Bob.\\
\hline 
$EW_{s^n}^n:\mathcal{J}_n \to \mathcal{P}(\mathcal{Y}^n)$ & Channel from the secure messages to Bob, $EW_{s^n}^n(y^n|j) = \sum_{x^n\in\mathcal{X}^n}E(x^n|j)W^n(y^n|x^n,s^n)$.\\
\hline
$e(\mathcal{K}_n)$ & The maximum error probability for the \AVWC{} for an $(n,J_n)$ deterministic wiretap-code $\mathcal{K}_n$.\\
\hline
$\mathcal{F}:\mathcal{X}^n \to \mathcal{S}^n$ & Set of all deterministic functions, mapping from the channel inputs to the channel states. Equivalently the set of all deterministic jamming strategies.\\
\hline
$\hat{e}(\mathcal{K}_n)$ & Maximum error probability  of $(n,J_n)$ deterministic wiretap-code $\mathcal{K}_n$ for an \AVWC{} if the jammer has non-causal knowledge about the channel input $x^n$.\\
\hline 
$\mathcal{K}_n^{\text{ran}}$ & An $(n,J_n,\mathcal{U}_n,p_{U})$ \CR{} assisted wiretap code  $\mathcal{K}_n^{\text{ran}}$\\
\hline
$\mathcal{E}=\{ (E_u: \mathcal{J}_n\to \mathcal{P}(\mathcal{X}^n)):~ u\in \mathcal{U}_n\}$ & Family of stochastic encoders for an $(n,J_n,\mathcal{U}_n,p_{U})$ \CR{} assisted wiretap code  $\mathcal{K}_n^{\text{ran}}$.\\
\hline
$e(\mathcal{K}_n^{\text{ran}})$ & The maximum error probability of an $(n,J_n,\mathcal{U}_n,p_{U})$ \CR{} assisted wiretap code  $\mathcal{K}_n^{\text{ran}}$ averaged over all possible randomly chosen deterministic wiretap codebooks.\\
\hline
$\hat{e}(\mathcal{K}_n^{\text{ran}})$ & Maximum error probability of an $(n,J_n,\mathcal{U}_n,p_{U})$ \CR{} assisted wiretap code  $\mathcal{K}_n^{\text{ran}}$ averaged over all possible randomly chosen deterministic wiretap codebooks if the jammer has non-causal knowledge of the channel input $x^n$.\\
\hline
$\hat{\hat{e}}(\mathcal{K}_n^{\text{ran}})$ & Upper bound of $\hat{e}(\mathcal{K}_n^{\text{ran}})$, results in the consideration of the maxima with respect to $\mathcal{J}_n$, $\mathcal{L}_n$, $\Psi^n$, $\mathcal{T}_{\rho,\delta}^n(\psi^n)$ and $\mathcal{S}^{n}$.\\
\hline
$\mathcal{F}^\prime$ & The family of all deterministic mappings $\mathcal{J}_n\times \mathcal{X}^n \to \mathcal{S}^n $\\
\hline
$\mathcal{F}^{\prime \prime}$ & The family of all deterministic mappings $\mathcal{J}_n\to \mathcal{S}^n $\\
\hline
$R_S$ & An achievable \CR{} assisted secrecy rate for the \AVWC .\\
\hline
$\widehat{\widehat{R}}_S$ & An achievable \CR{} assisted secrecy rate for the \AVWC{} with non-causal knowledge of the channel input at the jammer.\\
\hline 
$\widehat{C}_S^{\text{ran}}(\mathcal{W},\mathcal{V})$ & The \CR{} assisted secrecy capacity of the \AVWC{} $(\mathcal{W},\mathcal{V})$ with maximum error probability criterion, when the jammer has not non-causal knowledge about the channel input (or only knows the messages).\\
\hline
$\widehat{C}_{S,av}^{\text{ran}}(\mathcal{W},\mathcal{V})$ & The \CR{} assisted secrecy capacity of the \AVWC{} $(\mathcal{W},\mathcal{V})$ with average error probability criterion, when the jammer has not non-causal knowledge about the channel input (or only knows the messages).\\
\hline
$\widehat{\widehat{C}}_S^{\text{ran}}(\mathcal{W},\mathcal{V})$ & The \CR{} assisted secrecy capacity of the \AVWC{} $(\mathcal{W},\mathcal{V})$ with maximum error probability criterion if the jammer has non-causal knowledge of the channel input.\\
\hline
$\mathcal{P}(\mathcal{S}^n|\mathcal{X}^n)$ & The set of all stochastic jamming strategies.\\
\hline 
$\widehat{\mathcal{W}}$ & Convex closure of $\mathcal{W}$.\\
\hline 
$\widehat{\widehat{\mathcal{W}}}$ & Row convex closure of $\mathcal{W}$.\\
\hline 
$\min_{W\in\widehat{\widehat{\mathcal{W}}}}I(p;W)$= $\min_{\theta\in\mathcal{P}(\mathcal{S}|\mathcal{X})}I(p;W_{\theta})$ & Worst case mutual information.\\
\hline 
$\theta^{\ast , n} \in \mathcal{P}^n(\mathcal{S}|\mathcal{X})$, $V_{\theta^{\ast,n}}^n$ & Best jamming strategy, leading to a best channel to the eavesdropper.\\
\hline 
$\pi (\cdot)$ & Permutation.\\
\hline 
$C_{(j,l)}$, $j\in \mathcal{J}_n$, $l\in \mathcal{L}_n$ & Disjoint subsets of the typical sequences $\mathcal{T}_{p,\delta}^n$ of size $|C_{(j,l)}|= \frac{|\mathcal{T}_{p,\delta}^n|}{|\mathcal{J}_n||\mathcal{L}_n|}$.\\
\hline 
$\hat{\chi}=\{\Psi_{ujl}^n: j\in\mathcal{J}_n, l\in\mathcal{L}_n, u\in \mathcal{U}_n\}$ &The family of \RV , representing random codewords. Also used as argument, when we use random coding arguments.\\
\hline 
$\mathcal{K}_n^{\text{ran}}(\hat{\chi})$ & Random $(n,J_n,\mathcal{U}_n,p_{U})$ \CR{} assisted code.\\
\hline
$\mathcal{U}(j,l,\psi^n,x^{n},\hat{\chi})$ & The set of all codebooks, for which the sequence $\psi^n$ is the codeword for the message pair $(j,l)$ and $x^{n}$ is the corresponding channel input. \\
\hline
$\mathcal{U}_0(j,l,\psi^n,x^{n},s^{n}, \hat{\chi})$ & The set of all codebooks, for which the sequence $\psi^n$ is the codeword for the message pair $(j,l)$, $x^{n}$ is the corresponding channel input, and the error bound $\lambda$ is not met.\\
\hline 
$B(u,j,l,\psi^n,x^{n},\hat{\chi})$ & Binary \RV , equals $1$ if $u\in\mathcal{U}(j,l,\psi^n,x^{n},\hat{\chi})$.\\
\hline
$\tilde{B}(j,l,\psi^n,x^{n},s^{n},u,\lambda,\hat{\chi})$ & Binary \RV , equals $1$ if $u\in\mathcal{U}_0(j,l,\psi^n,x^{n},s^{n}, \hat{\chi})$.\\
\hline
$\mathbf{\varepsilon}_1(\psi^n)$ & The set of typical output sequences $z^n$ for which the conditional probability of obtaining the sequence $z^n$ given the codeword $\psi^n$ can be upper bounded in terms of the conditional entropy of $Z_{\theta^\ast}$ given $\Psi$. \\
\hline
$\tilde{\Omega}(z^{n})$ & Expectation (with respect to the codeword $\Psi^n$) of the conditional probability of obtaining the sequence $z^n$ given the codeword $\Psi^n$. We consider only those summands in the expectation, for which the sequence $z^n$ is in the set $\mathbf{\varepsilon}_1(\psi^n)$.\\
\hline
$\mathbf{\varepsilon}_2$ & The set of typical output sequences $z^n$ for which$\tilde{\Omega}(z^{n})$ can be lower bounded in terms of the entropy of $Z_{\theta^{\ast}}$.\\
\hline
$\Omega(z^{n})$ & Equals $\tilde{\Omega}(z^{n})$, if $z^n$ is element of $\mathbf{\varepsilon}_2$, otherwise it equals zero. In other words, $\Omega(z^{n})$ equals the expectation (with respect to the codeword $\Psi^n$) of the conditional probability of obtaining the sequence $z^n$ given the codeword $\Psi^n$ under the condition that the conditional probability of obtaining the sequence $z^n$ given the codeword $\psi^n$ can be upper bounded in terms of the conditional entropy of $Z_{\theta^\ast}$ given $\Psi$, and that this expectation can be lower bounded terms of the entropy of $Z_{\theta^{\ast}}$.\\
\hline
$Q_{\theta^{\ast,n}}(z^{n}|\psi^n)$ & The conditional probability of the sequence $z^n$ given $\psi^n$, under the condition that the sequence $z^n$ belongs to $\mathbf{\varepsilon}_1(\psi^n)$ and $\mathbf{\varepsilon}_2$. Equals zero otherwise. \\
\hline
$\iota_1(j,z^{n})$ & Event that the expectation of $Q_{\theta^{\ast,n}}(z^{n}|\Psi^n_{jl})$ with respect to the confusing messages $L_n$ is in an $\epsilon_n$-region of its expected value, $\Omega(z^{n})$.\\
\hline
$\iota_0$ & Event that $\iota_1(j,z^{n})$ holds for all $j\in \mathcal{J}_n$, $z^{n} \in \mathcal{Z}^{n}$, and $u\in \mathcal{U}_n$. \\
\hline
$\tilde{\iota}$ & Event that a realization $\mathcal{K}_n^{\text{ran}}$ of a $\mathcal{K}_n^{\text{ran}}(\hat{\chi})$ fulfills the reliability constraint.\\
\hline
$\hat{\iota}$ & Event that a realization $\mathcal{K}_n^{\text{ran}}$ of a $\mathcal{K}_n^{\text{ran}}(\hat{\chi})$ fulfills the reliability and secrecy constraints, simultaneously.\\
\hline
$f_{(\cdot)}(\delta)$ & Function with $\lim_{\delta\to 0}f_{(\cdot)}(\delta)=0$.\\
\hline
\caption{Notation, Symbols and Meanings}
\label{tab:notation}
\end{longtable}
\bibliography{./Literature2}
\bibliographystyle{IEEEtran}
\end{document}